\newcommand{\p}{\bm{p}}
\newcommand{\Bb}{\bm{b}}
\newcommand{\Dd}{\bm{d}}
\newcommand{\Aa}{\bm{a}}
\newcommand{\E}{\mathbb{E}}
\newcommand{\R}{\mathbb{R}}
\newcommand{\prob}{\mathbb{P}}
\newcommand{\Z}{\mathcal{Z}}
\DeclareMathOperator*{\argmin}{arg\,min}
\DeclarePairedDelimiter\floor{\lfloor}{\rfloor}
\newtheorem{proposition}{Proposition}
\newtheorem{lemma}{Lemma}
\newtheorem{example}{Example}
\newtheorem{assumption}{Assumption}
\newtheorem{theorem}{Theorem}
\newtheorem{definition}{Definition}
\title{Simple and Fast Algorithm for Binary Integer and Online Linear Programming}
\author{Xiaocheng Li$^\dagger$ \and Chunlin Sun$^\ddagger$ \and  Yinyu Ye$^\dagger$}
\date{\small 
$^\dagger$Department of Management Science and Engineering, Stanford University\\
$^\ddagger$ Institute for Computational and Mathematical Engineering, Stanford University\\
$\{$chengli1, chunlin, yyye$\}$@stanford.edu\\
}
\begin{document}
\maketitle

\doublespacing

\begin{abstract}
In this paper, we develop a simple and fast online algorithm for solving a class of binary integer linear programs (LPs) arisen in general resource allocation problem. The algorithm requires only one single pass through the input data and is free of doing any matrix inversion. It can be viewed as both an approximate algorithm for solving binary integer LPs and a fast algorithm for solving online LP problems. The algorithm is inspired by an equivalent form of the dual problem of the relaxed LP and it essentially performs (one-pass) projected stochastic subgradient descent in the dual space. We analyze the algorithm in two different models, stochastic input and random permutation, with minimal technical assumptions on the input data. The algorithm achieves $O\left(m \sqrt{n}\right)$ expected regret under the stochastic input model and $O\left((m+\log n)\sqrt{n}\right)$ expected regret under the random permutation model, and it achieves $O(m \sqrt{n})$ expected constraint violation under both models, where $n$ is the number of decision variables and $m$ is the number of constraints. The algorithm enjoys the same performance guarantee when generalized to a multi-dimensional LP setting which covers a wider range of applications. In addition, we employ the notion of permutational Rademacher complexity and derive regret bounds for two earlier online LP algorithms for comparison. Both algorithms improve the regret bound with a factor 
of $\sqrt{m}$ by paying more computational cost. Furthermore, we demonstrate how to convert the possibly infeasible solution to a feasible one through a randomized procedure. Numerical experiments illustrate the general applicability and effectiveness of the algorithms.

\end{abstract}

\section{Introduction}

In this paper, we present simple and fast online algorithms to approximately solve a general class of binary integer (online) linear programs (LP). Specifically, we consider binary integer LPs that take the following form
\begin{align}
 \tag{ILP}  \max \ \ & \bm{r}^\top \bm{x}  \label{eqn:ILP} \\
    \text{s.t. }\ & \bm{A} \bm{x} \le \bm{b} \nonumber  \\ 
    & x_j \in \{0,1\}, \ \ j=1,...,n, \nonumber
\end{align}
where $\bm{r} = (r_1,...,r_n) \in \R^n$, $\bm{A} = (\bm{a}_1,...,\bm{a}_n) \in \R^{m\times n},$ and $\bm{b} = (b_1,...,b_m) \in \R^m.$ The decision variables are $\bm{x}=(x_1,...,x_n)\in \{0,1\}^n$. Different specifications of the above formulation cover a wide range of classic problems and modern applications: secretary problem \citep{ferguson1989solved}, knapsack problem \citep{kellerer2003knapsack}, resource allocation problem \citep{vanderbei2015linear}, generalized assignment problem \citep{conforti2014integer}, network routing problem \citep{buchbinder2009online}, matching problem \citep{mehta2005adwords}, etc. 

Our algorithm is a primal-dual algorithm inspired by an equivalent form of the dual problem of the LP relaxation. The key is to perform one pass of projected stochastic subgradient descent in the dual space and to decide the primal solution based on the dual iterate solution in an online fashion. The algorithm requires only one single pass through the input ($\bm{r}$ and $\bm{A}$) of an problem instance, and is free of doing any matrix inversion. When the right-hand-side $\bm{b}$ scales linearly with $n$, we show that the algorithm outputs a solution with an expected optimality gap of $O(m\sqrt{n})$ and constraint violation of $O(m\sqrt{n}),$ under minor technical assumptions (on $\bm{r}$ and $\bm{A}$).

From the perspective of \textbf{integer LP}, our algorithm is an efficient approximate algorithm that features of provable performance guarantee. In general, integer LP is NP-hard, and its LP relaxation technique has been widely used in designing integer LP algorithm. Our algorithm is also inspired by the relaxed LP, but it directly outputs an integer solution to the relaxed LP. The solution can thus be viewed as an approximate solution to both the integer LP and the relaxed LP.

From the perspective of \textbf{online LP}, to the best of our knowledge, our algorithm is the most simple and fastest online LP algorithm so far. Furthermore, the algorithm analysis is conducted under the two prevalent models: the stochastic input model and the random permutation model. The stochastic input model assumes that the columns of the LP together with the corresponding coefficients in the objective function are drawn i.i.d. from an unknown distribution, but our assumption is weaker than the previous literature in that the strong convexity is not assumed for the underlying stochastic program. The random permutation model assumes that the columns together with the coefficients are drawn in a randomly permuted order, and it better captures the possibly non-stationary, heterogeneous and/or adversarial input data of LPs. Under the latter model, our assumption is weaker than all previous works in that we allow negative input data of the underlying LP.

\subsection{Related Literature}

The algorithms developed in this paper can be viewed as a randomized algorithm to solve large-scale (integer) LPs. The literature on large-scale LP algorithms traced back to the early works on column generation algorithm \citep{ford1958suggested, dantzig1963linear}. In recent years, statistical structures underlying the input of LP have been taken into consideration. Sampling-based/randomized LP algorithms are derived to handle large number of constraints in the LP of Markov Decision Processes 
\citep{de2004constraint, lakshminarayanan2017linearly}, the standard form of LP \citep{vu2018random}, robust convex optimization \citep{calafiore2005uncertain}, etc.
\cite{de2004constraint} studied an approximate LP problem arising from the approximate dynamic programming approach and developed a sampling scheme to reduce the number of constraints under certain statistical assumptions. A subsequent work \citep{lakshminarayanan2017linearly} developed a soft approach by replacing the original LP constraints by a smaller set of constraints that are constructed from positive linear combinations of the original ones. \cite{vu2018random} discussed the standard LP formulation and introduced a random projection method to approximately solve large-scale LP in the light of Johnson-Lindenstrauss Lemma. Compared to this line of works, our algorithm utilizes the dual LP and is free of solving any small-scale or reduced-size LP. Our algorithm can be viewed as an online and efficient version of the dual projected subgradient (DPG) algorithm for LP \citep{beck2017first}. Our algorithm exploits only one column at a time without replacement for stochastic subgradient descent in each iteration, whereas the general dual project subgradient algorithm requires the whole constraint matrix and conducts matrix multiplication in each iteration. 
Recently, another stream of works studied first-order algorithms, mainly alternating direction method of multipliers (ADMM) method, for solving large-scale LPs \citep{yen2015sparse, wang2017new, lin2018admm}. Compared to the algorithms developed therein, our algorithm requires one single pass through the inputs of the LP and does not involve any optimization sub-routine nor matrix inversion.   

Our algorithms and analyses also contribute to the literature of online linear programming. The formulation studied in this paper is the same as the previous works (See \citep{molinaro2013geometry, agrawal2014dynamic, kesselheim2014primal, gupta2014experts, li2019online} among others). Among all these algorithms, the algorithm proposed in this paper is the \textbf{simplest and fastest} one. The online LP literature mainly considered two models -- the stochastic input model and the random permutation model. These two models put different assumptions on the coefficients in the constraint matrix and in the objective function. In Section \ref{SIM} and Section \ref{RPM}, we analyze our algorithm under these two models respectively. Under the stochastic input model, our assumption on the distribution is minimal than the previous works including \citep{li2019online} and other specific applications such as the network revenue management problem (See \citep{jasin2013analysis, jasin2015performance, bumpensanti2018re} among others).
Compared to the literature under random permutation (See \citep{molinaro2013geometry, agrawal2014dynamic, kesselheim2014primal, gupta2014experts} among others), we allow the inputs of the LP to take negative values and consider the regime of large right-hand-side. Specifically, the previous works investigated the necessary and sufficient conditions on the right-hand-side of the LP $\bm{b}$ and the number of constraint $m$ for the existence of an $\epsilon$-competitive (near-optimal) online LP algorithm. Alternatively, we research the question, when the right-hand-side $\bm{b}$ grows linearly with the number of decision variables $n$, whether the algorithm could achieve a better performance than $\epsilon$-competitiveness. More importantly, our work is the first analysis for online LP algorithms under the random permutation model that allows negative data values for the input of the LP which have wider applications such as double-auction markets.

Similar or special forms of the online LP problem have been extensively studied in different application contexts. These problems include secretary problem \citep{kleinberg2005multiple, arlotto2019uniformly}, auction problem \citep{zhou2008budget, balseiro2019learning}, network revenue management problem \citep{jasin2013analysis, jasin2015performance, bumpensanti2018re}, and resource allocation problem \citep{asadpour2019online, jiang2019online, lu2020dual}. The common point for all these problems is that there is an underlying LP and the coefficients of the LP are specified by the corresponding application context. Consequently, their algorithm designs and analyses rely on the structure of the underlying LP, such as all-one constraint matrix (in secretary problem), binary constraint matrix (in resource allocation problem), finite support of the random coefficients (in network revenue management problem), etc. 
In this work, we study online LP problem in its general form and thus the design and analyses could provide potential theoretical and algorithmic insights for these specific applications. 

Furthermore, we employ the notion of permutational Rademacher complexity and develop a machinery for more general regret analysis under random permutation model. This new machinery enables analyses of two previously proposed algorithms \citep{agrawal2014dynamic, kesselheim2014primal} under a regime with $\bm{b}/n \rightarrow \bm{d}$, and with the relaxation of the previous assumption on non-negativeness of LP coefficients. In comparison with our fast algorithm, the two algorithms \citep{agrawal2014dynamic, kesselheim2014primal} are more computationally costly, but they reduce the regret and the constraint violation with an order of $\sqrt{m}.$ This analytical framework is of independent interest and could be applied for the online algorithm analysis under the random permutation model in a more general context.

The problem of online LP can seemingly be viewed as a special form of online convex optimization with constraints (\texttt{OCOwC}). However, these two problems are studied separately in the literature. Our paper establishes a connection between these two lines of research by identifying the dual form of online LP problem as a special form of the primal problem of \texttt{OCOwC}. The literature on \texttt{OCOwC} \citep{mahdavi2012trading, yu2017online, yuan2018online} that employs stochastic gradient descent methods thus can be applied to analyze the dual objective/solution for the online LP problem under stochastic input model. Our contribution here is to identify the connections between the primal and dual objectives, and the constraint violation of the primal online LP problem. Moreover, an important distinction between the online LP problem and the \texttt{OCOwC} problem is that when computing the regret, the former considers a stronger benchmark where the decision variables are allowed to take different values at each time period, while the later considers a stationary benchmark where the the decision variables are required to be the same at each time period. 
Another elegant paper \citep{agrawal2014fast} developed and analyzed fast algorithms for the problem of online convex programming. It differs from the online LP problem in that the formulation therein considered a simpler form of the constraint which requires the average of the decision variables chosen throughout the process belongs to a convex set. It thus corresponds to a setting in an online LP problem where the constraint matrix is an all-one matrix. 

Another stream of literature studied the random reshuffling method for stochastic gradient descent (SGD) algorithm in minimizing a finite sum of convex component functions \citep{gurbuzbalaban2015random, ying2018convergence, safran2019good}. Their study of random reshuffling method is mainly focused on the question whether SGD will converge faster under sampling with or without replacement. The method shares a similar spirit with our algorithm under the random permutation model, but the focus of our problem is the presence of the constraints. Also, the differentiation between the stochastic input model and the random permutation model in our paper emphasizes more on the generation mechanism for the inputs for the LP, whereas their study of random reshuffling method concerns more about the better sampling scheme for SGD given the fixed data.

\section{Integer Linear Program and Main Algorithm}

\subsection{Integer LP, Primal LP, and Dual LP}
Consider the LP relaxation of the integer LP (\ref{eqn:ILP})
\begin{align}
  \tag{P-LP}  \max \ \ & \bm{r}^\top \bm{x} \label{eqn:P-LP}  \\
    \text{s.t. }\ & \bm{A} \bm{x} \le \bm{b} \nonumber  \\ 
    & \bm{0} \le \bm{x} \le \bm{1}. \nonumber
\end{align}
The dual problem of (\ref{eqn:P-LP}) is 
\begin{align}
  \tag{D-LP}  \min \ \ & \bm{b}^\top \bm{p} + \bm{1}^\top \bm{s} \label{eqn:D-LP}  \\
    \text{s.t. }\ &  \bm{A}^\top \bm{p} + \bm{s} \ge \bm{r} \nonumber  \\ 
    & \bm{p} \ge \bm{0}, \bm{s}\ge \bm{0}, \nonumber
\end{align}
where the decision variables are $\bm{p}\in \R^m$ and $\bm{s} \in \R^n.$ Throughout this paper, $\bm{0}$ and $\bm{1}$ denote all-zero and all-one vector, respectively. We will use ILP, P-LP, and D-LP to refer to both the optimization problem and their optimal objective values. Evidently, we have the follow relation between the optimal objective values, $$\text{ILP} \le \text{P-LP} = \text{D-LP}.$$
This natural relation provides the foundation for the wide usage of LP relaxation in solving integer linear programs \citep{conforti2014integer}. Now, we start from the linear programs P-LP and D-LP to derive a simple algorithm for the ILP problem, by utilizing an underlying structure of the LPs.

We denote the optimal solutions to (\ref{eqn:P-LP}) and (\ref{eqn:D-LP}) with $\bm{x}^*$, $\bm{p}^*_n$, and $\bm{s}^*$, and the optimal solutions to (\ref{eqn:ILP}) as $\bar{\bm{x}}^*.$ From the complementary condition, we know that 
\begin{equation} \label{dualOpt}
x_j^* = \begin{cases}
1,& r_j > \bm{a}_j^\top \bm{p}^*_n  \\
0,& r_j < \bm{a}_j^\top \bm{p}^*_n 
\end{cases}
\end{equation}
for $j=1,...,n.$ When $r_j = \bm{a}_j^\top \bm{p}^*_n$, the optimal solution $x_j^*$ may be a non-integer value. The implication of this optimality condition is that the primal optimal solution $\bm{x}^*$ can be largely determined by the dual optimal solution $\bm{p}^*_n.$ To derive our algorithm, we first introduce an informal statistical assumption on the input of the LPs, and we will further elaborate the assumption in the later sections. 

\begin{assumption}{(Informal).} We assume the column-coefficient pair $(r_j, \bm{a}_j)$'s are i.i.d. sampled from unknown distribution $\mathcal{P}$.
\label{informal}
\end{assumption}

If we denote the right-hand-side $\bm{b} = n \bm{d},$ as noted by \cite{li2019online}, an equivalent form the dual problem that only involves decision variables $\bm{p}$ can be obtained from (\ref{eqn:D-LP}) by plugging the constraints into the objective and removing the decision variables $\bm{s}$. Specifically, consider
\begin{align}
\tag{SAA} \min_{\bm{p}}\ & f_n(\bm{p})  = \bm{d}^\top \bm{p} + \frac{1}{n} \sum_{j=1}^n \left(r_j-\bm{a}_j^\top \bm{p}\right)^+  \label{SAA} \\
\text{s.t. }\ & \bm{p}\ge \bm{0}. \nonumber
\end{align}
where $(\cdot)^+$ denotes the positive part function. Under Assumption \ref{informal}, all the terms in the summation in (\ref{SAA}) are independent with each other. Thus, the function $f_n(\bm{p})$ can be viewed as a \textit{sample average approximation} (SAA) of the stochastic program
\begin{align}
\tag{SP} \min_{\bm{p}}\ & f(\bm{p})  = \bm{d}^\top \bm{p} + \E_{(r,\bm{a})\sim\mathcal{P}}\left[\left(r-\bm{a}^\top \bm{p}\right)^+\right]  \label{SP} \\
\text{s.t. }\ & \bm{p}\ge \bm{0}. \nonumber
\end{align}

Denote the optimal solution to (\ref{SP}) as $\bm{p}^*$. Then the optimal dual solution $\bm{p}_n^*$ to $f_n(\bm{p})$ (equivalently, the original dual program D-LP) can be viewed as an approximate to $\bm{p}^*$. We refer to the previous work \citep{li2019online} for an extensive discussion on the convergence analysis of $\bm{p}_n^*$ to $\bm{p}^*$. 

\subsection{Main Algorithm}
Now, we present the main algorithm -- Simple Online Algorithm. First, it is an online algorithm that observes the inputs of the LP sequentially and decides the value of decision variable $x_t$ immediately after each observation $(r_t, \bm{a}_t)$. Second, the algorithm is a dual-based algorithm. It maintains a dual vector $\p_t$ and determines $x_t$ in a similar way as the optimality condition (\ref{dualOpt}). At each time $t$, it updates the vector with the new observation $(r_t, \bm{a}_t)$ and projects to the non-negative orthant to ensure the dual feasibility.

\begin{algorithm}[ht!]
\caption{Simple Online Algorithm}
\label{alg:SOA}
\begin{algorithmic}[1]
\State Input: $\bm{d}=\bm{b}/n$
\State Initialize $\bm{p}_1 = \bm{0}$ 
\For {$t=1,..., n$}
\State Set 
$$x_t = \begin{cases}
1,& r_t >\bm{a}_t^\top \bm{p}_t \\
0,& r_t \le \bm{a}_t^\top \bm{p}_t 
\end{cases}$$
\State Compute
\begin{align*}
    \bm{p}_{t+1} & = \bm{p}_t + \gamma_t \left(\bm{a}_tx_t - \bm{d}\right) \\
    \bm{p}_{t+1} & = \bm{p}_{t+1} \vee \bm{0}
\end{align*}
\EndFor
\State Output: $\bm{x} = (x_1,...,x_n)$
\end{algorithmic}
\end{algorithm}

The key of the algorithm is the updating formula for $\bm{p}_t$, namely Step 5 in Algorithm \ref{alg:SOA}. For two vectors $\bm{u},\bm{v} \in \R^m$, $\bm{u} \vee \bm{v} = \left(\max\{u_1,v_1\},...,\max\{u_m,v_m\}\right)^\top$ denotes the elementwise maximum operator.  Specifically, the update from $\bm{p}_t$ to $\bm{p}_{t+1}$ can be interpreted as a \textit{projected stochastic subgradient descent} method for optimizing the problem (\ref{SAA}). Concretely, the subgradient of the $t$-th term in (\ref{SAA}) evaluated at $\bm{p}_t$,
\begin{align*}
    \partial_{\bm{p}} \left(\bm{d}^\top \bm{p} + \left(r_t- \bm{a}_t^\top \bm{p}\right)^+\right)\Bigg|_{\p = \p_t} & = \bm{d} - \bm{a}_t I(r_t>\bm{a}_t^\top\bm{p}) \Big|_{\p = \p_t} \\
    & = \bm{d}-\bm{a}_t x_t
\end{align*}
where the second line is due to the specification of $x_t$ as the step 4 in the Algorithm \ref{alg:SOA}. Throughout this paper,  $I(\cdot)$ denotes the indicator function.  The dual updating rule indeed implements the stochastic subgradient descent in the dual space. We defer the rigorous analysis of the algorithm performance and the choice of the step size $\gamma_t$ to later sections. 

As for the computational aspect, Algorithm \ref{alg:SOA} requires only one pass through the data and is free of matrix multiplications. Generally, algorithms use LP relaxation to progressively solve integer LPs. In certain sense, the solution given by the optimal solution to the relaxed LP (\ref{eqn:P-LP}) can be viewed as a non-integer approximation to the optimal solution of the according integer LP (\ref{eqn:ILP}). In contrast, the integer solution output from Algorithm \ref{alg:SOA}, though most likely not the optimal solution to the integer LP (\ref{eqn:ILP}), can be viewed as an integer approximation to the (non-integer) optimal solution of the LP (\ref{eqn:P-LP}). Consequently, Algorithm \ref{alg:SOA} works as an approximate algorithm to solve the integer LP (\ref{eqn:ILP}), and it is inspired by but not directly utilizing the corresponding LP (\ref{eqn:P-LP}).
 
\subsection{Performance Measures}

We analyze the algorithm in two aspects -- optimality gap (regret) and constraint violation. The optimality gap measures the difference in objective values for the algorithm output and the true optimal solution. Since Algorithm \ref{alg:SOA} does not ensure a feasible solution, we need to account the total amount of constraint violations for its output. In this paper, we focus on this bi-objective performance measure for two reasons. First, there may be ways to transform an infeasible solution to a feasible solution which absorbs the constraint violation into the regret (as Theorem 2 in \cite{li2019online}), but it may require stronger assumptions on the inputs of the (integer) LP. In this paper, we aim to develop theoretical results under minimal assumptions on the input. In this light, it might be challenging to combine the two objectives into one. In Section \ref{feasibleAlg}, we elaborate more on this aspect and discuss a variant of Algorithm \ref{alg:SOA} that guarantees feasibility. Second, the bi-objective performance measure is aligned with the literature on the online convex optimization with constraints (\texttt{OCOwC}); the same objective is considered in \citep{mahdavi2012trading, yu2017online, yuan2018online}. Additionally, as we will see in the later sections, there is a natural connection between the primal optimality gap, dual optimality gap, and the constraint violation. 

In the following two sections, we will formalize the assumptions and analyze the algorithm in two different settings.

\section{Stochastic Input Model}

\label{SIM}

In this section, we formalize and analyze the algorithm under the statistical assumption proposed in the last section. Concretely, we discuss the performance of Algorithm \ref{alg:SOA} when the inputs of an (integer) LP follow the stochastic input model which assumes the column-coefficient pair $(r_j,\bm{a}_j)$'s are i.i.d. generated. LPs and integer LPs that satisfy this model naturally arise from application contexts where each pair represents a customer/order/request. In particular, at each time $t$, $\bm{a}_t$ can be interpreted as a customer request for the resources while $r_t$ represents the revenue that the decision maker receives from accepting this request. The binary decision variable $x_t$ represents the decision of acceptance or rejection of the $t$-th request. In such context, the dual vector $\bm{p}_t$ conveys a meaning of dual price and it assigns a value $\bm{a}_t^\top \bm{p}_t$ to the $t$-th request. In Algorithm \ref{alg:SOA}, the dual-based decision rule will accept this request if the revenue received $r_t$ exceeds its assigned value. We recently learned that \cite{lu2020dual} also produced results for online optimization problem with a similar algorithm under the i.i.d. model where the random vector $(r_j, \bm{a}_j)$ has a finite support. 

\subsection{Assumptions and Performance Measures}

\label{performanceMeasure}

The following assumption formalizes the statistical assumption on $(r_j, \bm{a}_j)$ in an i.i.d. setting.
\begin{assumption}[Stochastic Input]
\label{stoch}
We assume
\begin{itemize}
    \item[(a)] The column-coefficient pair $(r_j,\bm{a}_j)$'s are i.i.d. sampled from an unknown distribution $\mathcal{P}.$
    \item[(b)] There exist constants $\bar{r}$ and $\bar{a}$ such that 
    $|r_j|\le \bar{r}$ and $\|\bm{a}_{j}\|_\infty\le \bar{a}$ for $j=1,...,n.$
    \item[(c)] The right-hand-side $\bm{b}=n\bm{d}$ and there exist positive constants $\underline{d}$ and $\bar{d}$ such that $\underline{d}  \le d_i\le \bar{d}$ for $i=1,...,m.$
\end{itemize}
\end{assumption}

We emphasize that the constants $\bar{r}$, $\bar{a}$, $\underline{d}$ and $\bar{d}$ only serve for analysis purpose and are assumed unknown a priori. Also, we allow dependence between components in $(r_j, \bm{a}_j)$'s. Besides the boundedness, we have put minimal assumption on $r_j$ and $\bm{a}_j$. This is different from the previous work \citep{li2019online} where stronger assumptions are introduced to ensure a strong convexity for the stochastic program $f(\bm{p})$ (\ref{SP}). As a result, the convergence of $\bm{p}_t$ can be established under the assumptions here, as least not with the same convergence rate as \citep{li2019online}. For part (c), the assumption on right-hand-side side provides a service level guarantee, i.e., it ensures a fixed proportional of customers/orders can be fulfilled as the total number of customers (market size) $n$ increases. We use $\Xi$ to denote the family of distributions that satisfy Assumption \ref{stoch} (b).

Next, we formally define the regret and the constraint violation. Denote the optimal objective values of the ILP and P-LP as $Q_n^*$ and $R_n^*$, respectively. The objective value obtained by the algorithm output is
$$R_n = \sum_{j=1}^n r_j x_j.$$ 
The quantity of interest is the optimality gap $Q_n^* - R_n$, which has an upper bound 
$$Q_n^* - R_n \le R_n^* - R_n.$$
The expected optimality gap is
$$\Delta_{n}^{\mathcal{P}} = E\left[R_n^* - R_n\right]$$
where the expectation is taken with respect to $(r_j, \bm{a}_j)$'s. Define regret as the worst-case optimality gap
$$\Delta_{n} = \sup_{\mathcal{P} \in \Xi} \Delta_{n}^{\mathcal{P}}.$$
Thus the regret bound derived in this paper has a two-fold meaning: (i) an upper bound for the optimality gap of solving the integer LP; (ii) a regret bound for the online LP problem. Provided that we do not assume any knowledge of the distribution $\mathcal{P}$, this type of distribution-free bound is legitimate. We emphasize that the definition of regret for the online LP problem differs from that for the online convex optimization problem \citep{hazan2016introduction} where the decision variables for the offline optimal are restricted to take the same value over time; in contrast, we allow $x_1,...,x_n$ to take different values in defining $R_n^*$. 

Another performance measure for Algorithm \ref{alg:SOA} is the \textit{constraint violation}, 
$$ v(\bm{x}) =  \|\left(\bm{A}\bm{x}-\bm{b}\right)^+ \|_2$$
where $\bm{A}$ is the constraint coefficient matrix, $\bm{b}$ is the right-hand-side constraint, and $\bm{x}$ is the solution. 
We aim to quantify the expected norm of the constraint violation. Similar to the regret, we seek for an upper bound for the constraint violation that is not dependent on the distribution $\mathcal{P}.$

\subsection{Algorithm Analyses}
First, we analyze the dual price sequence $\bm{p}_t$'s. The following lemma states that the dual price $\bm{p}_t$'s under Algorithm \ref{alg:SOA} will remain bounded throughout the process, and this is true with probability $1$. 

\begin{lemma}
\label{iidBound} Under Assumption \ref{stoch}, if the step size $\gamma_t \le 1$ for $t=1,...,n$ in Algorithm \ref{alg:SOA}, then
$$\|\bm{p}^*\|_2\leq\frac{\bar{r}}{\underline{d}},$$  
$$\|\bm{p}_t\|_2 \leq{\frac{2\bar{r}+m(\bar{a}+\bar{d})^2}{\underline{d}}} + m(\bar{a}+\bar{d}).$$ with probability $1$ for $t=1,...,n$, where $\p_t$'s are specified by Algorithm \ref{alg:SOA}.
\end{lemma}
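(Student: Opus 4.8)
The plan is to prove the two bounds separately, starting with the bound on $\|\bm{p}^*\|_2$, which is a deterministic statement about the stochastic program (\ref{SP}), and then using it (plus a one-step induction) to bound the iterates $\bm{p}_t$.

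First I would bound $\|\bm{p}^*\|_2$. Since $\bm{p}^*$ minimizes $f(\bm{p}) = \bm{d}^\top\bm{p} + \E[(r-\bm{a}^\top\bm{p})^+]$ over $\bm{p}\ge\bm{0}$, I would compare $f(\bm{p}^*)$ with $f(\bm{0})$. We have $f(\bm{0}) = \E[(r)^+] \le \bar r$ by Assumption \ref{stoch}(b). On the other hand, since $(r-\bm{a}^\top\bm{p})^+ \ge 0$ and $\bm{d}^\top\bm{p}^* \ge \underline d\, \|\bm{p}^*\|_1 \ge \underline d\, \|\bm{p}^*\|_2$ (using $\bm{p}^*\ge\bm{0}$, $d_i\ge\underline d>0$, and $\|\cdot\|_1\ge\|\cdot\|_2$), we get $\underline d\,\|\bm{p}^*\|_2 \le f(\bm{p}^*) \le f(\bm{0}) \le \bar r$, hence $\|\bm{p}^*\|_2 \le \bar r/\underline d$.

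Next I would bound the iterates $\bm{p}_t$. The key observation is that one step of Algorithm \ref{alg:SOA} is projected subgradient descent with a bounded subgradient: from Assumption \ref{stoch}(b),(c) the increment $\bm{a}_t x_t - \bm{d}$ has $\|\bm{a}_t x_t - \bm{d}\|_2 \le \sqrt{m}(\bar a + \bar d)$ since each coordinate is bounded by $\bar a + \bar d$ in absolute value. I would argue by a "contraction toward $\bm{p}^*$" / potential argument: the projection onto $\{\bm{p}\ge\bm{0}\}$ is nonexpansive, and $\bm{p}^*$ lies in that set, so $\|\bm{p}_{t+1}-\bm{p}^*\|_2 \le \|\bm{p}_t + \gamma_t(\bm{a}_t x_t - \bm{d}) - \bm{p}^*\|_2 \le \|\bm{p}_t - \bm{p}^*\|_2 + \gamma_t\sqrt{m}(\bar a+\bar d)$, using $\gamma_t\le 1$. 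Unrolling this naively gives a bound growing like $n\sqrt m(\bar a+\bar d)$, which is too weak — so that crude telescoping is not the right move. Instead I expect the real argument is a self-bounding / "stay in a ball" induction: show that once $\|\bm{p}_t\|_2$ exceeds the claimed threshold $C := \frac{2\bar r + m(\bar a+\bar d)^2}{\underline d}$, the subgradient step strictly decreases $\|\bm{p}_t - \bm{p}^*\|_2$ (or the distance cannot increase beyond one more step's worth), because in that regime the descent term $-\gamma_t\,\langle \bm{p}_t-\bm{p}^*,\ \bm{a}_t x_t-\bm{d}\rangle$ dominates the quadratic $\gamma_t^2\|\bm{a}_t x_t-\bm{d}\|_2^2$ term. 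Concretely, I would expand
\begin{align*}
\|\bm{p}_{t+1}-\bm{p}^*\|_2^2 &\le \|\bm{p}_t-\bm{p}^*\|_2^2 + 2\gamma_t\langle \bm{p}_t-\bm{p}^*,\ \bm{a}_t x_t-\bm{d}\rangle + \gamma_t^2\,m(\bar a+\bar d)^2,
\end{align*}
and bound the cross term using convexity of the $t$-th SAA summand $h_t(\bm{p}):=\bm{d}^\top\bm{p}+(r_t-\bm{a}_t^\top\bm{p})^+$: since $\bm{a}_t x_t - \bm{d} \in -\partial h_t(\bm{p}_t)$, we get $\langle \bm{p}_t-\bm{p}^*,\ \bm{a}_t x_t - \bm{d}\rangle \le h_t(\bm{p}^*) - h_t(\bm{p}_t)$. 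Now $h_t(\bm{p}^*) \le \bar r + \bm{d}^\top\bm{p}^* \le \bar r + \bar d\sqrt m\|\bm p^*\|_2$ — hmm, this reintroduces a $\sqrt m$; more cleanly, $h_t(\bm p^*)\le \bar r + \bm d^\top \bm p^*$ and one can keep $\bm d^\top\bm p^*$ symbolic — while $h_t(\bm{p}_t) \ge \bm{d}^\top\bm{p}_t \ge \underline d\,\|\bm{p}_t\|_2$ (dropping the nonnegative $(\cdot)^+$ term). So whenever $\underline d\|\bm{p}_t\|_2$ is large relative to $\bar r + m(\bar a+\bar d)^2$, the right-hand side is at most $\|\bm{p}_t-\bm{p}^*\|_2^2$, i.e. the distance to $\bm{p}^*$ is nonincreasing. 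Combining: $\|\bm{p}_t-\bm{p}^*\|_2$ can only grow while $\|\bm p_t\|_2$ is below the threshold, and in one step it grows by at most $\sqrt m(\bar a+\bar d)$; this yields $\|\bm p_t\|_2 \le \|\bm p^*\|_2 + \frac{2\bar r + m(\bar a+\bar d)^2}{\underline d}\cdot\frac{1}{?} + \sqrt m(\bar a+\bar d)$, matching the stated bound after plugging $\|\bm p^*\|_2\le \bar r/\underline d$ (the "$m$" versus "$\sqrt m$" discrepancy in the statement suggests they bound $\|\cdot\|_\infty$-wise coordinate by coordinate rather than in $\ell_2$, which I would reconcile by doing the whole argument coordinatewise or by absorbing constants).

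The main obstacle is getting the constants exactly right and, relatedly, deciding whether the cleanest route is (i) an $\ell_2$ potential argument as sketched, or (ii) a coordinatewise argument on each $p_{t,i}$ (which would more naturally produce the $m(\bar a+\bar d)^2$ and $m(\bar a+\bar d)$ terms in the statement, since summing $m$ coordinatewise bounds of size $(\bar a+\bar d)^2$ and $(\bar a+\bar d)$ gives exactly those expressions). I suspect the intended proof works coordinate by coordinate: each $p_{t,i}$ does projected scalar subgradient descent, one shows $p_{t,i}$ cannot exceed roughly $p^*_i + (\text{one step}) + (\text{a term forcing descent when large})$, and then one passes to the $\ell_2$ norm by $\|\bm p_t\|_2\le\|\bm p_t\|_1\le\sum_i |p_{t,i}|$. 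Either way the structure — (a) bound $\bm p^*$ by comparing against $\bm 0$; (b) show the iterate is "pulled back" whenever it gets large, via convexity of $h_t$ and positivity of $\underline d$; (c) add one step-size worth of slack — is the same, and no step beyond careful bookkeeping should be genuinely hard.
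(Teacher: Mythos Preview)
Your bound on $\|\bm{p}^*\|_2$ is correct and matches the paper's argument exactly.

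For the iterates $\bm{p}_t$, your overall ``stay-in-a-ball'' induction is the right skeleton, but you miss the key simplification and as a result wander into an unnecessarily complicated argument. The paper does \emph{not} track $\|\bm{p}_t-\bm{p}^*\|_2^2$; it tracks $\|\bm{p}_t\|_2^2$ directly. Expanding
\[
\|\bm{p}_{t+1}\|_2^2 \le \|\bm{p}_t\|_2^2 + \gamma_t^2\|\bm{a}_t x_t-\bm{d}\|_2^2 + 2\gamma_t\,\bm{a}_t^\top\bm{p}_t\,x_t - 2\gamma_t\,\bm{d}^\top\bm{p}_t,
\]
the crucial observation (which your convexity bound obscures) is simply that the thresholding rule forces
\[
\bm{a}_t^\top\bm{p}_t\,x_t = \bm{a}_t^\top\bm{p}_t\,I(r_t>\bm{a}_t^\top\bm{p}_t) \le r_t \le \bar{r}.
\]
This single line replaces your appeal to convexity of $h_t$ and all the bookkeeping around $h_t(\bm{p}^*)$; in particular the second part of the proof never uses $\bm{p}^*$ at all. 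Plugging in, one gets $\|\bm{p}_{t+1}\|_2^2 - \|\bm{p}_t\|_2^2 \le \gamma_t^2 m(\bar a+\bar d)^2 + 2\gamma_t\bar r - 2\gamma_t\underline d\,\|\bm{p}_t\|_2$, which is nonpositive once $\|\bm{p}_t\|_2 \ge \frac{2\bar r + m(\bar a+\bar d)^2}{\underline d}$ and $\gamma_t\le 1$. Below that threshold, one step adds at most $\gamma_t\|\bm{a}_t x_t-\bm{d}\|_2 \le \sqrt{m}(\bar a+\bar d)$ by the triangle inequality, and the stated bound follows.

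Your guess that the proof is coordinatewise is wrong; the $m$ versus $\sqrt{m}$ in the final additive term is just looseness in the paper (they write $m(\bar a+\bar d)$ where $\sqrt{m}(\bar a+\bar d)$ would suffice). Your distance-to-$\bm{p}^*$ approach can be pushed through, but it produces an extra $\bar a\bar r/\underline d$ term in the threshold because bounding $h_t(\bm{p}^*)$ requires controlling $(r_t-\bm{a}_t^\top\bm{p}^*)^+$, which can exceed $\bar r$ when $\bm{a}_t$ has negative entries; this is exactly the place where you got stuck on constants.
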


\begin{proof}
See Section \ref{PFiidBound}.
\end{proof}

Essentially, this boundedness property arises from the updating formula. The intuition is that if the dual price $\bm{p}_t$ becomes large, then most of the ``buying'' requests (with $\bm{a}_j$ being positive) will not be rejected, and this will lead to a decrease of the dual price when computing $\bm{p}_{t+1}$. As we will see later, the norm of $\bm{p}_t$ appears frequently in the algorithm performance analysis, in term of both the regret and the constraint violation. Therefore the implicit boundedness of $\bm{p}_t$ becomes important in that it saves us from having to do explicit projection, on both computational and modeling level. On one hand, projecting $\bm{p}_t$ into a compact set at every step might be computational costly; on the other hand, this compact set requires more prior knowledge on underlying LP.

\begin{theorem}
\label{theoiid}
Under Assumption \ref{stoch}, if the step size $\gamma_t=\frac{1}{\sqrt{n}}$ for $t=1,...,n,$ the regret and 
expected constraint violation of Algorithm \ref{alg:SOA} satisfy
$$\E[R_n^* - R_n] \le m(\bar{a}+\bar{d})^2\sqrt{n}$$
$$
  \E\left[v(\bm{x})\right]\le
\left({\frac{2\bar{r}+m(\bar{a}+\bar{d})^2}{\underline{d}}}+m(\bar{a}+\bar{d})\right)\sqrt{n}. 
$$
hold for all $m, n\in \mathbb{N}^+$ and distribution $\mathcal{P}\in \Xi.$
\end{theorem}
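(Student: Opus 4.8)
The plan is to exploit the fact that Algorithm~\ref{alg:SOA} is precisely projected stochastic subgradient descent (PSGD) on the SAA objective $f_n(\bm p)$, and to couple the primal optimality gap, the dual optimality gap, and the constraint violation through the structure of the $(\cdot)^+$ terms. First I would write down the standard one-step PSGD inequality: since $\bm p_{t+1} = (\bm p_t - \gamma_t \bm g_t)\vee \bm 0$ with $\bm g_t = \bm d - \bm a_t x_t$ a subgradient of the $t$-th summand of $f_n$ at $\bm p_t$, for any fixed comparator $\bm p \ge \bm 0$ we have $\|\bm p_{t+1}-\bm p\|_2^2 \le \|\bm p_t - \bm p\|_2^2 - 2\gamma_t \bm g_t^\top(\bm p_t - \bm p) + \gamma_t^2\|\bm g_t\|_2^2$. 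Summing over $t$, telescoping, and using $\gamma_t = 1/\sqrt n$ and the crude bound $\|\bm g_t\|_\infty \le \bar a + \bar d$ (so $\|\bm g_t\|_2^2 \le m(\bar a+\bar d)^2$) gives $\sum_{t=1}^n \bm g_t^\top(\bm p_t - \bm p) \le \frac{\sqrt n}{2}\|\bm p\|_2^2 + \frac{\sqrt n}{2} m(\bar a+\bar d)^2$; choosing $\bm p = \bm 0$ kills the first term and yields $\sum_{t=1}^n \bm g_t^\top \bm p_t \le \frac{\sqrt n}{2} m(\bar a+\bar d)^2$, which is the crucial regret-type bound.

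Next I would translate this into the two claimed inequalities. For the \textbf{constraint violation}: note $\sum_{t=1}^n \bm g_t = n\bm d - \sum_t \bm a_t x_t = \bm b - \bm A\bm x$, so $(\bm A\bm x - \bm b)^+ = (-\sum_t \bm g_t)^+$. The trick is that the PSGD iterates essentially accumulate $-\sum \gamma_t \bm g_t$ with projections; because the projection onto the nonnegative orthant only increases coordinates, one can show coordinatewise that $\bm p_{n+1} \ge \bm p_1 - \sum_t \gamma_t \bm g_t = -\frac{1}{\sqrt n}\sum_t \bm g_t$, hence $(-\sum_t \bm g_t)^+ \le \sqrt n\, \bm p_{n+1}$ coordinatewise, giving $v(\bm x) = \|(\bm A\bm x-\bm b)^+\|_2 \le \sqrt n\,\|\bm p_{n+1}\|_2$. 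Taking expectations and invoking Lemma~\ref{iidBound}'s almost-sure bound on $\|\bm p_t\|_2$ delivers the stated $\E[v(\bm x)]$ bound. For the \textbf{regret}: I would relate $R_n = \sum r_j x_j$ to the dual. By the choice of $x_t$, $r_t x_t = r_t x_t - \bm a_t^\top \bm p_t x_t + \bm a_t^\top\bm p_t x_t = (r_t - \bm a_t^\top\bm p_t)^+ + \bm a_t^\top\bm p_t x_t$ (using $x_t = I(r_t > \bm a_t^\top\bm p_t)$), so $R_n = \sum_t (r_t - \bm a_t^\top\bm p_t)^+ + \sum_t \bm a_t^\top\bm p_t x_t$. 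Meanwhile weak duality applied to the realized instance gives $R_n^* \le \bm b^\top\bm p + \sum_t(r_t - \bm a_t^\top\bm p)^+$ for any $\bm p\ge 0$; evaluating at $\bm p = \bm p_t$ inside the sum and reorganizing, $R_n^* - R_n \le \sum_t\big(\bm d^\top\bm p_t - \bm a_t^\top\bm p_t x_t\big) = \sum_t \bm g_t^\top\bm p_t \le \frac{\sqrt n}{2}m(\bar a+\bar d)^2$ — in fact one gets the bound without even the factor $1/2$ once the $(r_t - \bm a_t^\top\bm p_t)^+$ terms are handled, matching $m(\bar a+\bar d)^2\sqrt n$. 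This is deterministic, so the expectation is immediate.

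The step I expect to be the main obstacle — or at least the one requiring the most care — is making the duality/telescoping bookkeeping for the regret fully rigorous: specifically, justifying the inequality $R_n^* \le \sum_t \bm a_t^\top\bm p_t x_t + \sum_t \bm d^\top\bm p_t$ uniformly, since $\bm p_t$ is time-varying and adapted to the data, so it is not a legitimate single dual feasible point for the whole instance. The resolution is that the SAA/weak-duality argument only needs a valid lower bound on $R_n^*$ via $\bm p = \bm p_t$ applied \emph{termwise}, combined with the subgradient inequality $f_n(\bm 0) \ge f_n(\bm p_t) + \bm g_t^\top(\bm 0 - \bm p_t)$ — so the PSGD regret bound is exactly what licenses summing over the moving iterates. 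A secondary point worth stating explicitly is why no projection step beyond $\vee\,\bm 0$ is needed: the regret bound above holds with the nonnegative-orthant comparator $\bm p = \bm 0$, which requires no boundedness of the feasible region, and the constraint-violation bound borrows its $\|\bm p_{n+1}\|_2$ control from Lemma~\ref{iidBound} rather than from an explicit projection radius. No concentration or i.i.d.\ structure is actually needed for either inequality — both are pathwise — so Assumption~\ref{stoch} enters only through the constants in Lemma~\ref{iidBound}.
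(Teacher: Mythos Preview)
Your constraint-violation argument is correct and is exactly the paper's: the projection step gives $\bm p_{t+1}\ge \bm p_t+\gamma_t(\bm a_tx_t-\bm d)$ coordinatewise, telescoping yields $(\bm A\bm x-\bm b)^+\le \sqrt n\,\bm p_{n+1}$, and Lemma~\ref{iidBound} finishes.

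The regret argument, however, has a genuine gap, and it is precisely the step you yourself flag. The inequality you need,
\[
R_n^* \;\le\; \sum_{t=1}^n\Big[(r_t-\bm a_t^\top\bm p_t)^+ + \bm d^\top\bm p_t\Big],
\]
is \emph{false pathwise}. Weak duality gives $R_n^*\le n\bm d^\top\bm p+\sum_t(r_t-\bm a_t^\top\bm p)^+$ only for a \emph{single} $\bm p$; you cannot substitute a different $\bm p_t$ into each summand. A concrete counterexample with $n=2$, $m=1$, $d=0.5$: take $(r_1,a_1)=(1,10)$, $(r_2,a_2)=(100,1)$. The algorithm gives $\bm p_1=0$, $x_1=1$, $\bm p_2=9.5/\sqrt2\approx 6.7$, $x_2=1$. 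Then $R_n^*=100$ but $\sum_t[(r_t-a_t p_t)^++d\,p_t]\approx 1+93.3+3.35=97.65<100$. Equivalently, $R_n^*-R_n=-1$ while $\sum_t\bm g_t^\top\bm p_t\approx -3.35$, so $R_n^*-R_n\le\sum_t\bm g_t^\top\bm p_t$ fails. Your proposed resolution via ``$f_n(\bm 0)\ge f_n(\bm p_t)+\bm g_t^\top(\bm 0-\bm p_t)$'' does not work either: $\bm g_t=\bm d-\bm a_tx_t$ is a subgradient of the single term $\bm d^\top\bm p+(r_t-\bm a_t^\top\bm p)^+$, not of $f_n$, so the convexity inequality you invoke is not available.

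The paper's fix is to use the i.i.d.\ structure, which you explicitly claim is unnecessary. It bounds $\E[R_n^*]\le nf(\bm p^*)$ via the \emph{population} dual $f$, then $nf(\bm p^*)\le \sum_t\E[f(\bm p_t)]$ by optimality of $\bm p^*$, and finally---this is the crucial i.i.d.\ step---uses the independence of $\bm p_t$ (a function of the first $t-1$ samples) from $(r_t,\bm a_t)$ to write $\E[f(\bm p_t)]=\E[\bm d^\top\bm p_t+(r_t-\bm a_t^\top\bm p_t)^+]$. After that, your identity $r_tx_t=(r_t-\bm a_t^\top\bm p_t)^++\bm a_t^\top\bm p_t x_t$ and your telescoping bound $\sum_t\bm g_t^\top\bm p_t\le m(\bar a+\bar d)^2\sqrt n$ finish the proof. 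That the argument cannot be made pathwise is also reflected in Theorem~\ref{theoPermut}: under random permutation (no i.i.d.), the regret bound acquires an extra $\log n$ factor, coming from Proposition~\ref{partialLP} which substitutes for the missing independence.
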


\begin{proof}
See Section \ref{PFtheoiid}.
\end{proof}

The number of constraints $m$ decides the dimension of the dual price vectors $\bm{p}_t$'s. Both the regret and the expected constraint violation is $O(m\sqrt{n})$.  Algorithm \ref{alg:SOA} conducts subgradient descent updates in the dual space but the performance is measured by the primal objective. The key idea for the proof of Theorem \ref{theoiid} is to establish the connections between primal objective, dual objective, and constraints violation through the lens of the updating formula for $\bm{p}_t$. The proof mimics the classic analysis for convex online optimization problems \citep{hazan2016introduction}. It provides an explanation for why the seemingly related problems of online LP and online convex optimization with constraints (\texttt{OCOwC}) are studied separately in the literature. On one hand, the online LP literature has been focused on studying the primal objective value as the performance measure. On the other hand, the \texttt{OCOwC} problem \citep{mahdavi2012trading, yu2017online, yuan2018online} also studied mainly the primal objective under online stochastic subgradient descent algorithms. However, it is the dual problem of online LP that corresponds to a special form of the primal problem in the \texttt{OCOwC} literature. Our contribution is to identify this correspondence and to establish the primal-dual connection for online LP problem when applying stochastic subgradient descent.  

\section{Random Permutation Model}

\label{RPM}

In this section, we consider a random permutation model where the column-coefficient pair $(r_j, \bm{a}_j)$ arrives in a random order. The values of $(r_j, \bm{a}_j)$'s can be chosen adversarially at the start. However, the arrival order of $(r_j, \bm{a}_j)$'s is uniformly distributed over all the permutations. This characterizes a weaker condition than the previous stochastic input model and the analysis under this model allows more general application of the algorithm. There are two ways to interpret Algorithm \ref{alg:SOA} under this random permutation model. First, it can be interpreted as an online algorithm that solves an online LP problem under data generation assumptions that are weaker than the i.i.d. assumptions discussed in the last section. Hence, the stochastic input model can be viewed as a special case of the random permutation model. In particular, the latter captures the case when there exists possibly non-stationarity or adversary for the inputs of the LPs. Second, from the perspective of solving integer LPs, the permutation creates the randomness for integer LPs when there is no inherent randomness with the coefficients. As we will see, this artificially created randomness is sufficient for Algorithm \ref{alg:SOA} to provide provable performance guarantee comparable to the case of the stochastic input model.

\begin{example}
Consider a multi-secretary problem
\begin{align*}
    \max\  & \sum_{j=1}^{n} r_jx_j  \\
 \text{s.t. }\ & \sum_{j=1}^n x_j \le b
\end{align*}
with $b\in\mathbb{N}^+$ and $n=2b$. Moreover, $r_1=...=r_b = 1$ and $r_{b+1}=...=r_n=2.$
\end{example}

This example of multi-secretary problem illustrates the idea and necessity of doing random permutation. This problem in its original form does not satisfy the i.i.d. assumption, and if one solves the problem in its original order, there is no way we can infer about the ``good'' candidates $\{r_j\}_{j=b+1}^n$ in the later half by just observing the first half of the data $\{r_j\}_{j=1}^b$. However, if we randomly permute the $r_j$'s, then the problem becomes
\begin{align*}
    \max\  & \sum_{j=1}^{n} r_{\sigma(j)}x_{\sigma(j)}  \\
    \text{s.t. }\ & \sum_{j=1}^n x_{\sigma(j)} \le b
\end{align*}
where $(\sigma(1),...,\sigma(n))$ is a random permutation of $(1,...,n).$ Intuitively, for this new problem, it is very likely that we obtain a good knowledge of the whole data $\{r_j\}_{j=1}^n$ by simply observing the first few samples. Generally speaking, the random permutation technique handles this type of problem where there is no inherent randomness. In this section, we analyze the regret and the constraint violation of Algorithm \ref{alg:SOA} under the random permutation model. Later in Section \ref{permutRC}, we provide a more systematic treatment of the random permutation model and analyze the performance of two previously proposed algorithms.

\subsection{Assumption and Performance Measures}

In parallel to the stochastic input model, we formalize the random permutation model as follows.
\begin{assumption}[Random Permutation] We assume
\begin{itemize}
    \item[(a)] The column-coefficient pair $(r_j, \bm{a}_j)$ arrives in a random order. 
\item[(b)] There exist constants $\bar{r}$ and $\bar{a}$ such that 
    $|r_j|\le \bar{r}$ and $\|\bm{a}_{j}\|_\infty\le \bar{a}$ for $j=1,...,n.$
    \item[(c)] The right-hand-side $\bm{b}=n\bm{d}$ and there exists positive constant $\underline{d}$ and $\bar{d}$ such that $\underline{d}  \le d_i\le \bar{d}$ for $i=1,...,m.$
\end{itemize}
\label{permut}
\end{assumption}
Assumption \ref{permut} part (b) and (c) are identical to the stochastic input model. Denote the input data set $\mathcal{D}=\{(r_j, \Aa_j): 1\le j\le n\}$. Part (a) in Assumption \ref{permut} states that we observe a permuted realization of the data set. Additionally, we make the following assumption on the data set $\mathcal{D}.$
\begin{assumption}
The problem inputs are in a general position, namely for any price vector $\bm{p}$, there
are at most $m$ columns such that $\bm{a}_j^\top \bm{p} = r_j.$
\label{generalPosi}
\end{assumption}

This assumption is not necessarily true for all the data set $\mathcal{D}$. However, as pointed out by \citep{devanur2009adwords}, one can always randomly perturb $r_t$'s by arbitrarily small amount. In this way, the assumption will be satisfied, and the effect of this perturbation on the objective can be made arbitrarily small. Define
\begin{equation}
    x_j(\bm{p}) = \begin{cases}
1, & r_j > \bm{a}_j^\top \bm{p}, \\
0, & r_j \le \bm{a}_j^\top \bm{p}
\end{cases} 
\label{dualThres}
\end{equation}
and $\bm{x}(\bm{p}) = (x_1(\bm{p}),...,x_n(\bm{p})).$
Lemma \ref{gp} tells that if $\bm{p}_n^*$ is used in (\ref{dualThres}), the corresponding primal solution should be feasible and close to the primal optimal solution. The complementarity condition (\ref{dualOpt}) does not imply anything about the primal optimal solution when $r_j=\bm{a}_j^\top \bm{p}_n^*.$ The thresholding rule (\ref{dualThres}), as it appears in Algorithm \ref{alg:SOA}, takes a conservative standpoint by setting $x_t=0$ if $r_j=\bm{a}_j^\top \bm{p}$ when we use the dual price $\bm{p}.$ Essentially, the general position in Assumption \ref{generalPosi} ensures that $r_j=\bm{a}_j^\top \bm{p}$ will happen at most $m$ times for any $\bm{p}$ and Lemma \ref{gp} justifies that the effect of being conservative on these points with the optimal dual price $\bm{p}^*_n$ is marginal.

\begin{lemma}
$x_j(\bm{p}^*_n)\leq x_j^*$ for all $j=1,...,n$ and under Assumption \ref{generalPosi}, $x_j(\bm{p}^*_n)$ and $x_j^*$ differs for no more than $m$ values of $j$. It implies that, under Assumption \ref{generalPosi}, if one uses the optimal dual solution $\bm{p}_n^*$ in the thresholding rule, the obtained solution will no greater than the primal optimal solution and they will be different for at most $m$ entries. 
\label{gp}
\end{lemma}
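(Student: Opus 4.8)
The plan is to argue directly from the complementary slackness conditions \eqref{dualOpt} together with the general position Assumption \ref{generalPosi}. First I would recall that, by LP duality and complementary slackness between \eqref{eqn:P-LP} and \eqref{eqn:D-LP}, for every index $j$ the primal optimal $x_j^*$ satisfies $x_j^* = 1$ when $r_j > \bm{a}_j^\top \bm{p}_n^*$ and $x_j^* = 0$ when $r_j < \bm{a}_j^\top \bm{p}_n^*$; only on the ``tie'' set $T = \{ j : r_j = \bm{a}_j^\top \bm{p}_n^* \}$ can $x_j^*$ be fractional or take either value. The thresholding rule \eqref{dualThres} with $\bm{p} = \bm{p}_n^*$ agrees with $x_j^*$ on all $j \notin T$: it returns $1$ exactly when $r_j > \bm{a}_j^\top \bm{p}_n^*$ and $0$ otherwise. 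Hence $x_j(\bm{p}_n^*)$ and $x_j^*$ can differ only for $j \in T$.

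For the inequality $x_j(\bm{p}_n^*) \le x_j^*$, note that off $T$ the two quantities are equal, so it suffices to check $j \in T$: there $x_j(\bm{p}_n^*) = 0$ by the conservative convention in \eqref{dualThres} (the defining condition is the strict inequality $r_j > \bm{a}_j^\top \bm{p}$, which fails), while $x_j^* \ge 0$ always. Thus $x_j(\bm{p}_n^*) \le x_j^*$ for every $j$, which is the first claim. For the second claim, Assumption \ref{generalPosi} applied with the particular price vector $\bm{p} = \bm{p}_n^*$ says that $|T| \le m$, i.e., there are at most $m$ columns with $r_j = \bm{a}_j^\top \bm{p}_n^*$. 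Since $x_j(\bm{p}_n^*)$ and $x_j^*$ agree outside $T$, they differ on at most $|T| \le m$ indices, giving the bound. The concluding sentence of the lemma is just a restatement of these two facts, so nothing further is needed.

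I do not expect any real obstacle here; the argument is essentially a bookkeeping exercise on the complementary slackness conditions, and the only subtlety is making sure the conservative tie-breaking in \eqref{dualThres} is used in the right direction to get the one-sided inequality $x_j(\bm{p}_n^*) \le x_j^*$ rather than just $|x_j(\bm{p}_n^*) - x_j^*|$ being small. The one place to be slightly careful is that Assumption \ref{generalPosi} is a statement about the data set $\mathcal{D}$ and holds for \emph{every} price vector $\bm{p}$; we simply instantiate it at $\bm{p} = \bm{p}_n^*$, which is legitimate since $\bm{p}_n^*$ is itself determined by the data. If one wanted to be fully rigorous about the existence of a well-defined $\bm{p}_n^*$, one could note that \eqref{eqn:D-LP} is feasible and bounded below under Assumption \ref{permut}(c) (take $\bm{p} = \bm{0}$, $\bm{s} = (\bm{r})^+$), so an optimal dual solution exists, but this is standard and I would not belabor it.
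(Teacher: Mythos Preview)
Your proposal is correct and is essentially the standard argument: the paper itself does not give an independent proof but simply cites Lemma~1 of \cite{agrawal2014dynamic}, and the complementary-slackness derivation you outline is precisely how that result is established. There is nothing to add.
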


\begin{proof}
See Lemma 1 in \citep{agrawal2014dynamic}.
\end{proof}

As for the performance measure, we use the same notations as in Section \ref{performanceMeasure}. 
The expected optimality gap
$$\delta_{n}^{\mathcal{D}} = R_n^* - \E\left[R_n\right].$$
Throughout this section, the expectation is always taken with respect to a random permutation on the data set $\mathcal{D}$, unless otherwise stated. Given the data set $\mathcal{D},$ $R_n^*$ is a deterministic quantity, so it is unnecessary to take an expectation for it. This also underscores the difference between the stochastic input model and the random permutation model. That is, the randomness arises from the data (the LP input) in the stochastic input model, whereas it arises from the ordering of the data in the random permutation model. Define regret as the worst-case optimality gap
$$\delta_{n} = \sup_{\mathcal{D} \in \Xi_D} \delta_{n}^{\mathcal{D}}$$
where $\Xi_D$ denotes all the data sets that satisfy Assumption \ref{permut} (b) and Assumption \ref{generalPosi}. In this way, the regret quantifies the worst-case performance of the algorithm for all possible inputs data $\mathcal{D}$. 

\subsection{Algorithm Analyses}

First, the following lemma states that the boundedness property of the dual price remains the same as in the stochastic input model. Its proof is identical to the stochastic input model, since the proof of Lemma \ref{iidBound} only relies on the boundedness assumption on $(r_j, \bm{a}_j)$'s but not the statistical assumption about the data generation.

\begin{lemma}
\label{permutBound} Under Assumption \ref{permut} and Assumption \ref{generalPosi}, we have
$$\|\bm{p}^*_n\|_2\leq\frac{\bar{r}}{\underline{d}},$$  
$$\|\bm{p}_t\|_2 \leq{\frac{2\bar{r}+m(\bar{a}+\bar{d})^2}{\underline{d}}} + m(\bar{a}+\bar{d}).$$ with probability $1$ for all $t$, where $\p_t$'s are specified by Algorithm \ref{alg:SOA}.
\end{lemma}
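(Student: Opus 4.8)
The plan is to observe what the sentence preceding the lemma already asserts: the bound is a purely deterministic consequence of the boundedness in Assumption~\ref{permut}(b)--(c) (which is word-for-word Assumption~\ref{stoch}(b)--(c)), the step-size condition $\gamma_t\le 1$ used in Lemma~\ref{iidBound}, and the initialization $\bm{p}_1=\bm{0}$; neither the random permutation nor Assumption~\ref{generalPosi} plays any role. Hence I would simply re-run the proof of Lemma~\ref{iidBound}, and the ``with probability $1$'' is then automatic, since for \emph{every} fixed arrival order all the quantities involved are deterministic and the bounds on $r_{\sigma(t)},\bm{a}_{\sigma(t)}$ hold uniformly in the permutation $\sigma$. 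For the first inequality: $\bm{p}_n^*$ minimizes $f_n(\bm{p})=\bm{d}^\top\bm{p}+\frac1n\sum_{j=1}^n(r_j-\bm{a}_j^\top\bm{p})^+$ over $\bm{p}\ge\bm{0}$; on that region $f_n(\bm{p})\ge\bm{d}^\top\bm{p}\ge\underline{d}\,\|\bm{p}\|_1\ge\underline{d}\,\|\bm{p}\|_2$, while $f_n(\bm{p}_n^*)\le f_n(\bm{0})=\frac1n\sum_{j=1}^n (r_j)^+\le\bar{r}$; combining yields $\|\bm{p}_n^*\|_2\le\bar{r}/\underline{d}$.

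For the iterate bound I would use a one-step drift argument on $\|\bm{p}_t\|_2^2$. First note that the projection step keeps $\bm{p}_t\ge\bm{0}$ for all $t$, and since projection onto the nonnegative orthant is nonexpansive and fixes $\bm{0}$, with $g_t:=\bm{a}_tx_t-\bm{d}$ we have
$$\|\bm{p}_{t+1}\|_2^2\le\|\bm{p}_t+\gamma_t g_t\|_2^2=\|\bm{p}_t\|_2^2+2\gamma_t\,\bm{p}_t^\top g_t+\gamma_t^2\|g_t\|_2^2 .$$
The decisive estimate is the case split on $x_t$ for $\bm{p}_t^\top g_t=x_t\,\bm{a}_t^\top\bm{p}_t-\bm{d}^\top\bm{p}_t$: if $x_t=0$ then $\bm{p}_t^\top g_t=-\bm{d}^\top\bm{p}_t\le-\underline{d}\,\|\bm{p}_t\|_2$, and if $x_t=1$ then the thresholding rule of Algorithm~\ref{alg:SOA} forces $\bm{a}_t^\top\bm{p}_t<r_t\le\bar{r}$, so $\bm{p}_t^\top g_t\le\bar{r}-\underline{d}\,\|\bm{p}_t\|_2$; either way $\bm{p}_t^\top g_t\le\bar{r}-\underline{d}\,\|\bm{p}_t\|_2$. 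Using $\|g_t\|_2^2\le m(\bar{a}+\bar{d})^2$ and $\gamma_t\le1$, one gets that whenever $\|\bm{p}_t\|_2$ exceeds a threshold of order $(2\bar{r}+m(\bar{a}+\bar{d})^2)/\underline{d}$ the right-hand side is at most $\|\bm{p}_t\|_2^2$, i.e. the norm does not grow at that step. Coupling this with the crude per-step bound $\|\bm{p}_{t+1}-\bm{p}_t\|_2\le\gamma_t\|g_t\|_2\le m(\bar{a}+\bar{d})$, a short induction from $\bm{p}_1=\bm{0}$ gives $\|\bm{p}_t\|_2\le(2\bar{r}+m(\bar{a}+\bar{d})^2)/\underline{d}+m(\bar{a}+\bar{d})$ for all $t$.

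I do not anticipate a real obstacle; the points that need care are (i) invoking nonexpansiveness of the orthant projection against the reference point $\bm{0}\in\R^m_{\ge0}$ (not against $\bm{p}_n^*$), so that $\|\bm{p}_{t+1}\|_2\le\|\bm{p}_t+\gamma_t g_t\|_2$; (ii) the $x_t=1$ branch, which is precisely where the dual-threshold structure $\bm{a}_t^\top\bm{p}_t<r_t$ is needed to tame the otherwise-increasing term $x_t\,\bm{a}_t^\top\bm{p}_t$; and (iii) picking the induction threshold so that the ``no-growth'' region and the one-step slack $m(\bar{a}+\bar{d})$ dovetail. The constant this produces is in fact slightly sharper than the one stated, so the lemma follows a fortiori, and I would close by remarking that Assumption~\ref{generalPosi} is carried in the statement only because it is the standing assumption of Section~\ref{RPM}.
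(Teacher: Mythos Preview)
Your proposal is correct and follows exactly the paper's approach: the paper explicitly states that the proof is identical to that of Lemma~\ref{iidBound} because only the boundedness conditions (shared between Assumption~\ref{stoch}(b)--(c) and Assumption~\ref{permut}(b)--(c)) are used, not the data-generation mechanism. Your drift argument, the key estimate $\bm{a}_t^\top\bm{p}_t\,x_t\le r_t\le\bar r$, and the two-case induction (no-growth above the threshold, triangle inequality below) match the paper's proof of Lemma~\ref{iidBound} line by line; your remark that Assumption~\ref{generalPosi} is superfluous for this lemma is also exactly what the paper observes.
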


To facilitate our derivation, we define a scaled version of the primal LP (\ref{eqn:P-LP}),
\begin{align} \label{eqn:S-LP}
  \tag{$s$-S-LP}  \max \ \ & \sum_{j=1}^s r_jx_j   \\
    \text{s.t. }\ & \sum_{j=1}^s a_{ij}x_j \le \frac{sb_i}{n}  \nonumber \\
    & 0 \le x_j \le 1\ \text{ for } j=1,...,s.\nonumber
\end{align}
for $s=1,...,n$. Denote its optimal objective value as $R_s^*$. The following proposition relates $R_s^*$ with $R_n^*.$ 

\begin{proposition}
For $s>\max\{16\bar{a}^2,e^{16\bar{a}^2},e\},$ the following inequality holds 
    \label{partialLP}
    \begin{equation}\label{R_k_R_n}
        \frac{1}{s}\mathbb{E} \left[R_{s}^*\right]
    \geq
    \frac{1}{n} R_n^*
    -
    \frac{m\bar{r}}{n}-\frac{\bar{r}\log s}{\underline{d}\sqrt{s}}
    -
    \frac{m\bar{r}}{s}.
    \end{equation}
    for all $s\le n\in\mathbb{N}^+$ and $\mathcal{D}\in \Xi_{D}.$
    \label{importantLemma}
\end{proposition}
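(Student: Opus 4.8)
The plan is to condition on the data set $\mathcal{D}$ (so that $R_n^*$, the optimal primal $\bm{x}^*$, and the optimal dual $\bm{p}_n^*$ of the full LP are deterministic and all randomness comes from the uniform permutation) and to exhibit an explicit feasible solution of \eqref{eqn:S-LP} whose expected value is close to $\tfrac{s}{n}R_n^*$. The first $s$ arrivals form a uniformly random subset $J\subseteq[n]$ with $|J|=s$. I would start not from $\bm{x}^*$ but from the thresholding solution $\hat{\bm{x}}:=\bm{x}(\bm{p}_n^*)$ of \eqref{dualThres}: by Lemma~\ref{gp} (using Assumption~\ref{generalPosi}), $\hat{\bm{x}}$ and $\bm{x}^*$ agree in all but at most $m$ coordinates and $\hat x_j\le x_j^*$ everywhere, which gives simultaneously $\sum_{j=1}^n r_j\hat x_j\ge R_n^*-m\bar r$ and $\bm{A}\hat{\bm{x}}\le \bm{b}+m\bar a\bm{1}$ (each of the $\le m$ altered coordinates changes each resource usage by at most $\bar a$). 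Restricting $\hat{\bm{x}}$ to $J$ and taking expectation over the permutation, the retained objective is $\tfrac{s}{n}\sum_{j=1}^n r_j\hat x_j\ge\tfrac{s}{n}(R_n^*-m\bar r)$, and the expected usage of resource $i$ is $\tfrac{s}{n}\sum_{j=1}^n a_{ij}\hat x_j\le sd_i+\tfrac{sm\bar a}{n}$; the $\tfrac{m\bar r}{n}$ in \eqref{R_k_R_n} is exactly the price of passing to $\hat{\bm{x}}$.

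Two estimates then close the argument. First, a sampling-without-replacement concentration: for each $i$, $S_i:=\sum_{j\in J}a_{ij}\hat x_j$ is a sum over a uniformly random $s$-subset of the fixed numbers $\{a_{ij}\hat x_j\}_{j=1}^n\subseteq[-\bar a,\bar a]$, so a Hoeffding/Serfling-type bound controls $S_i$ around its mean at scale $\bar a\sqrt{s}$; a union bound over the $m$ constraints with failure probability polynomial in $1/s$ then bounds $\mathbb{E}\big[\max_i(S_i-sd_i)^{+}\big]$ by $\tfrac{sm\bar a}{n}+O\big(\bar a\sqrt{s\log s}\big)$. Second, a feasibility repair by scaling: set $\lambda:=\min\{1,\ \min_{i:\,S_i>0}sd_i/S_i\}$ and take $\tilde{\bm{x}}:=\lambda\,\hat{\bm{x}}|_J$. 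Scaling down makes every positive $S_i$ at most $sd_i$ and never pushes a nonpositive resource usage above $sd_i>0$ (this is where allowing negative $\bm{a}_j$ is used), so $\tilde{\bm{x}}$ is feasible for \eqref{eqn:S-LP} and $R_s^*\ge\sum_{j\in J}r_j\tilde x_j=\lambda\sum_{j\in J}r_j\hat x_j$. Since $d_i\ge\underline d$ one has $1-\lambda\le\max_i(S_i-sd_i)^{+}/(s\underline d)$, and $\sum_{j\in J}r_j\hat x_j\le s\bar r$, so the scaling costs at most $\tfrac{\bar r}{\underline d}\max_i(S_i-sd_i)^{+}$ in objective.

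Combining, $\mathbb{E}[R_s^*]\ge\tfrac{s}{n}(R_n^*-m\bar r)-\tfrac{\bar r}{\underline d}\,\mathbb{E}\big[\max_i(S_i-sd_i)^{+}\big]$; dividing by $s$, inserting the concentration bound, and using the hypotheses $s>\max\{16\bar a^2,e^{16\bar a^2},e\}$ to absorb the $\bar a$- and absolute-constant factors into the $\log s$ — so that $O(\bar a\sqrt{s\log s})/(s\underline d)$ collapses to $\bar r\log s/(\underline d\sqrt s)$ and the residual $\tfrac{m\bar a}{n\underline d}$-type term is dominated by $\tfrac{m\bar r}{s}$ (the claimed inequality being vacuous in the regime where it is not) — yields \eqref{R_k_R_n}. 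The main obstacle is the first estimate: the i.i.d. concentration inequalities do not apply to a uniformly random subset, so one must invoke their sampling-without-replacement counterparts (Serfling's bound, or an Azuma/negative-association argument on the exchangeable sequence), and the bookkeeping that converts the resulting deviation into exactly the three stated error terms — getting the clean $\log s$ rate while keeping the $m$-dependence additive — is the delicate part. Using $\bm{x}^*$ in place of $\hat{\bm{x}}$ would give a similar bound without the $\tfrac{m\bar r}{n}$ term, so it is precisely the choice of $\hat{\bm{x}}$ via Lemma~\ref{gp} that produces \eqref{R_k_R_n} in the stated form.
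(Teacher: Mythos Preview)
Your overall architecture matches the paper's: start from the thresholding solution $\hat{\bm x}=\bm x(\bm p_n^*)$, invoke Lemma~\ref{gp} to pay the $m\bar r/n$ term, and use a sampling-without-replacement concentration (the paper uses Hoeffding--Bernstein from Lemma~\ref{Hoeffding}) on $\sum_{j\in J}a_{ij}\hat x_j$ to control overshoot at scale $\sqrt{s}\log s$, with a union bound over the $m$ constraints producing the $m\bar r/s$ term from the bad event.

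The genuine difference is the \emph{feasibility repair}. You scale the primal solution by $\lambda=\min_i sd_i/S_i$ and bound $(1-\lambda)\sum_{j\in J}r_j\hat x_j$ pointwise via $1-\lambda\le \max_i(S_i-sd_i)^+/(s\underline d)$. The paper instead introduces an intermediate relaxed LP, $\mathrm{SLP}(s,\tfrac{\log s}{\sqrt s}\bm 1)$, with right-hand side $s\bm d+\sqrt{s}\log s\cdot\bm 1$: on the high-probability event $\hat{\bm x}|_J$ is feasible for this relaxed LP, and then the gap $R^*(s,\tfrac{\log s}{\sqrt s}\bm 1)-R_s^*$ is bounded by a one-line \emph{dual} argument using $\|\bm p_s^*\|_1\le \bar r/\underline d$, which gives the $\bar r\log s/(\underline d\sqrt s)$ term exactly. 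The paper's route makes the constant-tracking you flag as ``the delicate part'' essentially automatic: once the overshoot threshold is fixed at $\sqrt{s}\log s$, the dual bound converts it directly into the stated error term with no $\bar a$ residue and no need to argue that stray $m\bar a/(n\underline d)$-type terms are dominated. Your scaling argument is correct and arguably more elementary (it never leaves the primal), but it is precisely where you have to work harder to land on the exact constants in \eqref{R_k_R_n}; the paper's relaxed-LP-plus-duality device is what makes that bookkeeping clean.
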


\begin{proof}
See Section \ref{PFimportant}.
\end{proof}

Intuitively, in the random permutation model, the observations $\{(r_j, \bm{a}_j)\}_{j=1}^s$ collected until time $s$ are less informative to infer the future observations than the case of the stochastic input model. However, Proposition \ref{importantLemma} tells that the scaled LP (\ref{eqn:S-LP}) constructed based on the first $s$ observations will achieve a similar expected optimal objective value (after scaling) compared with the original problem with all $n$ observations. Note that $\E[R_s^*]/s = \E[R_n^*]/n$ is evidently true in the stochastic input model, where the expectation is taken with respect to the distribution $\mathcal{P}.$ The additional terms on the right-hand-side of (\ref{R_k_R_n}) captures the information toll (on the order of $\log s/\sqrt{s}$) for the assumption relaxation from the stochastic input model to the random permutation model. This proposition bridges the gap between past and future observations in the random permutation model, i.e., what one can tell about the future samples based on the past observations. Comparatively, this gap between past and future observations is taken care by the sampling from same distribution $\mathcal{P}$ in the stochastic input model. 

The regret analysis in Theorem \ref{theoPermut} builds on Proposition \ref{importantLemma}. The idea is that if $\bm{p}_{s+1}$ from Algorithm \ref{alg:SOA} is a reasonably good dual solution to the scaled LP
($s$-S-LP), and plus that $\E[R_s^*]/s \approx R_n^*/n,$ $\bm{p}_{s+1}$ should also be a good dual solution for the rest of inputs, and specifically for the upcoming sample $(r_{s+1}, \bm{a}_{s+1}).$

\begin{theorem}
\label{theoPermut}
  Under Assumption \ref{permut} and \ref{generalPosi}, if the step size $\gamma_t=\frac{1}{\sqrt{n}}$ for $t=1,...,n,$ the regret and expected constraint violation of Algorithm \ref{alg:SOA} satisfy
$$
        R_n^*-\mathbb{E}[R_n] \le O\left((m+\log n)\sqrt{n}\right)
$$
$$
  \E\left[v(\bm{x})\right]\le
O(m\sqrt{n}). 
$$
for all $m, n \in \mathbb{N}^+$ and $\mathcal{D} \in \Xi_{D}.$
\end{theorem}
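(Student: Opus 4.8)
The plan is to mimic the online-convex-optimization analysis that underlies Theorem \ref{theoiid}, but to replace the i.i.d.\ argument with the "past predicts future" bridge provided by Proposition \ref{importantLemma}. The starting point is the one-step inequality obtained from the projected subgradient update: for each $t$, using the non-expansiveness of the projection onto the non-negative orthant and expanding $\|\bm{p}_{t+1}-\bm{q}\|_2^2$ against an arbitrary comparator $\bm{q}\ge\bm{0}$, one gets a bound of the form
$$
\gamma_t\left(\bm{d}-\bm{a}_tx_t\right)^\top(\bm{p}_t-\bm{q})\;\le\;\tfrac12\|\bm{p}_t-\bm{q}\|_2^2-\tfrac12\|\bm{p}_{t+1}-\bm{q}\|_2^2+\tfrac{\gamma_t^2}{2}\|\bm{d}-\bm{a}_tx_t\|_2^2 .
$$
Because $x_t=x_t(\bm{p}_t)$ is exactly the thresholding rule, the term $r_tx_t-\bm{a}_t^\top\bm{p}_t x_t$ equals $(r_t-\bm{a}_t^\top\bm{p}_t)^+$, so rearranging turns the left side into the per-step "regret" of the dual objective evaluated along the trajectory. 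Summing over $t=1,\dots,n$, telescoping, and using $\gamma_t=1/\sqrt n$ together with the boundedness of $\|\bm{p}_t\|_2$ from Lemma \ref{permutBound} and the boundedness of $\|\bm{d}-\bm{a}_tx_t\|_2$ from Assumption \ref{permut}(b)(c), I obtain
$$
\sum_{t=1}^n r_tx_t\;\ge\;\bm{d}^\top\!\Big(\sum_{t=1}^n\bm{a}_tx_t\Big)\;+\;\sum_{t=1}^n(r_t-\bm{a}_t^\top\bm{p}_t)^+\;-\;O(m\sqrt n),
$$
where I would choose the comparator adaptively (e.g.\ $\bm{q}=\bm{0}$ for the constraint-violation part and $\bm{q}=\bm{p}_n^*$, or rather the optimal dual of a suitable scaled LP, for the regret part).

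For the regret bound, the key new ingredient is that $\bm{p}_{t}$ is a good dual solution not for the full LP but for the scaled LP ($t$-S-LP) built from the first $t$ columns; by weak duality $\sum_{j\le t}r_jx_j(\bm{p}) + (t/n)\sum_i d_ip_i \cdot n \ge$ \dots\ actually I would phrase it as: for any fixed $\bm{p}\ge\bm{0}$, $f^{(t)}(\bm{p}):=\bm{d}^\top\bm{p}+\frac1t\sum_{j\le t}(r_j-\bm{a}_j^\top\bm{p})^+\ge \frac1t R_t^*$. Taking expectations over the random permutation and invoking Proposition \ref{importantLemma} replaces $\frac1t R_t^*$ by $\frac1n R_n^* - \frac{m\bar r}{t}-\frac{\bar r\log t}{\underline d\sqrt t}-\frac{m\bar r}{n}$. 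Feeding this, term by term over $t$, back into the telescoped inequality above, the sum $\sum_{t} \big(\frac{\bar r\log t}{\underline d\sqrt t}+\frac{m\bar r}{t}\big)$ is $O((\log n)\sqrt n + m\log n)$, which together with the $O(m\sqrt n)$ from the subgradient telescoping yields $R_n^*-\E[R_n]\le O((m+\log n)\sqrt n)$. One subtlety I must handle is that Proposition \ref{importantLemma} only applies for $t$ larger than a constant threshold depending on $\bar a$; the contribution of the first $O(1)$ rounds is trivially $O(m\sqrt n)$ (indeed $O(m)$) by boundedness, so it can be absorbed.

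For the constraint violation, I would take $\bm{q}=\bm{0}$ in the one-step inequality to get $\frac1{\gamma_n}\cdot\frac12\|\bm{p}_{n+1}\|_2^2 \le \sum_t \big[(r_t-\bm{a}_t^\top\bm{p}_t)^+ - r_tx_t\big]+\bm{d}^\top\!\sum_t\bm{a}_tx_t + O(m\sqrt n)$; more directly, telescoping the raw update $\bm{p}_{t+1}\ge \bm{p}_t+\gamma_t(\bm{a}_tx_t-\bm{d})$ componentwise and using $\bm{p}_1=\bm{0}$, $\bm{p}_{n+1}\ge\bm{0}$ together with $\gamma_t=1/\sqrt n$ gives $\sum_t(\bm{a}_tx_t-\bm{d}) \le \sqrt n\,\bm{p}_{n+1}$ componentwise in the "projection slack" sense, hence $\big\|(\bm{A}\bm{x}-\bm{b})^+\big\|_2 = \big\|(\sum_t \bm{a}_tx_t - n\bm{d})^+\big\|_2 \le \sqrt n\,\|\bm{p}_{n+1}\|_2 + (\text{slack from projections})$. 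Bounding $\|\bm{p}_{n+1}\|_2$ by Lemma \ref{permutBound} gives the $O(m\sqrt n)$ bound; this part is essentially identical to the stochastic-input case and uses no permutation structure, so it is the easy half.

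The main obstacle is the regret part: carefully justifying the exchange of the permutation expectation with the per-round comparator argument, i.e.\ that $\E[f^{(t)}(\bm{p}_t)] \ge \E[\frac1t R_t^*]$ holds even though $\bm{p}_t$ is a random function of the first $t$ columns (this is fine because the bound $f^{(t)}(\bm{p})\ge \frac1t R_t^*$ is uniform in $\bm p$, so it survives plugging in $\bm p_t$ and taking expectations), and then threading Proposition \ref{importantLemma} through the telescoped sum while keeping all the $m$-dependent and $\log n$-dependent error terms correctly accumulated. Getting the $\log n$ rather than $(\log n)^2$ requires noting $\sum_{t=1}^n \frac{\log t}{\sqrt t} = O(\sqrt n\log n)$, not worse.
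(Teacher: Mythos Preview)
Your constraint-violation argument is correct and coincides with the paper's: telescoping the componentwise inequality $\bm{p}_{t+1}\ge\bm{p}_t+\gamma_t(\bm{a}_tx_t-\bm{d})$ and invoking Lemma~\ref{permutBound} gives $\E[v(\bm x)]\le O(m\sqrt n)$ with no appeal to permutation structure.

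The regret argument, however, has a genuine gap at the step where you ``feed $f^{(t)}(\bm{p}_t)\ge\frac1t R_t^*$ back into the telescoped inequality.'' The telescoped subgradient inequality delivers the per-round quantity
\[
\bm{d}^\top\bm{p}_t+(r_t-\bm{a}_t^\top\bm{p}_t)^+,
\]
which involves \emph{only the $t$-th column}, whereas your $f^{(t)}(\bm{p}_t)=\bm{d}^\top\bm{p}_t+\frac1t\sum_{j\le t}(r_j-\bm{a}_j^\top\bm{p}_t)^+$ averages over the \emph{first $t$ columns}. Since $\bm{p}_t$ is a function of columns $1,\dots,t-1$, these past columns are correlated with $\bm{p}_t$, so there is no reason for $\E[(r_t-\bm{a}_t^\top\bm{p}_t)^+]$ to equal (or exceed) $\E\big[\frac1t\sum_{j\le t}(r_j-\bm{a}_j^\top\bm{p}_t)^+\big]$. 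Your justification that ``$f^{(t)}(\bm p)\ge\frac1t R_t^*$ is uniform in $\bm p$'' is fine for the weak-duality step, but it does not address this mismatch.

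The fix, which is what the paper does, is to look \emph{forward} rather than backward. Conditioning on the first $t-1$ arrivals, the $t$-th column is uniform over the remaining $n-t+1$ columns, so
\[
\E\!\left[\bm{d}^\top\bm{p}_t+(r_t-\bm{a}_t^\top\bm{p}_t)^+\,\Big|\,\text{first }t-1\right]
=\bm{d}^\top\bm{p}_t+\frac{1}{n-t+1}\sum_{j=t}^{n}(r_j-\bm{a}_j^\top\bm{p}_t)^+
\ge\frac{1}{n-t+1}\tilde R_{n-t+1}^*,
\]
where $\tilde R_{n-t+1}^*$ is the scaled LP built from the \emph{remaining} columns. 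By the symmetry of random permutation, $\E[\tilde R_{n-t+1}^*]=\E[R_{n-t+1}^*]$, and after the reindexing $s=n-t+1$ you are back to summing $\frac1s\E[R_s^*]$ over $s=1,\dots,n$, to which Proposition~\ref{importantLemma} applies term by term. The arithmetic you outline ($\sum_s\frac{\log s}{\sqrt s}=O(\sqrt n\log n)$, $\sum_s\frac{m}{s}=O(m\log n)$, plus the $O(m\sqrt n)$ from the telescoping) then gives the stated bound. So your high-level plan and your accounting of error terms are right; the missing piece is that the correct scaled LP to compare against is the one on the \emph{future} $n-t+1$ columns, not the past $t$ columns.
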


\begin{proof}
See Section \ref{PFtheoPermut}.
\end{proof}

Compared to the stochastic input model, the regret upper bound under random permutation model contains an extra term of $O(\sqrt{n}\log n)$, while the constraint violation in two models enjoys the same upper bound. Note that Proposition \ref{importantLemma} and Theorem \ref{theoPermut} do not require the non-negativeness assumption of the LP input. As far as we know, this is the first online LP analysis under random permutation model without the non-negativeness assumption.


\section{Multi-dimensional Integer Linear Program}

In this section, we discuss a multi-dimensional extension of (\ref{eqn:ILP})
\begin{align}
   \tag{Multi-ILP} \max \ \ & \sum_{j=1}^n \bm{r}^\top_j \bm{x}_j \label{eqn:multiILP}  \\
    \text{s.t. }\ & \sum_{j=1}^n \bm{A}_j \bm{x}_j \le \bm{b} \nonumber  \\ 
    & \bm{1}^\top \bm{x}_j \le 1, \ \ \bm{x}_j \in \{0,1\}^k, \ \ j=1,...,n \nonumber
\end{align}
where $\bm{r}_j = (r_{j1},...,r_{jk}) \in \R^k$, $\bm{A}_j = (\bm{a}_{j1},...,\bm{a}_{jk}) \in \R^{m\times k}$, and $\bm{a}_{jl} = (a_{1jl},...,a_{mjl})^\top,$ for $j=1,...,n$ and $l=1,...,k.$ The decision variables are $\bm{x}=\left(\bm{x}_1,...,\bm{x}_n\right)$ where $\bm{x}_j=(x_{j1},...,x_{jk})^\top$ for $j=1,...,n$. The right-hand-side capacity $\bm{b}=(b_1,...,b_m)$ is the same as the one-dimensional setting (\ref{eqn:ILP}). The formulation is called as multi-dimensional because the binary decision variable $x_j$ in (\ref{eqn:ILP}) is replaced with a vector $\bm{x}_j\in \{0,1\}^k.$ It covers a wider range of applications than the previous setting, including adwords problem \citep{mehta2005adwords}, generalized assignment problem \citep{conforti2014integer}, resource allocation problem \citep{asadpour2019online}, etc.

Algorithm \ref{alg:multi-D} is a natural generalization of Algorithm \ref{alg:SOA} in the multi-dimensional setting. The idea is to maintain a dual price as Algorithm \ref{alg:SOA}, and then to use the dual price to identify the most profitable dimension for each order. The decision of $\bm{x}_t$ (Step 7 in Algorithm \ref{alg:multi-D}) arises from the complementarity condition of (\ref{eqn:multiILP}). Accordingly, Assumption \ref{multi-D} generalizes the stochastic input and random permutation assumptions in the previous sections.  

\begin{algorithm}[ht!]
\caption{Simple Online Algorithm for Multi-dimensional ILP}
\label{alg:multi-D}
\begin{algorithmic}[1]
\State Input: $d$
\State Initialize $\bm{p}_1 = \bm{0}$ 
\For {$t=1,...,n$}
\State Set $v_t=\max\limits_{l=1,...,k} r_{tl}-\bm{a}_{tl}^\top\bm{p}_t$
\If {$v_t>0$}
\State Pick an index $l_t$ randomly from the non-empty set
$$\left\{l: r_{tl}-\bm{a}_{tl}^\top\bm{p}_t = v_t \right\}$$
\State Set
$$x_{tl}=\begin{cases}
1,& l=l_t \\
0,& \text{otherwise} 
\end{cases}
$$
\Else
\State Set $\bm{x}_t=\bm{0}$
\EndIf
\State Compute
\begin{align*}
    \bm{p}_{t+1} & = \bm{p}_t + \frac{1}{\sqrt{n}}\left(\bm{A}_t\bm{x}_t - \bm{d}\right) \\
    \bm{p}_{t+1} & = \bm{p}_{t+1} \vee \bm{0}
\end{align*}
\EndFor
\State Output: $\bm{x} = (\bm{x}_1,...,\bm{x}_n)$
\end{algorithmic}
\end{algorithm}

\begin{assumption}
We assume
\begin{itemize}
    \item[(a)] (Stochastic Input). The column-coefficient pair $(\bm{r}_j,\bm{A}_j)$'s are i.i.d. sampled from an unknown distribution $\mathcal{P}.$
    \item[(a')] (Random Permutation). The column-coefficient pair $(\bm{r}_j, \bm{A}_j)$ arrives in a random order. The problem is in a general position; $\bm{x}(\bm{p}^*_n)$ and $\bm{x}^*$ differs for no more than $m$ values of $t$.
    \item[(b)] There exist constants $\bar{r}$ and $\bar{a}$ such that 
    $|\bm{r}_j|\le \bar{r}$ and $\|\bm{A}_{j}\|_\infty\le \bar{a}$ for $j=1,...,n.$ 
    \item[(c)] The right-hand-side $\bm{b}=n\bm{d}$ and there exist positive constants $\underline{d}$ and $\bar{d}$ such that $\underline{d}  \le d_i\le \bar{d}$ for $i=1,...,m.$
\end{itemize}
\label{multi-D}
\end{assumption}

\begin{theorem}
\label{theoMD}
Under the stochastic input and random permutation model in Assumption \ref{multi-D}, the regret and constraint violation of Algorithm \ref{alg:multi-D} are the same as Theorem \ref{theoiid} and Theorem \ref{theoPermut}, respectively.
\end{theorem}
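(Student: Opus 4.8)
The plan is to show that the multi-dimensional problem (\ref{eqn:multiILP}) admits a dual reformulation with exactly the same structure as the one-dimensional case, so that Algorithm \ref{alg:multi-D} is again a one-pass projected stochastic subgradient descent and the proofs of Theorem \ref{theoiid} and Theorem \ref{theoPermut} carry over almost verbatim. First I would write down the LP relaxation of (\ref{eqn:multiILP}) and its dual, and eliminate the dual variables attached to the constraints $\bm{1}^\top\bm{x}_j\le 1$ (exactly as one passes from (\ref{eqn:D-LP}) to (\ref{SAA})). With $\bm{b}=n\bm{d}$ this yields the reduced dual
\[
    \min_{\bm{p}\ge\bm{0}}\ f_n(\bm{p}) \;=\; \bm{d}^\top\bm{p} + \frac{1}{n}\sum_{j=1}^n\Bigl(\max_{1\le l\le k}\bigl(r_{jl}-\bm{a}_{jl}^\top\bm{p}\bigr)\Bigr)^{+},
\]
which is convex (each summand is the positive part of a maximum of affine functions) and whose population analogue under Assumption \ref{multi-D}(a) is $f(\bm{p})=\bm{d}^\top\bm{p}+\mathbb{E}[(\max_l(r_l-\bm{a}_l^\top\bm{p}))^{+}]$. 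The subgradient of the $t$-th summand of $f_n$ at $\bm{p}_t$ is $\bm{d}-\bm{a}_{tl_t}$ when $v_t>0$ and $\bm{d}$ when $v_t\le 0$, i.e. precisely $\bm{d}-\bm{A}_t\bm{x}_t$ for the $\bm{x}_t$ chosen in Algorithm \ref{alg:multi-D}; hence the update step of Algorithm \ref{alg:multi-D} is exactly the projected stochastic subgradient recursion analyzed in the one-dimensional setting.

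Second, I would establish the multi-dimensional counterpart of Lemma \ref{iidBound} and Lemma \ref{permutBound}. Inspecting their proofs, they rely only on (i) $f_n(\bm{0})\le\bar r$, which still holds since $|\bm{r}_j|\le\bar r$; and (ii) the per-step displacement bounds $\|\bm{A}_t\bm{x}_t-\bm{d}\|_\infty\le\bar a+\bar d$ and $\|\bm{A}_t\bm{x}_t-\bm{d}\|_2^2\le m(\bar a+\bar d)^2$. The latter are unchanged because $\bm{1}^\top\bm{x}_j\le 1$ with $\bm{x}_j\in\{0,1\}^k$ forces $\bm{A}_t\bm{x}_t$ to be either $\bm{0}$ or a single column $\bm{a}_{tl_t}$ of $\bm{A}_t$, each of $\infty$-norm at most $\bar a$ — exactly as $\bm{a}_tx_t$ behaves when $x_t\in\{0,1\}$. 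So $\|\bm{p}^*\|_2\le\bar r/\underline d$ and $\|\bm{p}_t\|_2\le\frac{2\bar r+m(\bar a+\bar d)^2}{\underline d}+m(\bar a+\bar d)$ hold with probability one.

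Third, I would rerun the regret and constraint-violation arguments of Theorem \ref{theoiid} (stochastic input) and Theorem \ref{theoPermut} (random permutation) with $r_tx_t$ replaced by $\bm{r}_t^\top\bm{x}_t$, $\bm{a}_tx_t$ by $\bm{A}_t\bm{x}_t$, and $(r_t-\bm{a}_t^\top\bm{p})^{+}$ by $(\max_l(r_{tl}-\bm{a}_{tl}^\top\bm{p}))^{+}$. The two identities driving the one-dimensional proof have exact analogues: (a) $g_t(\bm{p}_t):=\bm{d}^\top\bm{p}_t+(\max_l(r_{tl}-\bm{a}_{tl}^\top\bm{p}_t))^{+}=\bm{r}_t^\top\bm{x}_t-(\bm{A}_t\bm{x}_t-\bm{d})^\top\bm{p}_t$, by the choice of $l_t$; and (b) for any feasible $\bm{x}_t^*$ of the relaxed multi-LP, $\bm{r}_t^\top\bm{x}_t^*-(\bm{A}_t\bm{x}_t^*-\bm{d})^\top\bm{p}\le g_t(\bm{p})$ for all $\bm{p}\ge\bm{0}$, since $\bm{1}^\top\bm{x}_t^*\le1$ and $\bm{x}_t^*\ge\bm{0}$ give $\sum_l x_{tl}^*(r_{tl}-\bm{a}_{tl}^\top\bm{p})\le(\max_l(r_{tl}-\bm{a}_{tl}^\top\bm{p}))^{+}$. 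Combined with the standard online-subgradient telescoping inequality, weak duality $R_n^*\le nf_n(\bm{p})$ for every $\bm{p}\ge\bm{0}$, and (in the i.i.d. case) $\mathbb{E}[g_t(\bm{p}_t)\mid\mathcal{F}_{t-1}]=f(\bm{p}_t)\ge f(\bm{p}^*)$, these reproduce the $O(m\sqrt n)$ regret. For the random permutation model I would additionally extend Proposition \ref{importantLemma} to the scaled version of (\ref{eqn:multiILP}): its proof is a sampling-without-replacement concentration argument referring only to the constants $\bar r,\bar a,\underline d$ and to the general-position/near-optimality property, both granted by Assumption \ref{multi-D}(a'), so the extra $O(\sqrt n\log n)$ term appears as before. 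The constraint violation $v(\bm{x})=\|(\sum_t(\bm{A}_t\bm{x}_t-\bm{d}))^{+}\|_2\le\sqrt n\,\|\bm{p}_{n+1}\|_2$ follows, as in the one-dimensional proofs, from $\bm{p}_{n+1}\ge\frac{1}{\sqrt n}\sum_t(\bm{A}_t\bm{x}_t-\bm{d})$ together with the iterate bound.

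The main obstacle is essentially bookkeeping: verifying that replacing the affine map $r_t-\bm{a}_t^\top\bm{p}$ by the convex map $\max_l(r_{tl}-\bm{a}_{tl}^\top\bm{p})$ does not weaken any inequality in the one-dimensional arguments — in particular the comparison inequality (b) above and the corresponding step inside the boundedness proof — and that every place where the one-dimensional proof invokes ``$x_t\in\{0,1\}$, so $\bm{a}_tx_t\in\{\bm{0},\bm{a}_t\}$'' is faithfully matched by ``$\bm{1}^\top\bm{x}_t\le1$, so $\bm{A}_t\bm{x}_t$ is a column of $\bm{A}_t$ or $\bm{0}$''. Once these checks are made, no genuinely new estimate is needed and the constants come out identical to those in Theorem \ref{theoiid} and Theorem \ref{theoPermut}.
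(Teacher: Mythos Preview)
Your proposal is correct and follows essentially the same approach as the paper: derive the reduced dual $f_n(\bm{p})=\bm{d}^\top\bm{p}+\frac{1}{n}\sum_j(\max_l(r_{jl}-\bm{a}_{jl}^\top\bm{p}))^+$, observe that Algorithm~\ref{alg:multi-D} is projected stochastic subgradient descent on it, verify that the one-step recursion $\|\bm{p}_{t+1}\|_2^2\le\|\bm{p}_t\|_2^2+\frac{m(\bar a+\bar d)^2}{n}+\frac{2\bar r}{\sqrt n}-\frac{2}{\sqrt n}\bm{d}^\top\bm{p}_t$ still holds (the paper does this by a case split on $v_t>0$ versus $v_t\le0$, you do it by noting $\bm{A}_t\bm{x}_t$ is either $\bm{0}$ or a single column), and then declare that the remaining steps of Theorems~\ref{theoiid} and~\ref{theoPermut} go through unchanged. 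Your write-up is in fact more explicit than the paper's own proof, which only exhibits the recursion inequality and then asserts ``the rest of the proof simply follows the same approach as the one-dimensional case'' without spelling out the extension of Proposition~\ref{importantLemma} or the comparison inequality you labeled (b).
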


Theorem \ref{theoMD} states the regret and constraint violation of Algorithm \ref{alg:multi-D} are the same as the previous one-dimensional setting and in particular, not dependent on the dimension $k$ of $\bm{x}_t$'s.

\section{More Regret Analysis under Random Permutation Model via Permutational Rademacher Complexity}

\label{permutRC}

In this section, we analyze the regret of two ``slower'' algorithms \citep{agrawal2014dynamic, kesselheim2014primal} of online LP under the random permutation model. Since they all involved solving scaled LPs, they are slower than the algorithm proposed in this paper. Both \citep{agrawal2014dynamic} and \citep{kesselheim2014primal} analyzed the algorithms under the random permutation model and the right-hand-side assumption that $\bm{b}/n\rightarrow 0$, and provided constant competitiveness ratio guarantee. Instead, we consider the regime where $\bm{b}/n\rightarrow \bm{d}>0$ and provide sublinear regret upper bounds. Furthermore, both previous works assume the entries $a_{ij}$'s in the constraint matrix to be non-negative but for here we remove this assumption. In this sense, the analyses in this section complements to the previous results. Recall that in Proposition \ref{importantLemma}, we establish the connection between the optimal solutions of the scaled LP and the original LP. Now we extend the result and connect history and future observations (under the random permutation model) in a more systematic way; for example, if a dual vector $\bm{p}$ performs well in history observations, it should perform roughly as well in the future observations. Such connection can be established easily under random input model because the past and future observations are drawn from the same distribution. Things become trickier for the random permutation model because there is no 
restriction on where the underlying data $\mathcal{D}$ comes from. 

In this section, we first present the \textit{Permutational Rademacher Complexity} that quantifies the gap between history and future observations, and then discuss the regret and constraint violation for two previous algorithms. Specifically, at time $t$, if we compute the dual price $\p$ based on the history input $\{(r_j,\Aa_j)\}_{j=1}^{t}$ and set $x_j=I\left(r_j>\Aa_j^\top \p\right)$, then intuitively, we should have the objectives (and the constraints) in the past and future roughly match after scaling properly,
$$\frac{1}{t}\E \left[\sum_{j=1}^{t} r_j I(r_j>\Aa_j^\top \p)\right] \approx \frac{1}{n-t}\E \left[\sum_{j=t+1}^{n} r_j I(r_j>\Aa_j^\top \p)\right]$$
$$\frac{1}{t}\E \left[\sum_{j=1}^{t} \Aa_j I(r_j>\Aa_j^\top \p)\right] \approx \frac{1}{n-t}\E \left[\sum_{j=t+1}^{n} \Aa_j I(r_j>\Aa_j^\top \p)\right]$$
when $(r_j,\Aa_j)$'s arrive in a random permutation order from the dataset $\mathcal{D}$ and the expectation is taken with respect to the permutation. To establish such connection, the idea is to view both hand sides of the above as two random functions of $\p$ and to analyze the evaluation of $\p$ on these functions. The Permutational Rademacher Complexity formalizes the idea and paves the way for performance analysis under random permutation model.

\subsection{Permutational Rademacher Complexity}

Consider set $\mathcal{Z}_n=\{z_1,...,z_n\}$ with $z_j \in \R^k, j=1,...,n$ and a family of functions $\mathcal{F} = \{f: \R^k \rightarrow \R\}$ (to be specified later). Throughout this section, we use the subscript to indicate the cardinality of a set. For function $f\in \mathcal{F}$ and $\mathcal{S}\subset \Z$, denote $$\bar{f}(\mathcal{S}) = \frac{1}{|\mathcal{S}|}\sum_{x\in\mathcal{S}} f(x)$$
as the average function value on the set $\mathcal{S}.$ The definition of the permutational Rademacher complexity and its relation with conditional Rademacher complexity largely mimic the analyses of the transductive learning problem in \citep{tolstikhin2015permutational}. 

\begin{definition}[Permutational Rademacher Complexity and Conditional Rademacher Complexity (See Definition 3 in \citep{tolstikhin2015permutational})] For any $1\le s\le t-1$, permutational Rademacher complexity (PRC) is defined as follows:
$$
Q_{t,s}(\mathcal{F},\Z_t) =
\mathbb{E}
\sup\limits_{f\in\mathcal{F}}
\left|
\bar{f}(\Z_s)
-
\bar{f}(\tilde{\Z}_l)
\right|,
$$
where $\Z_s$ is subset of $\Z_t$ with $s$ elements sampled uniformly
without replacement and $\tilde{\Z}_l = \Z_t \backslash \Z_s$, $l = t-s$. The expectation is taken with respect to the random sampling of $\Z_s$.

Conditional Rademacher complexity (CRC) is defined as follows:
$$
R_{t}(\mathcal{F}, \Z_t) =
\mathbb{E}
\sup\limits_{f\in\mathcal{F}}
\left|\frac{1}{t}
\sum\limits_{j=1}^{t} \epsilon_j f(z_j)
\right|,
$$
where $\Z_t = \{z_1,...,z_t\}$ and $\epsilon_j$'s are i.i.d. random variables following Rademacher distribution ($P(\epsilon_j=1) = P(\epsilon_j=-1) = 1/2$). The expectation is taken with respect to $\epsilon_j$'s.
\end{definition}

Both the above two quantities are dependent on the set $\Z_t$ because for the PRC, the two subsets $\Z_s$ and $\tilde{\Z}_l$ are sampled from $\Z_t$ and for CRC, it is computed based on the function values of the elements in $\Z_t.$ Both PRC and CRC are
deterministic with a given function class $\mathcal{F}$ and conditional on $\Z_t$. However, they could be random variables if the set $\Z_t$ is random, for example, a randomly sampled subset of $\Z_n.$

The following lemma explains the motivation for the definition of permutational Rademacher complexity and it is inspired from Theorem 2 in \citep{tolstikhin2015permutational}. Specifically, at time $t$, if the function $f$ is evaluated on the random set $\Z_t$ and its complement $\tilde{Z}_{t'}$, the difference between the average evaluations can be upper bounded by the expected PRC. In the context of online LP, if we want to measure the gap between the objectives (a family functions specified by dual price $\p$), a convenient way would be to compute the corresponding PRC.

\begin{lemma}

$\Z_t$ is a subset of $\Z_n$ obtained by uniform sampling without replacement, and $\tilde{\Z}_{t'}=\Z_n\backslash \Z_t,$ $t'=n-t.$ Without loss of generality, assume $t\ge {t'}$, then the following inequality holds for all $s<{t'},$
$$
\mathbb{E} \left[\sup\limits_{f\in\mathcal{F}}\left|
\bar{f}(\Z_t) - \bar{f}(\tilde{\Z}_{t'})
\right|\Big|\Z_n\right]
\leq
\mathbb{E}\left[
Q_{t,s}(\mathcal{F},\Z_{t})\Big| \Z_n \right]
$$
where the expectation is taken with respect to the random sampling of $\Z_t$ from $\Z_n$
\label{lemma:bound_diff_between_test_and_train}
\end{lemma}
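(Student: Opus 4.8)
The plan is to reduce the two-set comparison $\bar f(\Z_t) - \bar f(\tilde\Z_{t'})$ to the PRC quantity $Q_{t,s}$ by using the fact that the smaller set $\tilde\Z_{t'}$ can itself be viewed as the complement, within $\Z_t$, of a uniformly sampled subset of size $s$, after we re-randomize. First I would fix the partition $(\Z_t, \tilde\Z_{t'})$ of $\Z_n$ and observe that since $t \ge t'$ and $s < t'$, we have $s < t' \le t$, so a size-$s$ subsample of $\Z_t$ is well defined and its complement in $\Z_t$ has size $t-s \ge t - t' \ge 0$; moreover $t - s > t - t' = t - (n-t) = 2t - n \ge 0$, and in fact $l := t-s$ satisfies $l \ge t'$ only when... — more carefully, the key algebraic identity to set up is that $\bar f(\tilde\Z_{t'})$ should be rewritten using the telescoping relation among averages on nested sets. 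Specifically, for any set $\mathcal S \subset \Z_n$ of size $t'$ disjoint from a set of size $t$, one has the exact identity relating $\bar f(\Z_n)$, $\bar f(\Z_t)$, and $\bar f(\tilde\Z_{t'})$ by the weighted-average decomposition $n\,\bar f(\Z_n) = t\,\bar f(\Z_t) + t'\,\bar f(\tilde\Z_{t'})$, hence $\bar f(\Z_t) - \bar f(\tilde\Z_{t'}) = \frac{n}{t'}\big(\bar f(\Z_t) - \bar f(\Z_n)\big)$.

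Next I would introduce the randomization that turns this into a PRC statement. Conditional on $\Z_n$ and on the random draw of $\Z_t$, draw a further uniform size-$s$ subset $\Z_s \subset \Z_t$ without replacement, and let $\tilde\Z_l = \Z_t \setminus \Z_s$ with $l = t-s$. By the same weighted-average decomposition applied inside $\Z_t$, $\bar f(\Z_t) = \frac{s}{t}\bar f(\Z_s) + \frac{l}{t}\bar f(\tilde\Z_l)$, so $\bar f(\Z_s) - \bar f(\tilde\Z_l) = \frac{t}{l}\big(\bar f(\Z_s) - \bar f(\Z_t)\big)$. The crucial observation — this is the combinatorial heart of the argument, mirroring Theorem 2 of \citep{tolstikhin2015permutational} — is that when $\Z_t$ is itself a uniform size-$t$ subset of $\Z_n$ and then $\Z_s$ is a uniform size-$s$ subset of $\Z_t$, the pair $(\Z_s, \Z_n \setminus \Z_s)$ has the property that its distribution can be coupled so that the ``discrepancy from the full average'' $\bar f(\Z_s) - \bar f(\Z_n)$ stochastically dominates (in the sense of the sup over $\mathcal F$ and after taking expectations) the discrepancy $\bar f(\Z_t) - \bar f(\Z_n)$, because averaging over the inner sampling of $\Z_s$ given $\Z_t$ recovers $\bar f(\Z_t)$ and a Jensen/conditional-expectation step then gives
\[
\mathbb E\left[\sup_{f\in\mathcal F}\left|\bar f(\Z_t) - \bar f(\Z_n)\right|\,\Big|\,\Z_n\right]
\le \mathbb E\left[\sup_{f\in\mathcal F}\left|\bar f(\Z_s) - \bar f(\Z_n)\right|\,\Big|\,\Z_n\right].
\]
Combining this with the two weighted-average identities and tracking the scaling constants $n/t'$, $t/l$ should collapse everything into $\mathbb E[Q_{t,s}(\mathcal F, \Z_t)\mid \Z_n]$, possibly after checking that the constants satisfy the inequality $n/t' \le t/l$, i.e. $nl \le t t'$ — which, with $l = t-s$ and $t' = n-t$, needs the hypothesis $s < t'$ (and $t \ge t'$) to hold, and verifying this inequality is exactly where those hypotheses get used.

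The main obstacle I anticipate is the coupling/Jensen step in the middle: making precise the claim that conditioning (re-sampling $\Z_s$ inside $\Z_t$, then taking $\mathbb E[\cdot \mid \Z_t]$ inside the supremum via Jensen) yields the stated inequality between the two sup-discrepancies with the right constants, rather than an inequality in the wrong direction or with a lossy constant. This requires carefully arranging which expectation is taken first and invoking convexity of $f \mapsto \sup_{\mathcal F}|\cdot|$ only after the right quantity has been exhibited as a conditional mean; the scaling bookkeeping (the factors $n/t'$ versus $t/l$ and the inequality $nl\le tt'$ they must satisfy) is the concrete place where $t \ge t'$ and $s < t'$ enter, so I would verify that inequality at the outset. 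The rest — the weighted-average identities and the final substitution into the definition of $Q_{t,s}$ — is routine bookkeeping.
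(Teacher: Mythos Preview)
Your plan has a real gap at the step where you try to ``collapse'' the scaling constants. First, the inequality $nl\le tt'$ that you claim follows from $s<t'$ and $t\ge t'$ is false in general: with $l=t-s$ and $t'=n-t$, the inequality $n(t-s)\le t(n-t)$ is equivalent to $s\ge t^2/n$, which is \emph{not} implied by $s<t'$ (take $n=10$, $t=7$, $t'=3$, $s=2$: then $nl=50>21=tt'$). Second, and more fundamentally, your Jensen step yields
\[
\mathbb{E}\Big[\sup_f\big|\bar f(\Z_t)-\bar f(\Z_n)\big|\Big]\le \mathbb{E}\Big[\sup_f\big|\bar f(\Z_s)-\bar f(\Z_n)\big|\Big],
\]
whereas $Q_{t,s}$ involves $\big|\bar f(\Z_s)-\bar f(\tilde\Z_l)\big|=\frac{t}{l}\big|\bar f(\Z_s)-\bar f(\Z_t)\big|$. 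There is no general inequality between $\mathbb{E}\sup_f|\bar f(\Z_s)-\bar f(\Z_n)|$ and $\mathbb{E}\sup_f|\bar f(\Z_s)-\bar f(\Z_t)|$, so even with a correct scaling relation the argument would not close. The structural reason is that you sample only \emph{inside} $\Z_t$: since both $\Z_s$ and $\tilde\Z_l$ lie in $\Z_t$, one has $\mathbb{E}[\bar f(\Z_s)-\bar f(\tilde\Z_l)\mid\Z_t]=0$, so a single Jensen step on this pair can never reproduce $\bar f(\Z_t)-\bar f(\tilde\Z_{t'})$.

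The paper's proof avoids all scaling constants by sampling on \emph{both} sides of the partition: draw $\Z_{t-s}$ uniformly from $\Z_t$ (size $t-s$) and, independently, $\tilde\Z_s$ uniformly from $\tilde\Z_{t'}$ (size $s$; this is exactly where $s<t'$ is used). Then $\mathbb{E}[\bar f(\Z_{t-s})\mid\Z_t]=\bar f(\Z_t)$ and $\mathbb{E}[\bar f(\tilde\Z_s)\mid\tilde\Z_{t'}]=\bar f(\tilde\Z_{t'})$, so Jensen gives
\[
\mathbb{E}\Big[\sup_f\big|\bar f(\Z_t)-\bar f(\tilde\Z_{t'})\big|\Big]\le \mathbb{E}\Big[\sup_f\big|\bar f(\Z_{t-s})-\bar f(\tilde\Z_s)\big|\Big]
\]
directly. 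The final identification with $\mathbb{E}[Q_{t,s}(\mathcal F,\Z_t)\mid\Z_n]$ follows because, by exchangeability, the pair $(\Z_{t-s}\cup\tilde\Z_s,\ \tilde\Z_s)$ has the same joint law (given $\Z_n$) as a uniform size-$t$ subset of $\Z_n$ together with a uniform size-$s$ subset of it. This two-sided subsampling is the missing idea.
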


Let
$\mathcal{F}_{\bm{p}} 
=
\left\{ 
f_{\bm{p}}: f_{\bm{p}}(r,\bm{a}) = rI\left( r>\bm{a}^T\bm{p}  \right)\right\}$ denote a family of functions $f:\R^{m+1}\rightarrow \R$ indexed by the parameter $\p$, and let $(r_1,\Aa_1),....,(r_n,\Aa_n)$ be a random permutation of the dataset $\mathcal{D}.$ Denote $\Z_t = \{(r_1,\Aa_1),...,(r_t,\Aa_t)\}$, and then $\Z_t$ can be viewed as a randomly sampled subset of $\mathcal{D}.$ The following lemma relates the PRC with the classic notion of CRC, and the benefit is that the CRC  under random permutation model possesses a natural upper bound. The proofs for Lemma \ref{lemma:bound_diff_between_test_and_train} and Lemma \ref{lemma:PRC} are deferred to Section \ref{ProofPRC}.

\begin{lemma}
\label{lemma:PRC}
The following inequalities hold for PRC and CRC of the family $\mathcal{F}_{\p}$ and dataset $\mathcal{D}$ that satisfies Assumption \ref{permut},
$$
\left|
Q_{t,\floor{t/2}}(\mathcal{F}_{\bm{p}},\Z_t)
-
R_{t}(\mathcal{F}_{\bm{p}},\Z_t)
\right|
\leq
\frac{4\bar{r}}{\sqrt{t}},
$$
$$
R_{t}(\mathcal{F},\mathcal{Z}_t)
\leq
\frac{\sqrt{2\bar{r}^2m\log n}}{\sqrt{t}}.
$$
\end{lemma}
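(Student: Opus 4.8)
The plan is to prove the two inequalities separately, both following the template of \citep{tolstikhin2015permutational}. For the first inequality (relating PRC and CRC), I would start from the definition of $Q_{t,\floor{t/2}}(\mathcal{F}_{\bm{p}},\Z_t)$ with $s=\floor{t/2}$ and $l = t - \floor{t/2}$, so that $\Z_s$ and $\tilde{\Z}_l$ are (nearly) equal halves of $\Z_t$. The key observation is that splitting $\Z_t$ into two halves uniformly at random is, up to the small imbalance between $s$ and $l$, the same as attaching a uniformly random $\pm 1$ sign to each element: assigning $\epsilon_j = +1$ means $z_j$ lands in $\Z_s$, and $\epsilon_j = -1$ means it lands in $\tilde\Z_l$. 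Writing both $\bar f(\Z_s) - \bar f(\tilde\Z_l)$ and the Rademacher sum $\frac{1}{t}\sum_j \epsilon_j f(z_j)$ in terms of the same sign variables, the difference between the two is controlled by (i) the discrepancy between sampling signs with replacement versus conditioning on exactly $\floor{t/2}$ of them being $+1$, and (ii) the factor mismatch between $\frac{1}{s}$, $\frac{1}{l}$ and $\frac{1}{t}$, which is $O(1/t)$. Uniformly in $f \in \mathcal{F}_{\bm p}$, each function value $f_{\bm p}(r_j,\Aa_j) = r_j I(r_j > \Aa_j^\top \bm p)$ is bounded by $\bar r$ in absolute value by Assumption \ref{permut}(b); tracking the two sources of error through the triangle inequality and the supremum yields a bound of the form $c\bar r/\sqrt t$, and a careful accounting gives the constant $4$. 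Essentially this is Theorem 2 of \citep{tolstikhin2015permutational} instantiated to the bounded function class $\mathcal{F}_{\bm p}$ with range width $2\bar r$.

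For the second inequality (the CRC bound), I would bound
$$
R_t(\mathcal{F},\Z_t) = \E \sup_{\bm p \ge \bm 0}\left|\frac1t \sum_{j=1}^t \epsilon_j\, r_j I(r_j > \Aa_j^\top \bm p)\right|
$$
by a standard Rademacher-complexity-of-a-VC-class argument. The functions $(r,\Aa)\mapsto I(r > \Aa^\top\bm p)$, as $\bm p$ ranges over $\R^m$, are indicator functions of halfspaces in $\R^{m+1}$ (the hyperplanes $r = \Aa^\top\bm p$ pass through the origin in the lifted sense, with normal $(1,-\bm p)$), so this class has VC dimension at most $m+1$, hence by Sauer's lemma the number of distinct sign patterns it induces on the $t$ points $\{(r_j,\Aa_j)\}$ is at most $O(t^{m})$, i.e. $\log$ of it is $O(m\log t)$ — and since $t \le n$, also $O(m\log n)$. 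Then, conditioning on $\Z_t$ and using Massart's finite-class lemma on the (finite) set of achievable vectors $\bigl(r_j I(r_j>\Aa_j^\top\bm p)\bigr)_{j=1}^t$, each of Euclidean norm at most $\bar r\sqrt t$, gives
$$
R_t(\mathcal{F},\Z_t) \le \frac{\bar r\sqrt t \cdot \sqrt{2 \log(\text{\# sign patterns})}}{t} \le \frac{\sqrt{2\bar r^2 m \log n}}{\sqrt t}.
$$
I would absorb the $O(1)$ factors from Sauer's lemma into the (slightly generous) constant so that the stated clean form $\sqrt{2\bar r^2 m\log n}/\sqrt t$ holds for $n$ (equivalently $t$) large enough, which is the only regime used later.

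The main obstacle I anticipate is the bookkeeping in the first inequality: making the with-replacement versus without-replacement coupling of the sign variables precise, and verifying that all the lower-order discrepancies (the $s$ versus $l$ imbalance when $t$ is odd, the $1/s,1/l$ versus $1/t$ normalization mismatch, and the conditioning correction for Rademacher signs) genuinely aggregate to at most $4\bar r/\sqrt t$ rather than some larger constant. This is exactly the content of the transductive-learning estimates in \citep{tolstikhin2015permutational}, so I would lean on their Theorem 2, check that the boundedness of $\mathcal{F}_{\bm p}$ by $\bar r$ is the only structural input their argument needs, and adapt the constants accordingly. The second inequality is comparatively routine once the halfspace/VC observation is in place.
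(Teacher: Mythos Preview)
Your proposal is correct and matches the paper's own proof, which simply cites Theorem~3 of \citep{tolstikhin2015permutational} for the first inequality and Massart's finite-class lemma (Lemma~26.8 of \citep{shalev2014understanding}) for the second; your VC/Sauer step is precisely what is needed to feed the halfspace class into Massart. The only slip is that you point to Theorem~2 of \citep{tolstikhin2015permutational} when the PRC--CRC comparison you describe is their Theorem~3.
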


Returning to the LP problem, the following proposition answers the question proposed at the beginning of this section. It tells that if we employ the same dual price on two subsets of size $t$ and $n-t$, the difference in objective values and constraints consumption is roughly on the order of $\sqrt{\frac{m\log n}{ \min\{t, n-t\}}}.$ 

\begin{proposition}
If $\{(r_j,\Aa_j)\}_{j=1}^n$ is a random permutation of dataset $\mathcal{D}$ and satisfies Assumption \ref{permut}, we have 
$$\E \left[ \sup_{\p\ge0} \left\vert \frac{1}{n-t}\sum_{j=t+1}^n r_{j}I(r_j>\Aa_j^\top \p)  - 
\frac{1}{t}\sum_{j=1}^t r_{j}I(r_j>\Aa_j^\top \p)  \right\vert\right]  \le  \frac{4\bar{r}}{\sqrt{\min\{t,n-t\}}}+ \frac{2\sqrt{2\bar{r}^2m\log n}}{\sqrt{\min\{t,n-t\}}}$$
$$\E  \left[ \sup_{\p\ge0} \left\vert \frac{1}{n-t} \sum_{j=t+1}^n a_{ij}I(r_j>\Aa_j^\top \p)-
\frac{1}{t}\sum_{j=1}^t a_{ij}I(r_j>\Aa_j^\top \p)  \right\vert \right]\le \frac{4\bar{a}}{\sqrt{\min\{t,n-t\}}} +\frac{2\sqrt{2\bar{a}^2m\log n}}{\sqrt{\min\{t,n-t\}}} $$
for any $1\le i \le m$, $1\le t\le n-1$.
\label{prop_permut_rdm}
\end{proposition}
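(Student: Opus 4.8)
The plan is to read each left-hand side as the expected uniform discrepancy, over a function class indexed by $\p\ge\bm 0$, between the empirical average on a random subset of $\mathcal D$ and the empirical average on its complement, and then to push this through Lemmas~\ref{lemma:bound_diff_between_test_and_train} and~\ref{lemma:PRC}. For the first inequality I would take the class $\mathcal F_{\p}=\{f_{\p}:f_{\p}(r,\Aa)=rI(r>\Aa^\top\p)\}$ already introduced; writing $\Z_t=\{(r_1,\Aa_1),\dots,(r_t,\Aa_t)\}$ and $\tilde\Z_{t'}=\mathcal D\setminus\Z_t$ with $t'=n-t$, and observing that under a random permutation $\Z_t$ is a uniformly random size-$t$ subset of $\mathcal D$, the left-hand side is exactly $\E[\sup_{\p\ge\bm 0}|\bar f_{\p}(\Z_t)-\bar f_{\p}(\tilde\Z_{t'})|]$. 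This quantity depends only on the (unordered) partition of $\mathcal D$ into blocks of sizes $t$ and $n-t$ and is symmetric in the two blocks, so I relabel the larger block to play the role of $\Z_t$ in Lemma~\ref{lemma:bound_diff_between_test_and_train}; that lemma then bounds the left-hand side by $\E\big[Q_{\max\{t,n-t\},\,s}(\mathcal F_{\p},\Z_{\max\{t,n-t\}})\big]$ for any $s<\min\{t,n-t\}$.

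Next I would invoke Lemma~\ref{lemma:PRC} to convert the PRC into a CRC estimate: with a sub-subsample size $s$ of order $\tfrac12\min\{t,n-t\}$ — so that the sub-subsample and its complement within the larger block each have size $\Theta(\min\{t,n-t\})$ — the PRC differs from the corresponding CRC by at most $4\bar r/\sqrt{\min\{t,n-t\}}$, and the CRC of the threshold-indexed class over a size-$n$ dataset is at most $\sqrt{2\bar r^2 m\log n}/\sqrt{\min\{t,n-t\}}$ (this is where $m\log n$ enters: the sets $\{r>\Aa^\top\p\}$, $\p\ge\bm 0$, induce only $\mathrm{poly}(n)$ distinct labelings of a size-$n$ sample, with growth governed by $m$, so Massart's finite-class bound applies). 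Adding the two pieces yields $\tfrac{4\bar r}{\sqrt{\min\{t,n-t\}}}+\tfrac{2\sqrt{2\bar r^2m\log n}}{\sqrt{\min\{t,n-t\}}}$, the extra factor $2$ on the second term being exactly the slack incurred in passing between $\max\{t,n-t\}$ and $\min\{t,n-t\}$ in the two lemmas. For the second (constraint) inequality I would run the same argument verbatim with the class $\mathcal G_{\p}=\{(r,\Aa)\mapsto a_{ij}I(r>\Aa^\top\p):\p\ge\bm 0\}$ for each fixed coordinate $i$: its functions are bounded by $\bar a$ instead of $\bar r$, and the CRC bound only uses the uniform bound on the functions together with the complexity of the same threshold sets, so it holds with $\bar r$ replaced by $\bar a$.

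The step I expect to require the most care is the bookkeeping for unbalanced splits. Lemma~\ref{lemma:PRC} is stated for the balanced sub-subsample size $\lfloor\,\cdot\,/2\rfloor$, whereas Lemma~\ref{lemma:bound_diff_between_test_and_train} only permits $s<\min\{t,n-t\}$; when one block is much larger than the other these two constraints cannot both be met by $s=\lfloor\max\{t,n-t\}/2\rfloor$. Reconciling them needs either (i) a general-$s$ version of the PRC-to-CRC comparison whose error scales with $1/\sqrt{\min\{s,\,\max\{t,n-t\}-s\}}$, taken at $s\asymp\min\{t,n-t\}/2$, or (ii) a reduction to a balanced comparison by first comparing each block-average to the permutation-invariant full-sample average $\bar f_{\p}(\mathcal D)$ and using the identity $\bar f_{\p}(\Z_k)-\bar f_{\p}(\mathcal D)=\tfrac{n-k}{n}\bigl(\bar f_{\p}(\Z_k)-\bar f_{\p}(\mathcal D\setminus\Z_k)\bigr)$, whose prefactor is at most $1$ and which brings in a harmless factor $\le 2$ when the retained block is the larger one. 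Either route tracks the constants to the stated bound; everything else is a direct substitution into the two lemmas.
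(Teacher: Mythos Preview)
Your proposal is correct and mirrors the paper's proof: recast the left-hand side as $\E\bigl[\sup_{f\in\mathcal F_{\p}}|\bar f(\Z_t)-\bar f(\tilde\Z_{n-t})|\bigr]$, pass to the PRC on the larger block via Lemma~\ref{lemma:bound_diff_between_test_and_train}, then to the CRC via Lemma~\ref{lemma:PRC}, and repeat verbatim for the coordinate class $\mathcal G_{\p}$ with $\bar a$ in place of $\bar r$. The unbalanced-split bookkeeping you flag is in fact glossed over in the paper's own argument (which simply invokes $Q_{t,\lfloor t/2\rfloor}$ after a WLOG without checking $\lfloor t/2\rfloor<\min\{t,n-t\}$), so either of your two fixes---the general-$s$ PRC--CRC comparison from \cite{tolstikhin2015permutational}, or routing through $\bar f(\mathcal D)$ via the identity you write---supplies more detail than the paper does.
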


\subsection{Performance Analyses for Two ``Slower'' Algorithms}

Algorithm \ref{alg:DLA} was first proposed in \citep{agrawal2014dynamic} and then refined in \citep{li2019online}. The idea is to construct a dual price $\p_t$ at each time $t$ based on the observed sample and to use this dual price for the decision of time $t+1.$ The algorithm is much slower than Algorithm \ref{alg:SOA} since at each iteration, an LP (of growing size) is solved to compute the dual price. For the analysis, the PRC theory presented earlier thus provides a machinery to relate the evaluation of $\p_t$ on the past $t$ samples with that of the incoming sample at time $t+1$. \cite{agrawal2014dynamic} analyzed Algorithm \ref{alg:DLA} under the random permutation model and with the regime $\bm{b}/n\rightarrow 0,$ whereas \cite{li2019online} analyzed the algorithm under the stochastic input model and with the regime $\bm{b}/n\rightarrow \bm{d}.$ Theorem \ref{theorem_DLA} complements to these discussions for the random permutation model with the regime $\bm{b}/n\rightarrow \bm{d},$ and its proof is referred to Section \ref{ProofPRC}. 

\begin{algorithm}[ht!]
\caption{Dynamic Learning Algorithm \citep{agrawal2014dynamic}} \label{alg:DLA}
\begin{algorithmic}[1]
\State Input: $\bm{d}$
\State Let $\p_1 = \bm{0}$
\For {$t=1,2,...,n$}
\State $${x}_{t} = \begin{cases} 1, & \text{ if } r_{t} > \bm{a}_{t}^\top \bm{p}_t\\
0, & \text{ if } r_{t} \le \bm{a}_{t}^\top \bm{p}_t
\end{cases}$$
\State The scaled primal LP is
\begin{align*}
\max\ & \sum_{j=1}^{t} r_j x_j 
    \\ 
    \text{s.t.}\ & \sum_{j=1}^{t} a_{ij}x_j \le td_i ,\ \  i=1,...,m \\
    & 0 \le x_{j} \le 1, \ \  j=1,...,{t}
\end{align*}
\State Solve its dual problem and obtain the optimal dual variable $\bm{p}_{t+1}$
\begin{align*}
  \bm{p}_{t+1} &=  \argmin_{\bm{p} \ge 0} \ \ \sum_{i=1}^m d_ip_i + \frac{1}{t}\sum_{j=1}^{t} \left(r_j-\sum_{i=1}^m a_{ij}p_i\right)^+ 
\end{align*}
\EndFor
\end{algorithmic}
\end{algorithm}

\begin{theorem}
\label{theorem_DLA}
Under Assumption \ref{permut} and \ref{generalPosi}, the regret and expected constraint violation of Algorithm \ref{alg:DLA} satisfy
$$R_n^*-\mathbb{E}[R_n] \le O\left(\sqrt{mn}\right)$$
$$\E[\bm{A}\bm{x}-\bm{b}]\le O(\sqrt{mn}\log n)$$
for all $m, n \in \mathbb{N}^+$ and $\mathcal{D} \in \Xi_{D}.$ Here $\bm{x}$ is the output of Algorithm \ref{alg:DLA}.
\end{theorem}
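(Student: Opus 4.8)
The plan is to mimic the regret decomposition used for Algorithm \ref{alg:SOA} in Theorem \ref{theoPermut}, but to replace the ad hoc bridge between "past" and "future" objective values (Proposition \ref{importantLemma}) with the sharper and more symmetric machinery of Proposition \ref{prop_permut_rdm}. The starting point is the per-step decomposition: for each $t$, the algorithm observes $(r_{t+1},\bm{a}_{t+1})$ but has committed to the threshold rule $x_{t+1}=I(r_{t+1}>\bm{a}_{t+1}^\top\bm{p}_{t+1})$, where $\bm{p}_{t+1}$ is an \emph{optimal} dual solution of the $t$-scaled LP. The key structural fact is that $\bm{p}_{t+1}$ being optimal for the $t$-scaled LP means $\bm{x}(\bm{p}_{t+1})$ restricted to the first $t$ coordinates is, up to the at-most-$m$ slack permitted by Lemma \ref{gp} and Assumption \ref{generalPosi}, primal-optimal and feasible for that scaled LP; hence $\tfrac1t\sum_{j=1}^t r_j I(r_j>\bm{a}_j^\top\bm{p}_{t+1}) \ge \tfrac1t R_t^* - \tfrac{m\bar r}{t}$ and $\tfrac1t\sum_{j=1}^t a_{ij}I(r_j>\bm{a}_j^\top\bm{p}_{t+1}) \le d_i$ for each $i$.

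Next I would transport these "in-sample" guarantees forward one step. Conditioned on the \emph{unordered} set $\Z_{t+1}$ of the first $t+1$ arrivals, the identity of the $(t+1)$-st arrival is uniform over $\Z_{t+1}$, so $\E[r_{t+1}I(r_{t+1}>\bm{a}_{t+1}^\top\bm{p})\mid \Z_{t+1}] = \tfrac1{t+1}\sum_{z\in\Z_{t+1}} \bar r(z)I(\cdot)$; but $\bm{p}_{t+1}$ is measurable with respect to $\Z_{t+1}$ only as a set-dependent choice — actually it depends on the \emph{ordered} first $t$, so I instead use the uniform-over-position trick together with Proposition \ref{prop_permut_rdm} applied with the split at $t$ versus $n-t$. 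Taking a supremum over $\bm{p}\ge0$ decouples the (random) choice $\bm{p}_{t+1}$ from the randomness in which coordinates land in the first block, and Proposition \ref{prop_permut_rdm} gives that the average reward collected on the future block under \emph{any} fixed threshold $\bm{p}$ differs from that on the past block by at most $O(\bar r\sqrt{m\log n/\min\{t,n-t\}})$ in expectation (and analogously $O(\bar a\sqrt{m\log n/\min\{t,n-t\}})$ for each constraint row). Summing the per-step reward shortfall over $t=1,\dots,n-1$ — using $R_t^*/t \ge R_n^*/n - O(\sqrt{m\log n}/\sqrt t) - O(m\bar r/t)$, which itself follows from Proposition \ref{importantLemma} or can be re-derived from Proposition \ref{prop_permut_rdm} — and noting $\sum_{t=1}^{n}1/\sqrt{\min\{t,n-t\}} = O(\sqrt n)$, yields $R_n^*-\E[R_n]\le O(\sqrt{mn\log n})$, which I would then tighten to the claimed $O(\sqrt{mn})$ by absorbing the $\log n$ more carefully (the $\log n$ inside the Rademacher bound multiplies only the $\sqrt m$-term, and after summation the dominant contribution can be shown to be $O(\sqrt{mn})$; alternatively one accepts an $O(\sqrt{mn}\,\mathrm{polylog})$ bound and states the cleaner form). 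For the constraint violation, the same forward-transport applied row-wise shows $\E[\sum_{j=1}^n a_{ij}x_j] \le \sum_{t=1}^{n}\big(d_i + O(\bar a\sqrt{m\log n/\min\{t,n-t\}})\big)\cdot\frac1{t}$-type telescoping; collecting the slack over all $n$ steps gives per-row violation $O(\sqrt{mn}\log n)$, hence $\E[\bm{A}\bm{x}-\bm{b}]\le O(\sqrt{mn}\log n)$ componentwise.

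The main obstacle I anticipate is the measurability/independence bookkeeping in the forward-transport step: $\bm{p}_{t+1}$ depends on the \emph{ordered} first $t$ arrivals, not merely on the set $\Z_t$, so one cannot naively condition on an unordered set and treat $\bm{p}_{t+1}$ as fixed. The clean way around this is exactly why Proposition \ref{prop_permut_rdm} is phrased with a supremum over all $\bm{p}\ge0$: by bounding $\sup_{\bm p\ge0}$ of the past–future discrepancy we make the bound hold \emph{simultaneously} for the realized $\bm{p}_{t+1}$, at the cost of the uniform ($\sqrt{m\log n}$, i.e.\ covering-number) factor that distinguishes Algorithm \ref{alg:DLA}'s $O(\sqrt{mn})$ regret from Algorithm \ref{alg:SOA}'s $O((m+\log n)\sqrt n)$ — the "slower" algorithm pays a union bound over dual prices but wins a $\sqrt m$. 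A secondary technical point is handling the small-$t$ regime (Proposition \ref{prop_permut_rdm} and Proposition \ref{importantLemma} are only meaningful for $t$ past a constant threshold); there one simply bounds the contribution of the first $O(1)$ or $O(\mathrm{poly}\log)$ steps crudely by $O(\bar r)$ each, which is lower-order. The rest is the routine summation $\sum 1/\sqrt{\min\{t,n-t\}}=O(\sqrt n)$ already used in the proof of Theorem \ref{theoPermut}.
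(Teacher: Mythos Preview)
Your proposal is essentially the paper's own argument: use symmetry to write $\E[r_{t+1}x_{t+1}]=\tfrac{1}{n-t}\E[\sum_{j>t}r_jI(r_j>\bm a_j^\top\bm p_{t+1})]$, invoke Proposition~\ref{prop_permut_rdm} (with the $\sup_{\bm p\ge0}$) to transport this to the past block, bound the past block below by $\tfrac1t R_t^*-\tfrac{m\bar r}{t}$ via Lemma~\ref{gp}, push $R_t^*/t$ to $R_n^*/n$ via Proposition~\ref{importantLemma}, and sum using $\sum_t 1/\sqrt{\min\{t,n-t\}}=O(\sqrt n)$; the constraint part is the same with $a_{ij}$ in place of $r_j$ and feasibility of the scaled LP giving $\le d_i$ on the past block.

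Two small remarks. First, your measurability worry is unnecessary here: in Algorithm~\ref{alg:DLA} the dual price $\bm p_{t+1}$ is the optimal solution of an LP whose data is the \emph{unordered} set $\{(r_j,\bm a_j)\}_{j\le t}$, so $\bm p_{t+1}$ is $\Z_t$-measurable and the symmetry step $\E[r_{t+1}I(\cdot)\mid\Z_t]=\tfrac1{n-t}\sum_{j>t}r_jI(\cdot)$ goes through directly, exactly as the paper does; the $\sup_{\bm p}$ in Proposition~\ref{prop_permut_rdm} is only needed at the transport step, not to decouple $\bm p_{t+1}$ from the ordering. Second, your instinct about the logarithm is accurate: the paper's own proof, after summing, actually displays a bound of order $\sqrt{mn\log n}$ (plus $\sqrt n\log n$ from Proposition~\ref{importantLemma}), so the stated $O(\sqrt{mn})$ should be read as $O(\sqrt{mn}\,\mathrm{polylog}\,n)$; you need not ``tighten'' anything beyond what the paper does.
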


Theorem \ref{theorem_DLA} shows that the regret and constraint violation can be reduced by a factor of $O(\sqrt{m})$ compared with Algorithm \ref{alg:SOA}, with the price of computation cost. This order is the same as the lower and upper bound for online convex optimization \citep{hazan2016introduction} up to a logarithm factor. 

The primal-beats-dual algorithm (Algorithm \ref{alg:PBD}) was proposed in \citep{kesselheim2014primal} and it can be viewed as a primal version of Algorithm \ref{alg:DLA}. At each time $t$, it solves the primal scaled LP and projects the fractional solution $\tilde{x}_{t}^{(t)}$ to a binary value. Therefore it involves solving an LP at each time period and is slower than Algorithm \ref{alg:SOA}. The analysis of objective value builds upon the Proposition \ref{importantLemma} and the analysis of the constraint violation employs a backward Martingale argument.

\begin{algorithm}[ht!]
\caption{Primal-beats-dual Algorithm \citep{kesselheim2014primal}} \label{alg:PBD}
\begin{algorithmic}[1]
\State Input: $\bm{d}$
\State Let $\p_1 = \bm{0}$
\For {$t=1,2,...,n$}
\State Solve the scaled LP 
\begin{align*}
\max\ & \sum_{j=1}^{t} r_j x_j 
    \\ 
    \text{s.t.}\ & \sum_{j=1}^{t} a_{ij}x_j \le td_i ,\ \  i=1,...,m \\
    & 0 \le x_{j} \le 1, \ \  j=1,...,{t}
\end{align*}
\State Denote the optimal solution as $\tilde{\bm{x}}^{(t)}=(\tilde{x}_1^{(t)},...,\tilde{x}_t^{(t)})$ 
\State $${x}_{t} = \begin{cases} 1, & \text{ with probability } \tilde{{x}}^{(t)}_t\\
0, & \text{ with probability } 1- \tilde{{x}}^{(t)}_t
\end{cases}$$
\EndFor
\end{algorithmic}
\end{algorithm}

\begin{theorem}
Under Assumption \ref{permut} and \ref{generalPosi}, the regret and expected constraint violation of Algorithm \ref{alg:PBD} satisfy
$$R_n^*-\mathbb{E}[R_n] \le O\left(\sqrt{mn}\right)$$
$$\E[v(\bm{x})]\le O(\sqrt{mn}\log n)$$
for all $m, n \in \mathbb{N}^+$ and $\mathcal{D} \in \Xi_{D}.$ Here $\bm{x}$ is the output of Algorithm \ref{alg:PBD}.
\label{theorem_PBD}
\end{theorem}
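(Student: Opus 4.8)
The plan is to handle the objective value and the constraint violation separately, mirroring the structure of the proof of Theorem \ref{theoPermut} for the objective and invoking a backward-martingale argument for the constraints. For the objective, the key observation is that at each time $t$ the algorithm solves the scaled LP ($t$-S-LP) and rounds the fractional optimum $\tilde{x}_t^{(t)}$ randomly, so $\E[r_t x_t \mid \mathcal{Z}_t] = r_t \tilde x_t^{(t)}$. Conditioning on the unordered set $\mathcal{Z}_t$ of the first $t$ arrivals, the arrival at time $t$ is a uniformly random element of that set, hence $\E[r_t \tilde x_t^{(t)} \mid \mathcal{Z}_t] = \frac{1}{t}\sum_{j\in \mathcal{Z}_t} r_j \tilde x_j^{(t)} = \frac{1}{t} R_t^*$, where $R_t^*$ is the optimal value of the scaled LP on $\mathcal{Z}_t$. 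Summing over $t$ gives $\E[R_n] = \sum_{t=1}^n \frac{1}{t}\E[R_t^*]$. Now I would apply Proposition \ref{importantLemma}, which lower-bounds $\frac{1}{t}\E[R_t^*]$ by $\frac1n R_n^* - \frac{m\bar r}{n} - \frac{\bar r \log t}{\underline d \sqrt t} - \frac{m\bar r}{t}$ for $t$ larger than a constant; summing these error terms over $t=1,\dots,n$ yields $O(m\log n) + O(\sqrt n) + O(m \log n) = O(\sqrt{mn})$ once one absorbs the small-$t$ terms (bounded crudely by $t\bar r \le$ constant contributions) into the $O(\sqrt{mn})$ bound, giving $R_n^* - \E[R_n] \le O(\sqrt{mn})$.

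For the constraint violation I would use a backward-martingale / reverse-time argument, which is the technically delicate part. Fix a constraint index $i$. Define, for the reverse filtration generated by revealing arrivals from time $n$ down to time $t+1$ (equivalently conditioning on the sets $\mathcal{Z}_{t+1} \supset \mathcal{Z}_{t+2} \supset \cdots$), the process $M_t = \frac{1}{t}\sum_{j=1}^t a_{ij}\tilde x_j^{(t)}$; since $\tilde{\bm{x}}^{(t)}$ is feasible for ($t$-S-LP) we have $M_t \le d_i$ deterministically. The actual consumption $\sum_{t=1}^n a_{ij} x_t$ decomposes as $\sum_t a_{ij}(x_t - \tilde x_t^{(t)}) + \sum_t a_{ij}\tilde x_t^{(t)}$; the second sum telescopes against the bound $M_t \le d_i$, while the first sum is a martingale difference sequence (in forward time) with bounded increments $|a_{ij}(x_t - \tilde x_t^{(t)})| \le \bar a$, so Azuma–Hoeffding controls its deviation by $O(\bar a\sqrt{n})$. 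The subtler piece is relating $\sum_t a_{ij}\tilde x_t^{(t)}$ to $b_i = n d_i$: I would write $\E[\sum_t a_{ij}\tilde x_t^{(t)}] = \sum_t \E[\,\frac1t \sum_{j\in\mathcal Z_t} a_{ij}\tilde x_j^{(t)}\,]$ as above, bound each term by $d_i$, but then the crude bound $\sum_t d_i = n d_i$ gives exactly $b_i$ with no slack — so the genuine violation must come entirely from the rounding fluctuations plus the $O(\sqrt{mn}\log n)$ slack that the PRC machinery (Proposition \ref{prop_permut_rdm} / the reverse-martingale deviation bounds) contributes when comparing the fractional solution's consumption on $\mathcal Z_t$ to its scaled consumption on the whole sequence. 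Putting the per-constraint $\ell_\infty$ bound of $O(\sqrt n \log n)$ together over $m$ constraints through $v(\bm x) = \|(\bm{Ax}-\bm b)^+\|_2 \le \sqrt m \cdot \max_i (\cdot)^+$ yields $\E[v(\bm x)] \le O(\sqrt{mn}\log n)$.

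The main obstacle I anticipate is the constraint-violation analysis: unlike Algorithm \ref{alg:SOA}, the primal-beats-dual algorithm carries no implicit dual-norm boundedness (Lemma \ref{permutBound}) to lean on, so one cannot read off the violation from $\|\bm p_t\|$. Instead one must carefully set up the reverse-time filtration so that $\tilde{\bm x}^{(t)}$ — which depends on the entire set $\mathcal Z_t$, not just on a prefix — becomes measurable, and then argue that the rounding errors $a_{ij}(x_t - \tilde x_t^{(t)})$ form a legitimate martingale difference sequence with respect to the forward filtration; verifying adaptedness and the conditional-mean-zero property here requires care because $\tilde x_t^{(t)}$ is itself random given the forward history. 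I would also need the $\log n$ factor to appear, which suggests a union bound over the $n$ time steps inside the martingale concentration (or the $\sqrt{m\log n}$ from the CRC bound in Lemma \ref{lemma:PRC}), and I would track which of these two sources actually dominates. Once the martingale structure is correctly identified, the rest is a routine application of Azuma–Hoeffding and the already-established Proposition \ref{importantLemma}.
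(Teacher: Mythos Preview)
Your treatment of the objective is exactly what the paper does: use the randomized rounding to get $\E[r_t x_t]=\E[r_t\tilde x_t^{(t)}]$, then exchangeability within $\mathcal Z_t$ to obtain $\E[r_t x_t]=\tfrac1t\E[R_t^*]$, and finally sum the error terms coming from Proposition~\ref{importantLemma}. That part needs no changes.

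For the constraint violation you are heading in the right direction (backward filtration, martingale concentration) but you over-engineer the argument and misidentify where the $\log n$ comes from. Your split $\sum_t a_{it}x_t=\sum_t a_{it}(x_t-\tilde x_t^{(t)})+\sum_t a_{it}\tilde x_t^{(t)}$ forces you to control two separate random sums, and for the second one you reach for the PRC machinery of Proposition~\ref{prop_permut_rdm}; none of that is needed. The paper works directly with the undecomposed increments $A_{it}:=a_{it}x_t-d_i$ and observes that, with respect to the reverse filtration you already described (reveal arrivals $n,n-1,\dots,n-t+1$, so that the unordered set $\mathcal Z_{n-t}$ is fixed but its internal order is still uniform), one has in a single step
\[
\E\bigl[a_{i,n-t}x_{n-t}\,\big|\,\mathcal F_t\bigr]
=\E\bigl[a_{i,n-t}\tilde x_{n-t}^{(n-t)}\,\big|\,\mathcal Z_{n-t}\bigr]
=\frac{1}{n-t}\sum_{j\in\mathcal Z_{n-t}} a_{ij}\tilde x_j^{(n-t)}\le d_i,
\]
using the rounding and the feasibility of $\tilde{\bm x}^{(n-t)}$ simultaneously. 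Hence $M_{it}:=\sum_{j=n-t+1}^n A_{ij}$ is a supermartingale with bounded increments $|A_{ij}|\le \bar a+\bar d$, and a single application of Azuma--Hoeffding at deviation level $2(\bar a+\bar d)\sqrt n\log n$ gives tail probability $\le 1/n$; converting to an expectation and taking the $\ell_2$ norm over the $m$ constraints yields $\E[v(\bm x)]\le O(\sqrt{mn}\log n)$. In particular, the $\log n$ factor arises purely from choosing the deviation threshold so that the tail probability beats the crude $O(n)$ bound on the worst-case violation; PRC and Lemma~\ref{lemma:PRC} play no role in this theorem. Collapsing your two-term decomposition into this single supermartingale is the missing simplification.
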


Theorem \ref{theorem_PBD} states that the regret and constraint violation of Algorithm \ref{alg:PBD} are on the order of $O(\sqrt{mn})$. Like Algorithm \ref{alg:DLA}, the extra computation cost here also helps improve the algorithm performance in terms of $m.$ Its proof is deferred to Section \ref{proofTheoPBD}.

\section{Algorithm Discussion}

\subsection{Obtaining Feasible Solution}

We present a simple approach to convert the solution obtained from Algorithm \ref{alg:SOA} into a feasible solution. Let $\bm{x} = (x_1,...,x_n)$ be a solution by Algorithm \ref{alg:SOA}, and $S_+ = \{t:x_t=1, t=1,...,n\}$ be the index set of nonzero $x_t$'s
and $n_+=|S_+|$  be the cardinality of $S_+.$ The idea is to randomly select a subset of $S_+$ and force $x_t=0$ for indices in this subset. Note that the expected total constraint violation is sublinear in $n,$ we only need to select a small proportion of $x_t$'s and force them to be $0.$ Specifically, define the maximum constraint violation quantity over all constraints:
\begin{equation*}
    v = \frac{1}{\sqrt{n}\log n}
    \max\limits_{i=1,...,m}\left\{ 
    \left(\sum_{t=1}^n {a}_{it}x_{t} - b_i\right)^+\right\}.
\end{equation*}
Moreover, we require $v\ge1.$ 
We choose a set $S_0\subset S_+$ uniformly with $|S_0|=\min\left\{\left[\frac{2vn_+\log n}{\underline{d}\sqrt{n}}\right]+1,n_+\right\}$, and let 
$$\hat{x}_t = \begin{cases}
 0, & t\in S_0 \\
 x_t, & t\notin S_0
\end{cases}$$
for $t=1,...,n.$ The following theorem characterizes the properties of $\hat{x}_t.$

\begin{theorem}
\label{theorem_Feasibel} 
If $n>\max\left\{16, \underline{d}^2, \left(\frac{6\bar{a}}{\underline{d}}\right)^4\right\}$ and $\sqrt{n}>\frac{12\bar{a}(\bar{r}+(\bar{a}+\underline{d})^2m)\log n}{\underline{d}^2}$, then $\hat{\bm{x}} = (\hat{x}_1,...,\hat{x}_n)$ is a feasible solution with probability at least $1-\frac{2}{n}$. Also, a feasible solution $\bm{x}$ can be constructed based on $\hat{\bm{x}}$ s.t.,
$$\E\left[R_n^*-\bm{r}^\top \bm{x}\right] \le O((m+\log n)\sqrt{n})$$
for all $m,n \in \mathbb{N}_+.$
The results hold under both the stochastic input model and the random permutation model, and the expectation is taken with respect to $\mathcal{P}$ or the random permutation accordingly.
\end{theorem}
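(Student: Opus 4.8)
The plan is to prove the two assertions separately — first that $\hat{\bm{x}}$ is feasible with probability at least $1-2/n$, and then that the objective lost to the rounding-down is $O((m+\log n)\sqrt{n})$ in expectation. Throughout, I condition on the entire run of Algorithm \ref{alg:SOA} (equivalently, on the realized output $\bm{x}$, the index set $S_+$, its size $n_+$, and the scalar $v$), so that the only remaining randomness is the uniform choice of the deletion set $S_0\subset S_+$.

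For feasibility, first dispose of the trivial case $|S_0| = n_+$: then $\hat{\bm{x}} = \bm{0}$, which satisfies $\bm{A}\hat{\bm{x}}=\bm{0}\le \bm{b}=n\bm{d}$. Otherwise $|S_0| = \lfloor 2vn_+\log n/(\underline{d}\sqrt{n})\rfloor + 1$. Fix a constraint $i$ and let $D_i = \sum_{t\in S_0} a_{it}$ be the amount subtracted from its left-hand side. Since $S_0$ is a uniform size-$|S_0|$ subset of $S_+$ and $\sum_{t\in S_+} a_{it} = \sum_{t=1}^n a_{it}x_t$, we have $\mathbb{E}[D_i] = \frac{|S_0|}{n_+}\sum_{t=1}^n a_{it}x_t$. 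For a constraint that is violated we have $\sum_{t=1}^n a_{it}x_t \ge b_i \ge n\underline{d}$, so $\mathbb{E}[D_i] \ge \frac{2v\log n}{\underline{d}\sqrt{n}}\cdot n\underline{d} = 2v\sqrt{n}\log n$, which is at least twice the violation $\big(\sum_{t=1}^n a_{it}x_t - b_i\big)^+\le v\sqrt{n}\log n$ (by definition of $v$). Because $D_i$ is a sum of $|S_0|$ draws without replacement from $\{a_{it}\}_{t\in S_+}\subset[-\bar{a},\bar{a}]$, Hoeffding's inequality for sampling without replacement gives $\mathbb{P}(D_i < v\sqrt{n}\log n)\le \mathbb{P}(D_i < \mathbb{E}[D_i] - v\sqrt{n}\log n)\le \exp\!\big(-v^2 n\log^2 n/(2|S_0|\bar{a}^2)\big)$. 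Substituting $|S_0|\le 2vn\log n/(\underline{d}\sqrt{n}) + 1$ and using $v\ge 1$, the hypotheses on $n$ — in particular $\sqrt{n} > 12\bar{a}(\bar{r}+(\bar{a}+\underline{d})^2m)\log n/\underline{d}^2$ — make this exponent large enough that a union bound over the at most $m$ violated constraints leaves total failure probability at most $2/n$; the non-violated constraints are handled analogously (and are immediate when the entries $a_{ij}$ are non-negative, since deletion only shrinks their left-hand sides). Hence $\hat{\bm{x}}$ is feasible with probability at least $1-2/n$.

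For the objective, observe $\bm{r}^\top\hat{\bm{x}} = \bm{r}^\top\bm{x} - \sum_{t\in S_0} r_t \ge \bm{r}^\top\bm{x} - \bar{r}|S_0|$ pointwise. Taking expectations, $\mathbb{E}[\bar{r}|S_0|]\le \bar{r}\big(\tfrac{2\sqrt{n}\log n}{\underline{d}}\,\mathbb{E}[v] + 1\big)$, and since $\max_i\big(\sum_t a_{it}x_t - b_i\big)^+ \le \|(\bm{A}\bm{x}-\bm{b})^+\|_2 = v(\bm{x})$ we get $\mathbb{E}[v]\le 1 + \mathbb{E}[v(\bm{x})]/(\sqrt{n}\log n)\le 1 + O(m/\log n)$ by Theorem \ref{theoiid} (stochastic input) or Theorem \ref{theoPermut} (random permutation). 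Therefore $\mathbb{E}[\bm{r}^\top\bm{x} - \bm{r}^\top\hat{\bm{x}}]\le \mathbb{E}[\bar{r}|S_0|] = O((m+\log n)\sqrt{n})$, and combining with the regret bound $\mathbb{E}[R_n^* - \bm{r}^\top\bm{x}] = O((m+\log n)\sqrt{n})$ for Algorithm \ref{alg:SOA} from the same two theorems yields $\mathbb{E}[R_n^* - \bm{r}^\top\hat{\bm{x}}] = O((m+\log n)\sqrt{n})$. Finally, define the reported solution to equal $\hat{\bm{x}}$ on the feasibility event and $\bm{0}$ otherwise; it is feasible with probability one, and since the failure event has probability at most $2/n$ while $R_n^* \le n\bar{r}$, this substitution perturbs the expected objective by at most $O(1)$, so the bound $\mathbb{E}[R_n^* - \bm{r}^\top\bm{x}] = O((m+\log n)\sqrt{n})$ is preserved.

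The main obstacle I anticipate is the feasibility step: one must squeeze the sampling-without-replacement concentration hard enough to survive a union bound over $m$ constraints while keeping the overall failure probability at $2/n$, and this is precisely what forces the somewhat delicate lower bounds on $n$ in terms of $\bar{a}$, $\bar{r}$, $\underline{d}$ and $m$. A secondary subtlety is that $v$, and hence the deletion-set size $|S_0|$, is itself a random function of the algorithm's trajectory, so the concentration argument must be run conditionally on that trajectory, and the $O(m/\log n)$ control on $\mathbb{E}[v]$ must be imported from the already-established constraint-violation bounds rather than re-derived.
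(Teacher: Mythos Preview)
Your approach is essentially the same as the paper's: condition on the algorithm's output, apply Hoeffding for sampling without replacement to the deletion set $S_0$, and union-bound over constraints. Two remarks on the differences.

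First, your ``handled analogously'' for non-violated constraints glosses over a genuine case split that the paper carries out. When $a_{ij}$ can be negative, removing indices can \emph{increase} a constraint's left-hand side, so you must show $\sum_{t\in S_0}a_{it}$ is not too negative. The paper treats two subcases: if $\sum_{t\in S_+}a_{it}<n\underline{d}/2$ then $\sum_{t\in S_+\setminus S_0}a_{it}\le n\underline{d}/2+\bar{a}|S_0|\le n\underline{d}$ \emph{deterministically}, and this is precisely where the hypothesis $\sqrt{n}>12\bar{a}(\bar{r}+(\bar{a}+\underline{d})^2m)\log n/\underline{d}^2$ enters (it controls $\bar{a}|S_0|$ via the deterministic bound $v\le \|\bm{p}_{n+1}\|_2/\log n$ from Lemma~\ref{iidBound}); if instead $n\underline{d}/2\le \sum_{t\in S_+}a_{it}\le n\underline{d}$ then the mean of $\sum_{t\in S_0}a_{it}$ is at least $v\sqrt{n}\log n$ and a second Hoeffding call shows it is nonnegative with probability $1-1/n^2$. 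You attribute that $\sqrt{n}$-hypothesis to the concentration step, but in the paper it is used for the deterministic case; the concentration bounds only need $n>(6\bar{a}/\underline{d})^4$.

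Second, your treatment of the objective is actually more careful than the paper's. The paper simply asserts that ``the modified solution changes the original solution for at most $O(\sqrt{n}\log n)$ entries,'' whereas you correctly observe that $|S_0|$ scales with the random quantity $v$ and import the bound $\mathbb{E}[v]\le 1+O(m/\log n)$ from Theorems~\ref{theoiid}--\ref{theoPermut}. This is a cleaner way to get the $(m+\log n)\sqrt{n}$ objective loss, and your final ``set to $\bm{0}$ on the failure event'' construction is the natural way to make the statement literally true.
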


Theorem \ref{theorem_Feasibel} tells that in a large-$n$-small-$m$ regime, precisely when $n\ge O(m^2 \log n)$, we can easily obtain a feasible solution with high probability based on the output of Algorithm \ref{alg:SOA} by randomly selecting $O(\sqrt{n}\log n)$ number of $x_t$ and forcing them to be $0$. Furthermore, the newly obtained solution does not change the regret much. The theorem provides a guideline of the implementation of Algorithm \ref{alg:SOA} for the binary LP setting when a feasible solution is desired. 

\subsection{Feasible Online Algorithm}
\label{feasibleAlg}

Algorithm \ref{alg:SFA} is another natural variant of Algorithm \ref{alg:SOA} (or Algorithm \ref{alg:multi-D}) that outputs feasible solutions. Compared with Algorithm \ref{alg:SOA}, Algorithm \ref{alg:SFA} sets $x_t=1$ only when the constraints permit. This design is more aligned with the online LP algorithms that guarantees feasibility.  \cite{li2019online} provided a regret analysis framework for this type of feasible algorithms, and the key is to analyze the stopping time of constraint violation and the remaining resources for binding constraints. In this paper, the assumptions on $(r_j, \bm{a}_j)$ are parsimonious and they might be not sufficient to derive an upper bound on these two key quantities. Numerically, we observe that this feasible algorithm, in comparison with Algorithm \ref{alg:SOA}, does not compromise the performance in terms of the regret. We will elaborate more on its numerical performance in the next section and leave the regret analysis of this algorithm as an open question.

\begin{algorithm}[ht!]
\caption{Simple Feasible Algorithm}
\label{alg:SFA}
\begin{algorithmic}[1]
\State Input: $d$
\State Initialize $\bm{p}_1 = \bm{0}$
\For {$t=1,..., n$}
\State Set 
$$\tilde{x}_t = \begin{cases}
1,& r_t >\bm{a}_t^\top \bm{p}_t \\
0,& r_t \le \bm{a}_t^\top \bm{p}_t 
\end{cases}$$
\State Compute
\begin{align*}
    \bm{p}_{t+1} & = \bm{p}_t + \gamma_t\left(\bm{a}_t\tilde{x}_t - \bm{d}\right) \\
    \bm{p}_{t+1} & = \bm{p}_{t+1} \vee \bm{0}
\end{align*}
\State If constraints permit, set $x_t = \tilde{x}_t;$ otherwise set $x_t=0.$
\EndFor
\State Output: $\bm{x} = (x_1,...,x_n)$
\end{algorithmic}
\end{algorithm}

\subsection{Nonstationary Algorithm}

We consider another variant of the algorithm that takes into account the resource consumption while doing the subgradient descent. The intuition is similar to the action-history-dependent algorithm in \citep{li2019online}.
If excessive resources are consumed in the early periods, the remaining resource $\bm{b}_t$ will shrink, and this nonstationary algorithm will accordingly push up the dual price and be more inclined to reject an order. On the contrary, if we happen to reject a lot orders at the beginning and it results in too much remaining resources, the algorithm will lower down the dual price so as to accept more orders in the future. In numerical experiments, this nonstationary algorithm performs better, but it is still on the same order of regret and constraint violation as Algorithm \ref{alg:SOA}. The open question is if there exists a first-order algorithm that is free of re-solving any linear programs and could achieve $O(\log n)$ regret, possibly under stronger statistical assumptions.

\begin{algorithm}[ht!]
\caption{Simple Nonstationary Algorithm}
\label{alg:SNA}
\begin{algorithmic}[1]
\State Input: $d$
\State Initialize $\bm{p}_1 = \bm{0},$ $\bm{b}_0=\bm{b}$
\For {$t=1,..., n$}
\State Set 
$$x_t = \begin{cases}
1,& r_t >\bm{a}_t^\top \bm{p}_t \\
0,& r_t \le \bm{a}_t^\top \bm{p}_t 
\end{cases}$$
\State Update
$$\bm{b}_t = \bm{b}_{t-1} - \bm{a}_t x_t$$
\State Compute
\begin{align*}
    \bm{p}_{t+1} & = \bm{p}_t + \gamma_t\left(\bm{a}_tx_t - \frac{\Bb_{t}}{n-t}\right) \\
    \bm{p}_{t+1} & = \bm{p}_{t+1} \vee \bm{0}
\end{align*}
\EndFor
\State Output: $\bm{x} = (x_1,...,x_n)$
\end{algorithmic}
\end{algorithm}

\subsection{Step Size}

From the perspective of stochastic gradient descent, Algorithm \ref{alg:SOA} adopts step size $\gamma_t = 1/\sqrt{n}$. With the same analyses, it can be shown that the algorithm also works with step size $\gamma_t = 1/\sqrt{t}$. In the literature of stochastic optimization and online optimization, some algorithms \citep{moulines2011non, lacoste2012simpler} employed a step size of $1/t^\alpha$ for $\alpha \in [1/2,1]$ which commonly requires a strong convexity assumption. For the problem of online LP or integer LP, even if we can enforce a strong convexity assumption, it might be unrealistic to assume the knowledge of the strong convexity constant (necessary for a smaller step size) as a priori. There are also discussions on the usage of averaging method \citep{xiao2010dual, juditsky2014deterministic} or an adaptive approach for choosing the step size \citep{flammarion2015averaging, lei2019adaptivity} for stochastic gradient descent. We leave it as an  open question whether these designs will result in better online LP algorithms.

\section{Numerical Experiments}

The first experiment compares the performance of Algorithm \ref{alg:SOA}, Algorithm \ref{alg:SFA}, and Algorithm \ref{alg:SNA} in terms of regret and constraint violation. Algorithm \ref{alg:SOA} is implemented with two different choices of step size $\gamma_t$. In this experiment, $m=10,$ $a_{ij}$'s and $r_{j}$'s are sampled i.i.d. from Unif$[0,2]$. For each value of $n$, we run $100$ simulation trials and in each trial, $d_i$'s are sampled i.i.d. from Unif$[1/3,2/3].$ The average regret and constraint violation over all the simulation trials are shown in Figure \ref{exp1}. We plotted normalized regret and constraint violation, which is absolute regret and constraint violation divided the optimal objective value and the L$_2$ norm of the constraint capacity, respectively.
We observe that the step size of $1/\sqrt{n}$ results in larger constraint violation but smaller regret compared with the step size of $1/\sqrt{t}$. This is because for the step size of $1/\sqrt{n}$, the updating of the dual vector $\p_t$ is slower. Consequently, more requests will be accepted at early stage and the constraint violation is larger in the end. The non-stationary algorithm (Algorithm \ref{alg:SNA}) performs better than the simple algorithm (Algorithm \ref{alg:SOA}) with $\gamma_t=1/\sqrt{t}$. Also, the feasible algorithm (Algorithm \ref{alg:SFA}) guarantees feasibility, i.e. zero constraint violation; it produces slightly larger regret, but the regret is still on the order of $\sqrt{n}$. 

\begin{figure}[ht!]
\begin{subfigure}{.5\textwidth}
  \centering
  \includegraphics[width=.93\linewidth]{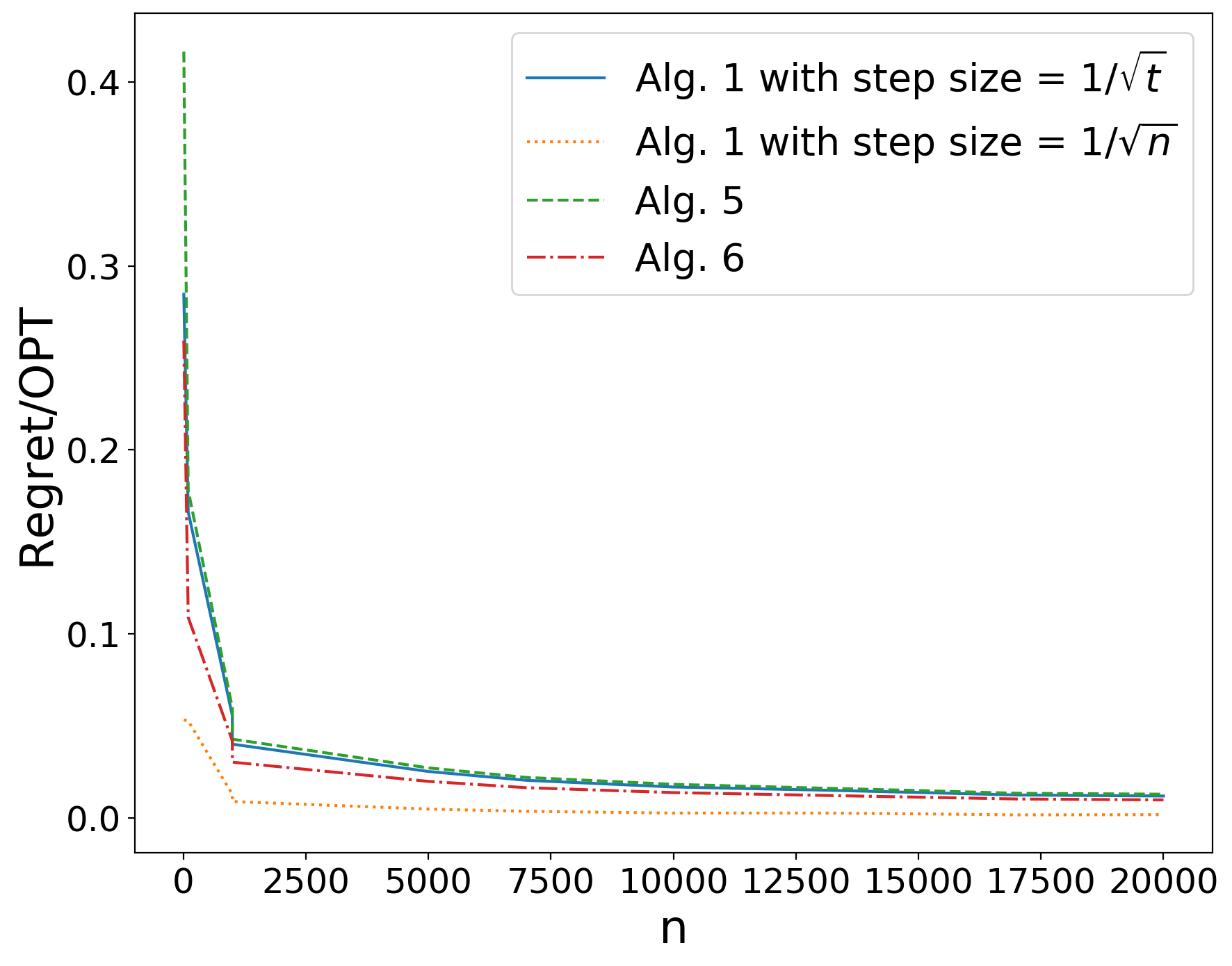}
  \caption{Regret}
  \label{exp1a}
\end{subfigure}%
\begin{subfigure}{.5\textwidth}
  \centering
  \includegraphics[width=.98\linewidth]{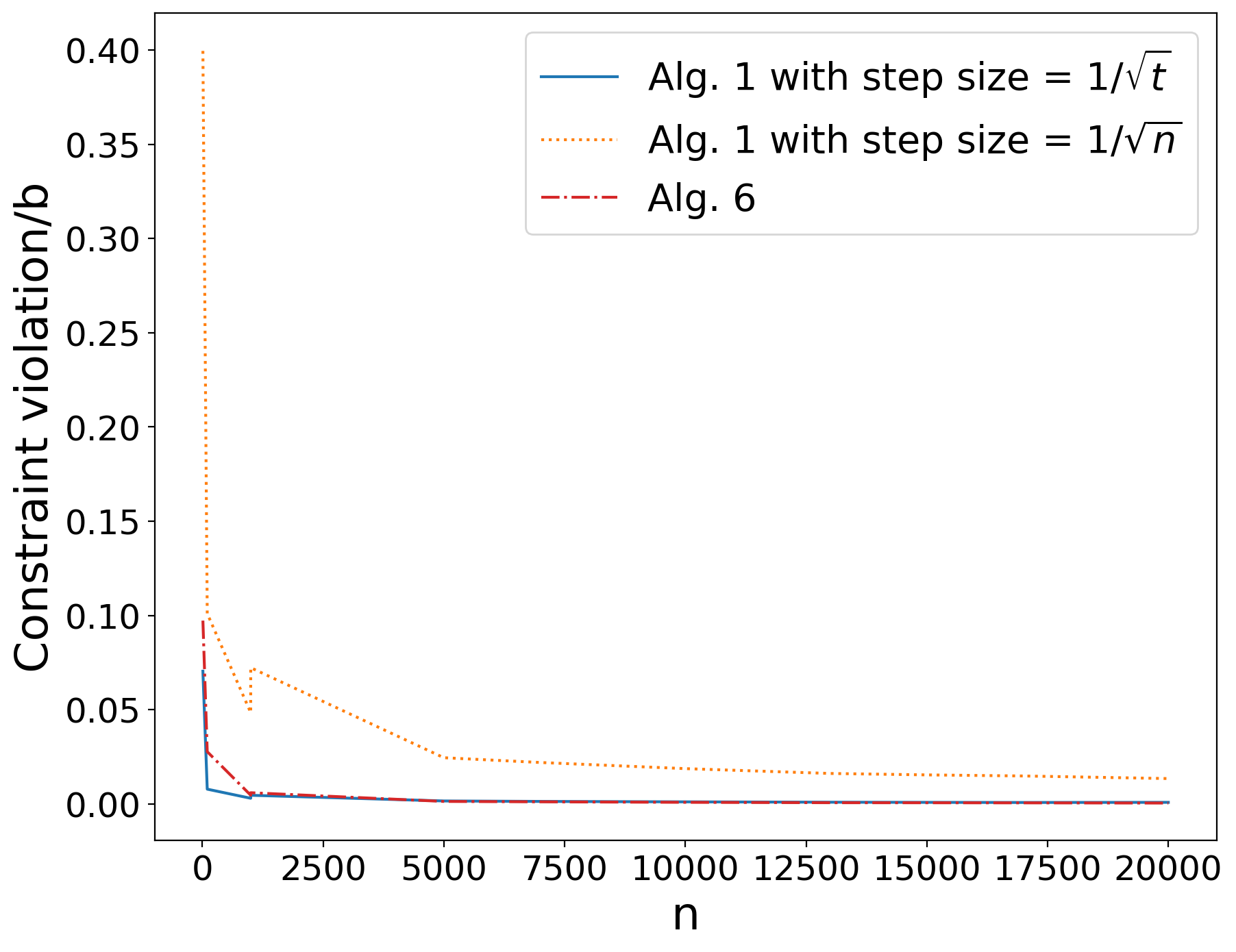}
  \caption{Constraint Violation}
  \label{fig:exp1b}
    \end{subfigure}
    \caption{Experiment with Uniform i.i.d. input}
    \label{exp1}
\end{figure}

In the second experiment (Figure \ref{exp2}), we compare the three algorithms in a setting where the boundedness of the support of distribution $\mathcal{P}$ is violated. Specifically, $m=10,$ $a_{ij}$'s are generated i.i.d. from $\mathcal{N}(1,1)$ and $r_j = \sum_{i=1}^m a_{ij} - \epsilon_j$ where $\epsilon_j\sim \text{Unif}(0,m).$ For each value of $n$, we run $100$ simulation trials, and in each trial, $d_i$'s are sampled i.i.d. from Unif$[1/3,2/3].$ In this experiment, the regret performances of Algorithm \ref{alg:SOA} (with step size of $1/\sqrt{t}$) and Algorithm \ref{alg:SNA} are quite close to each other, while Algorithm \ref{alg:SNA} still performs better in respect with constraint violation. The feasible algorithm (Algorithm \ref{alg:SFA}) still achieves regret on the order of $\sqrt{n}.$ Note that our theoretical results, also all the previous literature on online LP problem, rely on the boundedness assumption for the LP input. An open question is how to generalize these analyses to the case when the distribution $\mathcal{P}$ has an unbounded support.

\begin{figure}[ht!]
\begin{subfigure}{.5\textwidth}
  \centering
  \includegraphics[width=.93\linewidth]{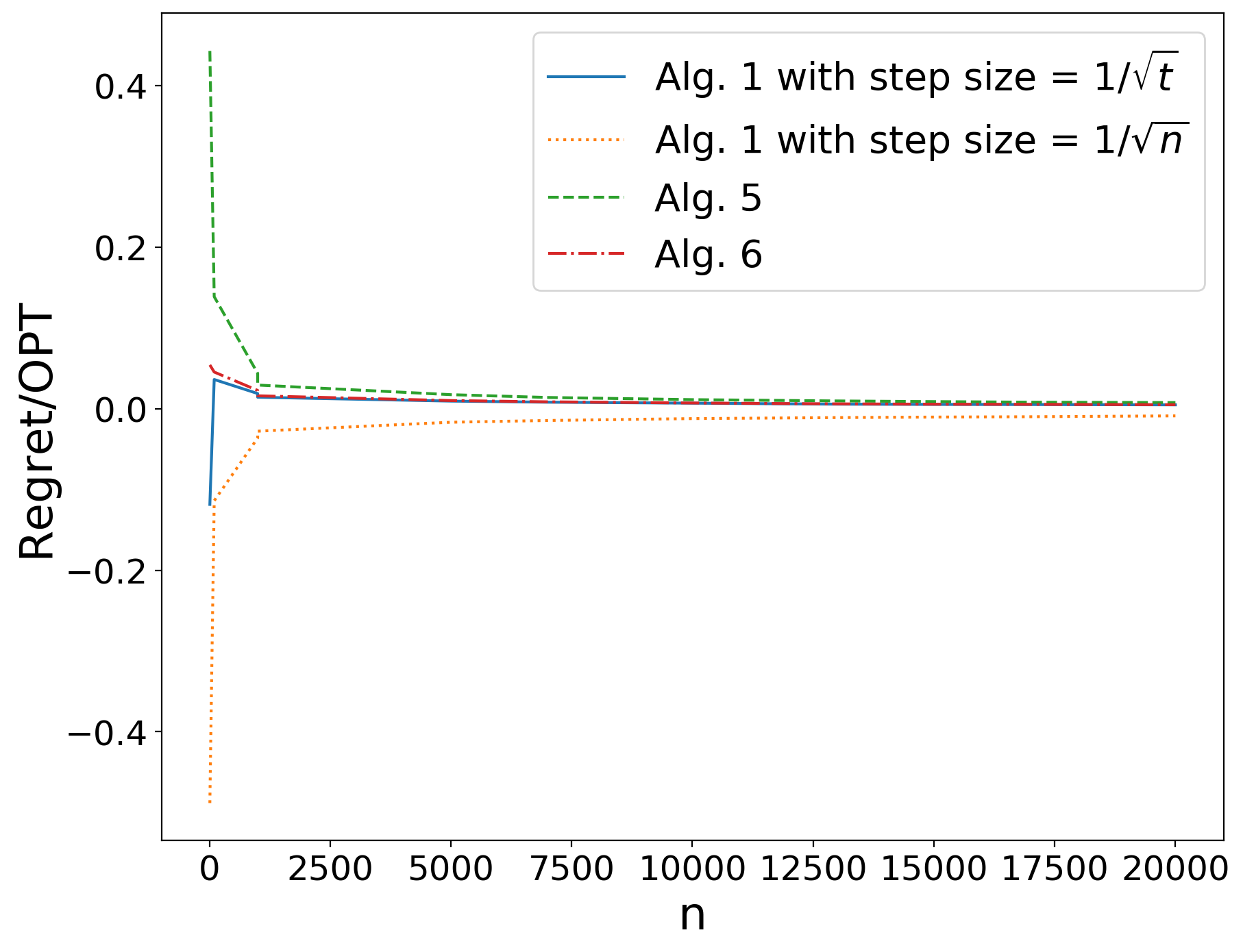}
  \caption{Regret}
  \label{exp2a}
\end{subfigure}%
\begin{subfigure}{.5\textwidth}
  \centering
  \includegraphics[width=.98\linewidth]{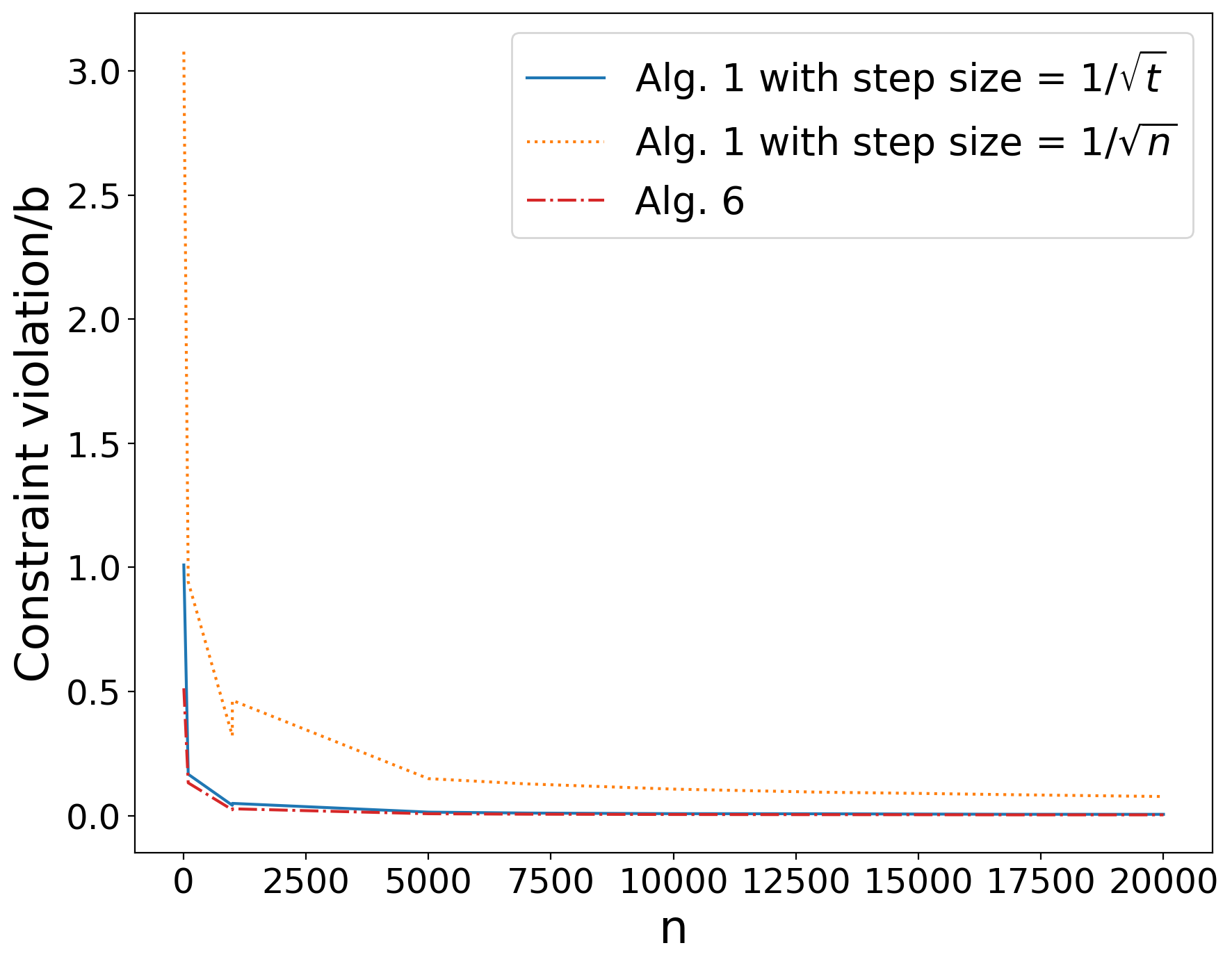}
  \caption{Constraint Violation}
  \label{fig:exp2b}
    \end{subfigure}
    \caption{Experiment with Gaussian i.i.d. input}
    \label{exp2}
\end{figure}

The third experiment (Figure \ref{exp3}) presents a negative result on all three algorithms. Specifically, $a_{ij}$'s are generated i.i.d. from truncated Cauchy$(1,1)$ (with different thresholds) and $r_j = \sum_{i=1}^m a_{ij} - \epsilon_j$ where $\epsilon_j\sim \text{Unif}(0,m).$ As before, for each value of $n$, we run $100$ simulation trials, and in each trial, $d_i$'s are sampled i.i.d. from Unif$[1/3,2/3].$ We observe that the performance becomes unstable as the truncation threshold goes up. The phenomenon is consistent with the previous analysis that the algorithm regret is positively affected by the upper bound on $a_{ij}$'s and $r_j$'s. The empirical finding suggests that a light-tail distribution is probably necessary for an online LP algorithm to succeed.

\begin{figure}[ht!]
\begin{subfigure}{.5\textwidth}
  \centering
  \includegraphics[width=.93\linewidth]{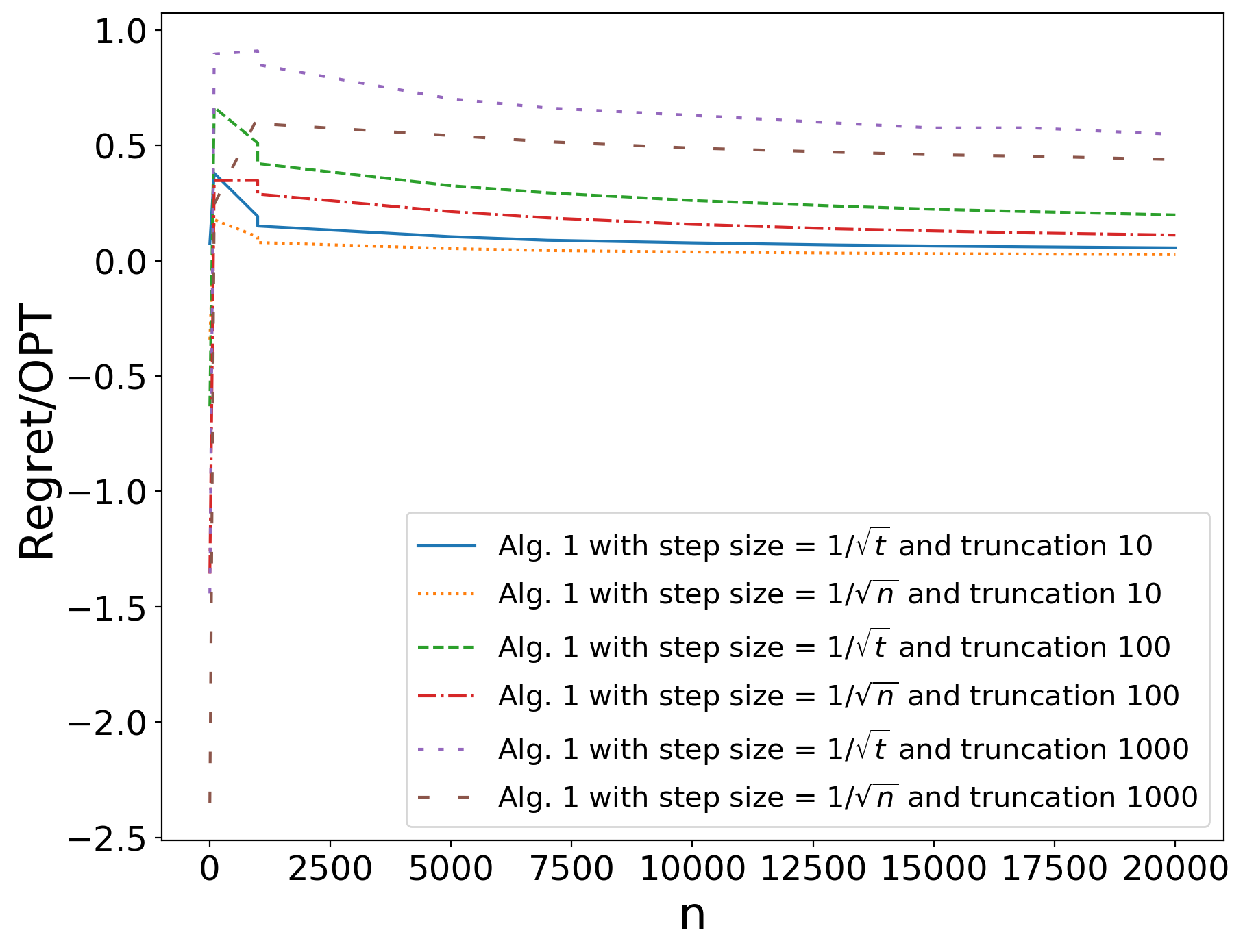}
  \caption{Regret}
  \label{exp3a}
\end{subfigure}%
\begin{subfigure}{.5\textwidth}
  \centering
  \includegraphics[width=.98\linewidth]{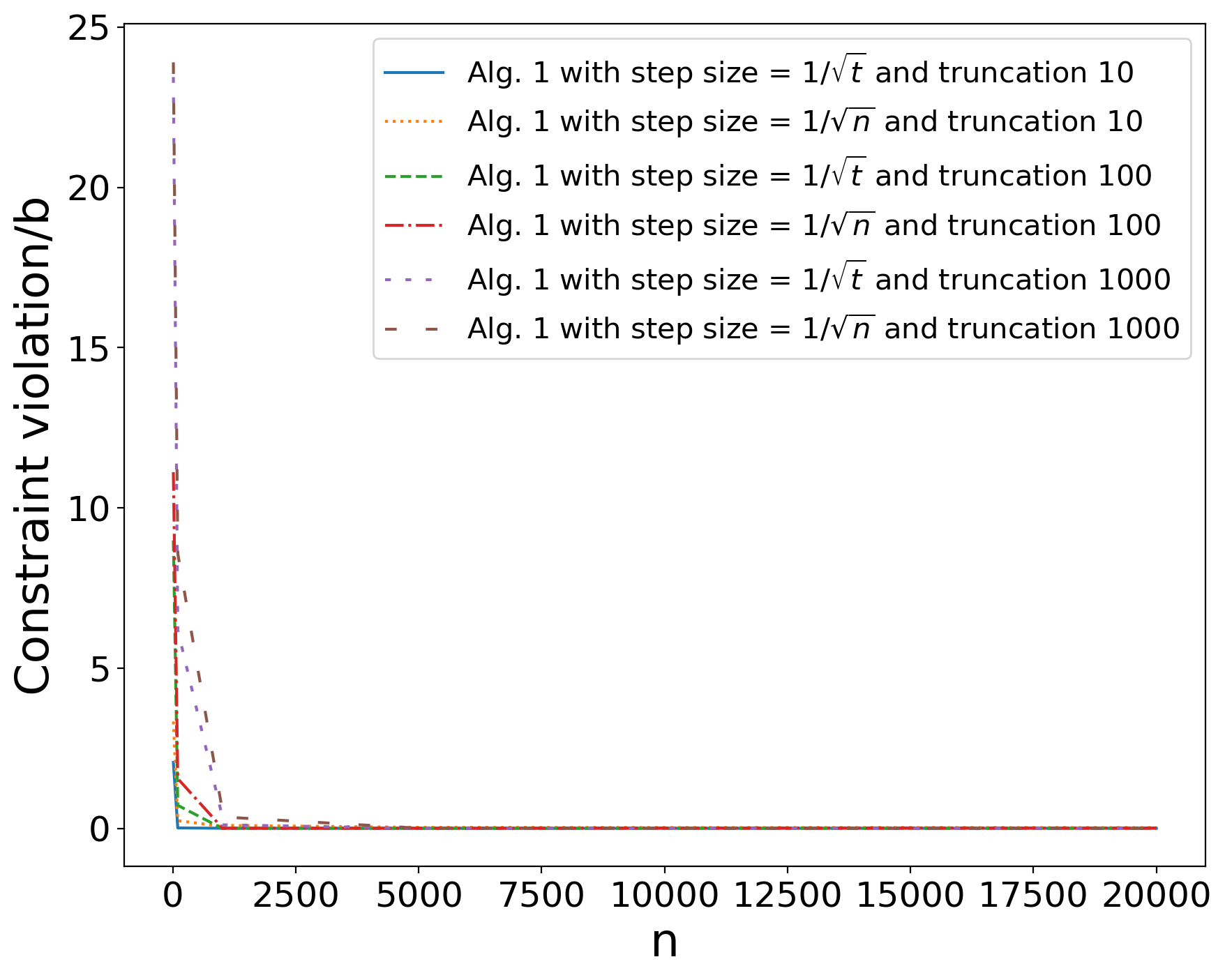}
  \caption{Constraint Violation}
  \label{fig:exp3b}
    \end{subfigure}
    \caption{Experiment with Cauchy i.i.d. input}
    \label{exp3}
\end{figure}

Figure \ref{exp5} presents the algorithm performance under the random permutation model. We first generate four groups of data with equal size from four different distributions and then combine these groups as the LP input: (i) $a_{ij}$'s are generated from Unif$[0,2]$; (ii) $a_{ij}$ are generated from $\mathcal{N}(1,1)$; (iii) $a_{ij}$ are generated from $\mathcal{N}(0,1)$; (iv)  $a_{ij}$ are generated from uniform distribution on $\{-1,1,3\}$. $r_j$'s for all four groups are generated from Unif$[0,1]$. Note this data set can not be generated from any distribution in the stochastic input model. For each value of $n$, we run $100$ simulation trials, and in each trial, $d_i$'s are sampled i.i.d. from Unif$[1/3,2/3].$ In this experiment, we observe that the algorithms all achieves $O(\sqrt{n})$ regret except for Algorithm \ref{alg:SOA} with step size $1/\sqrt{n}$. The step size results in a negative regret but much larger constraint violation than the other algorithms. All the presented algorithms achieve $O(\sqrt{n})$ constraint violation.

\begin{figure}[ht!]
\begin{subfigure}{.5\textwidth}
  \centering
  \includegraphics[width=.93\linewidth]{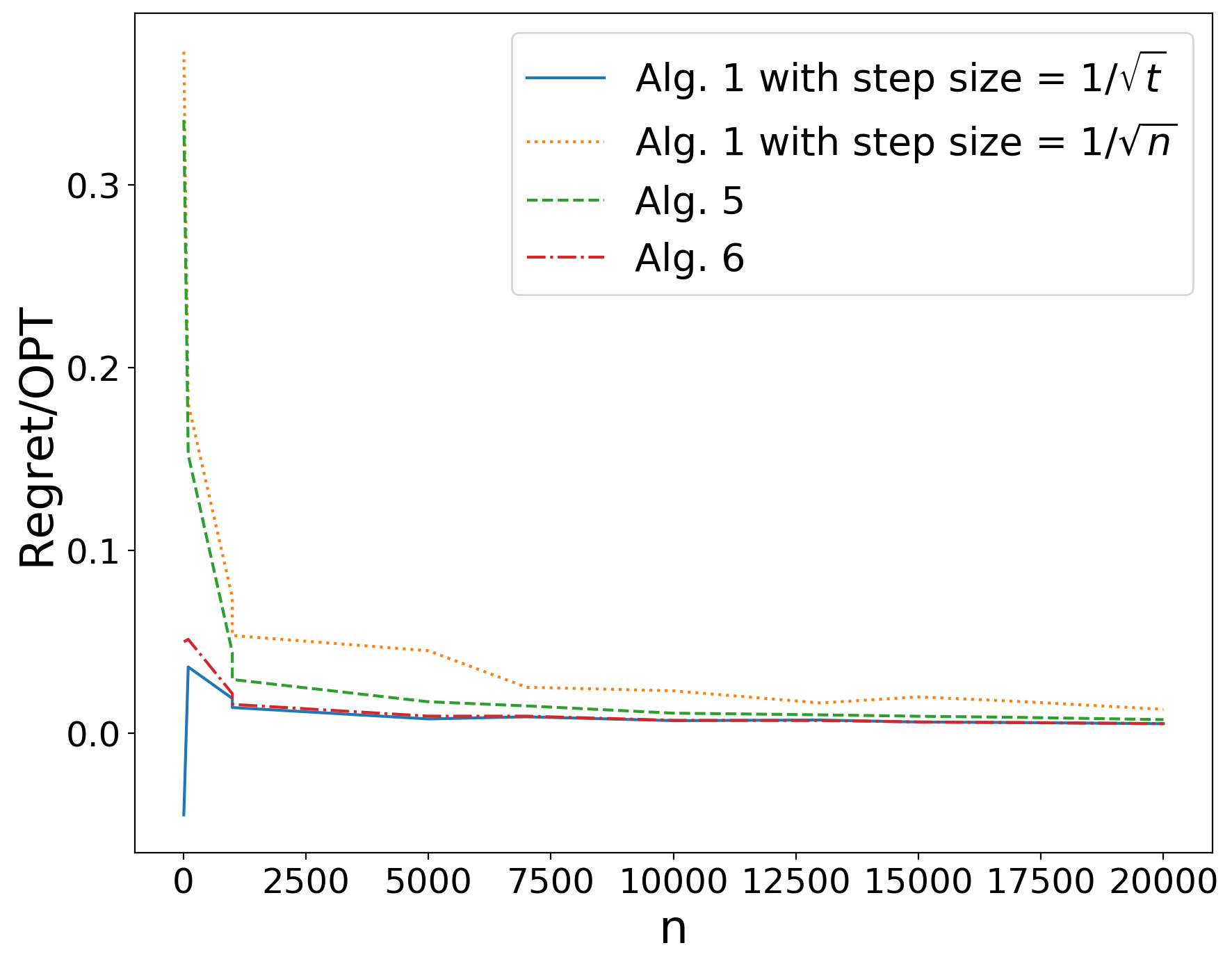}
  \caption{Regret}
  \label{exp5a}
\end{subfigure}%
\begin{subfigure}{.5\textwidth}
  \centering
  \includegraphics[width=.98\linewidth]{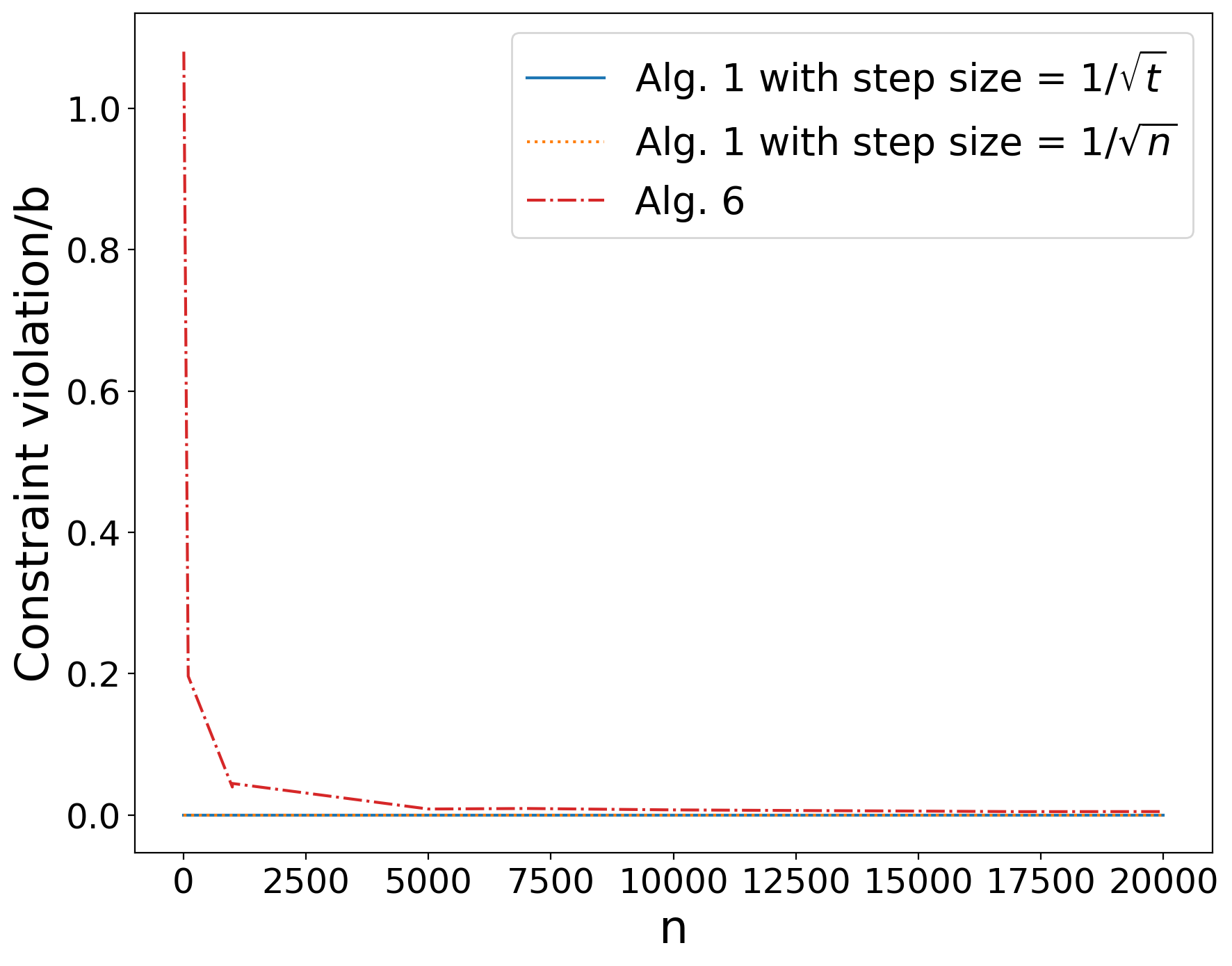}
  \caption{Constraint Violation}
  \label{fig:exp5b}
    \end{subfigure}
    \caption{Experiment with randomly permuted input}
    \label{exp5}
\end{figure}

In addition, we conduct two groups of experiments to illustrate the computational aspect of our algorithms. The algorithms are implemented in Matlab and on a PC with Intel Core i7-9700K Processor. We use the Gurobi solver - one of the state-of-the-art LP and integer LP solvers - as benchmark and to compute the optimal solution. We emphasize that the code for our algorithms, unlike the Gurobi solver, is not fully optimized in the experiment, so there is still great room for improving the computational efficiency of our algorithms. Table \ref{tabRP} presents the experiments under the worst-case example for online LP problem in \citep{agrawal2014dynamic}. The example is constructed under the random permutation model, and it provides a lower bound on the right-hand-side of the constraint for the existence of an $(1-\epsilon)$-competitiveness online LP algorithm. In this sense, it represents one of the most challenging problem instances under the random permutation model. Gurobi solves the binary LP problem with $1\%$ MIPGap and all competitiveness ratios in the table are reported against the optimal objective value of the relaxed LP's optimal objective. Gurobi computes the optimal solution in an offline fashion while the other three algorithms are online.  The numbers are computed based on an average over 100 different problem instances. Algorithm \ref{alg:SOA} is implemented with both step size of $\frac{1}{\sqrt{t}}$ and $\frac{1}{\sqrt{n}}.$ We stop the algorithm and set the rest decision variables to be zero when any of the constraint is exhausted. Though Algorithm \ref{alg:DLA} and Algorithm \ref{alg:PBD} have provably smaller regret bounds, our algorithm provides better empirical performance. For small value of $m$, our algorithm outputs a near-optimal solution within much less time than Gurobi. Also, the competitiveness ratio decreases as $m$ goes larger, which is consistent with the $O(m)$ term in the regret bound of our algorithm.
 
\begin{table}[ht!]
    \centering
    \begin{tabular}{cc|c|c|c|c|c}
    \toprule
         && Gurobi & Alg. \ref{alg:SOA} ($\frac{1}{\sqrt{t}}$) & Alg. \ref{alg:SOA}  ($\frac{1}{\sqrt{n}}$) & Alg. \ref{alg:DLA} & Alg. \ref{alg:PBD}\\ 
         \midrule
   \multirow{2}{*}{\scriptsize $m=8, n=10^3$}  & CPU time  & 0.082 & 0.023 &0.015 & 126.684 & 126.696\\
   & Cmpt. Ratio &100.0\% & 99.1\% & 99.7\% & 89.8\% & 97.9\%\\
\midrule
  \multirow{2}{*}{\scriptsize $m=128, n=10^4$} &  CPU time  & 0.408 & 0.141  & 0.138 & 2338.149 & 2338.285\\ 
   &Cmpt. Ratio & 100.0\% & 99.3\% & 98.8\% & 94.8\% & 97.7\% \\
\midrule
    \multirow{2}{*}{\scriptsize $m=1024, n=10^5$} &  CPU time   &  52.496  & 3.479 & 3.270 & >3000 &  >3000 \\ 
  & Cmpt. Ratio & 100.0\% & 94.6\%& 98.7\%  \\
\midrule
    \multirow{2}{*}{\scriptsize $m=4096, n=10^5$} &  CPU time   &  114.96  & 27.093 & 32.020 & >3000 &  >3000 \\ 
  & Cmpt. Ratio & 99.6\% & 73.1\%& 83.3\%  \\  
\midrule
    \multirow{2}{*}{\scriptsize $m=4096, n=2\times10^5$} &  CPU time   &  254.243  & 57.541 & 53.887 & >3000 &  >3000 \\ 
  & Cmpt. Ratio & 99.7\% & 80.0\%& 89.1\%  \\  
   \bottomrule
    \end{tabular}
    \caption{Performance under worst-case example in \citep{agrawal2014dynamic}}
    \label{tabRP}
\end{table}

In Table \ref{tab1}, we test the performance of our algorithm on some Multi-knapsack benchmark problems \citep{chu1998genetic, drake2016case}. As the last experiment, algorithm \ref{alg:SOA} is implemented with both step size of $\frac{1}{\sqrt{t}}$ and $\frac{1}{\sqrt{n}},$ and the experiment setup is the same as the last experiment. The competitiveness ratios are reported against the relaxed LP's optimal objective value. The computational advantage of our simple online algorithm is significant. Like the last experiment, although Algorithm \ref{alg:DLA} and Algorithm \ref{alg:PBD} reduces the regret upper bound by a factor of $\sqrt{m}$, the advantage of these two algorithms is not evident in practice. It deserves more efforts to understand whether the current regret bound for Algorithm \ref{alg:SOA} is tight. Also, it merits more study on how to design online algorithms that works effectively for a large-$m$-and-small-$n$ regime.

\begin{table}[ht!]
    \centering
    \begin{tabular}{cc|c|c|c|c|c}
    \toprule
         && Gurobi & Alg. \ref{alg:SOA} ($\frac{1}{\sqrt{t}}$) & Alg. \ref{alg:SOA}  ($\frac{1}{\sqrt{n}}$) & Alg. \ref{alg:DLA} & Alg. \ref{alg:PBD} \\ 
         \midrule
   \multirow{2}{*}{\scriptsize $m=5, n=500$}  & time  & 0.116 & 0.006 &0.006 &70&70 \\
   & Cmpt. Ratio &99.6\% & 92.3\% & 75.05\% &91.8\%&91.5\%\\
\midrule
  \multirow{2}{*}{\scriptsize $m=10, n=500$} &  time  & 0.136 & 0.006  & 0.006 & 132 & 132 \\ 
   &Cmpt. Ratio & 99.6\% & 91.8\% & 80.9\%  & 91.6\% & 90.7\%\\
\midrule
    \multirow{2}{*}{\scriptsize $m=30, n=500$} &   time  &  95.2 & 0.006 & 0.005 & 134 & 133\\ 
  & Cmpt. Ratio & 99.4\% & 91.5\%& 89.4\% & 89.1\% & 90.4\% \\
\midrule
    \multirow{2}{*}{\scriptsize $m=10^3, n=10^5$} &   time  & 857 & 2.711 & 2.679 & >3000 & >3000 \\ 
  & Cmpt. Ratio & 99.8\% & 94.9\%& 97.3\% &  &  \\
\midrule
    \multirow{2}{*}{\scriptsize $m=3\times10^3, n=10^5$} &   time  & 1069 & 16.63 & 16.61 & >3000 & >3000 \\ 
  & Cmpt. Ratio & 94.0\% & 82.3\%& 86.8\% &  &  \\
\midrule
    \multirow{2}{*}{\scriptsize $m=3\times10^3, n=2\times10^5$} &   time  & 2799 & 40.21 & 34.84 & >3000 &  >3000 \\ 
  & Cmpt. Ratio & 93.8\% & 84.7\%& 88.4\% &  &  \\
   \bottomrule
    \end{tabular}
    \caption{Multi-knapsack benchmark problem}
    \label{tab1}
\end{table}

\section*{Acknowledgement}

We thank all seminar participants at Stanford, NYU Stern, Columbia DRO, Chicago Booth, and Imperial College Business School for helpful discussions and comments.

\bibliographystyle{informs2014} 
\bibliography{sample.bib} 

\appendix
\section*{Appendix}

\renewcommand{\thesubsection}{A\arabic{subsection}}

\subsection{Concentration Inequalities under Random Permutation}

\begin{lemma}
Let $U_1 , ... , U_n$ be a random sample without replacement from the real numbers $\{c_1,...,c_N\}$. Then for every $s > 0$,
\begin{equation*}
    \begin{split}
        \mathbb{P}(|\bar{U_n}-\bar{c}_N|\geq s)
        \leq
        \left\{
        \begin{matrix*}[l]
            2\exp\left(-\frac{2ns^2}{\Delta_N^2}\right) &\text{(Hoeffding)},\\
            2\exp\left(-\frac{2ns^2}{(1-(n-1)/N)\Delta_N^2}\right) &\text{(Serfling)},\\
            2\exp\left(-\frac{ns^2}{2\sigma^2_N+s\Delta_N}\right) &\text{(Hoeffding-Bernstein)},\\
            2\exp\left(-\frac{ns^2}{m\sigma_N^2}\right) &\text{if $N = mn$ (Massart)},
        \end{matrix*}
        \right.
    \end{split}
\end{equation*}
where $\bar{c}_N = \frac{1}{N}\sum\limits_{i=1}^{N}c_i$, $\sigma^2_N=\frac{1}{N}\sum\limits_{i=1}^{N}(c_i-\bar{c}_N)^2$ and 
$\Delta_N=\max\limits_{1\leq i \leq N}c_i-\min\limits_{1\leq i \leq N}c_i$.
\label{Hoeffding}
\end{lemma}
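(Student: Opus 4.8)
The plan is to recognize each of the four tail bounds as a known concentration inequality for sampling without replacement and to assemble them from the literature, carrying out only the reductions that are genuinely needed.

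For the Hoeffding line I would start from Hoeffding's reduction principle: for every continuous convex function $\phi$ one has $\E[\phi(\sum_{i=1}^n U_i)] \le \E[\phi(\sum_{i=1}^n V_i)]$, where $V_1,\dots,V_n$ are i.i.d.\ uniform draws from $\{c_1,\dots,c_N\}$ (i.e.\ sampling \emph{with} replacement). Taking $\phi(u)=e^{\lambda u}$, bounding the right-hand side by the classical moment-generating-function estimate for a sum of independent variables each supported on an interval of length at most $\Delta_N$, and optimizing over $\lambda>0$ gives the one-sided bound $\exp(-2ns^2/\Delta_N^2)$; a union bound over the two tails yields the stated two-sided inequality. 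For the Serfling version I would not redo the argument from scratch but invoke Serfling's refinement, which exploits the reverse-martingale structure of the partial sums $\sum_{i\le k} U_i$: each successive draw without replacement carries less fresh randomness, so the effective variance proxy picks up the shrinkage factor $1-(n-1)/N$. The Hoeffding--Bernstein form is the Bernstein-type analogue of the same martingale computation, with the conditional variances controlled by $\sigma_N^2$ and the increments by $\Delta_N$, and I would cite the corresponding result (e.g.\ of Bardenet and Maillard) directly. The Massart bound in the balanced case $N=mn$ follows from Massart's combinatorial estimate for the associated hypergeometric-type distribution in that regime.

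The only step that needs real work rather than bookkeeping is the passage from the with-replacement Hoeffding bound to the sharper without-replacement constants (the Serfling factor and the Bernstein form): these rest on setting up the reverse martingale of remaining-sum averages and bounding its increments, which I would take from the cited references rather than reprove here. Everything else is a direct specialization of standard statements to a sample of size $n$.
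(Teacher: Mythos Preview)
Your proposal is correct and in spirit matches the paper: these four inequalities are classical, and the paper's own ``proof'' is simply a single citation to Theorem~2.14.19 in van der Vaart and Wellner, \emph{Weak Convergence and Empirical Processes}. You go further than the paper by sketching the provenance of each line (Hoeffding's convexity reduction, Serfling's reverse-martingale refinement, the Bernstein-type variant, and Massart's combinatorial bound), which is more informative but not required here---the lemma is stated purely as a toolbox result.
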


\begin{proof}
See Theorem 2.14.19 in \cite{van1996weakconvergence}.
\end{proof}

\subsection{Proof of Lemma \ref{iidBound}}
\label{PFiidBound}

\begin{proof}
For $\bm{p}^*$, the optimal solution of (\ref{SP}), we have 
$$\underline{d}\|\bm{p}^*\|_1 \leq  \bm{d}^T\bm{p}^* \overset{(a)}{\leq} \E r\leq \bar{r},$$ 
where inequality (a) is due to that if otherwise, $\bm{p}^*$ cannot be the optimal solution because it will give a larger objective value of $f(\bm{p})$ than setting $\bm{p}=\bm{0}.$ Given the non-negativeness of $\bm{p}^*$,
we have $\|\bm{p}^*\|_2 \le \|\bm{p}^*\|_1.$ So we obtain the first inequality in the lemma.
    
For $\bm{p}_t$ specified by Algorithm \ref{alg:SOA}, we have, 
\begin{align*}
    \|\bm{p}_{t+1}\|^2_2 & \leq
            \left\|\bm{p}_{t}+\gamma_t\left(\bm{a}_tx_t-\bm{d}  \right)\right\|^2_2 \\
            & = \|\bm{p}_{t}\|_2^2 + \gamma_t^2 \|\bm{a}_tx_t - \bm{d}\|_2^2 + 2\gamma_t (\Aa_t x_t - \Dd)^\top \p_{t}\\
            & \le \|\bm{p}_{t}\|_2^2 + \gamma_t^2 m(\bar{a}+\bar{d})^2 + 2\gamma_t \Aa_t^\top \p_{t} x_t - 2\gamma_t\bm{d}^\top \bm{p}_{t}
\end{align*}
where the first inequality comes from the projection (into the non-negative orthant) step in the algorithm. Note that 
$$\bm{a}_t^\top\bm{p}_{t}x_t = \bm{a}_t^\top\bm{p}_{t}I(r_t>\bm{a}_t^\top\bm{p}_{t}) \le r_t \le \bar{r}.$$
Therefore,
\begin{align*}
  \|\bm{p}_{t+1}\|^2_2 & \le \|\bm{p}_{t}\|_2^2 + \gamma_t^2 m(\bar{a}+\bar{d})^2 + 2\gamma_t \bar{r} - 2\gamma_t\bm{d}^\top \bm{p}_{t},
\end{align*}
and it holds with probability $1$. 

Next, we establish that when $\|\bm{p}_{t}\|_2$ is large enough, then it must hold that $\|\bm{p}_{t+1}\|_2 \le \|\bm{p}_{t}\|_2$. Specifically, when $\|\bm{p}_{t}\|_2\geq\frac{2\bar{r}+m(\bar{a}+\bar{d})^2}{\underline{d}}$, we have 
\begin{align*}
   \|\p_{t+1}\|_2^2 - \|\p_t\|_2^2 & \le \gamma_t^2 m(\bar{a}+\bar{d})^2 + 2\gamma_t \bar{r} - 2\gamma_t\bm{d}^\top \bm{p}_{t} \\
   &\leq
  \gamma_t^2 m(\bar{a}+\bar{d})^2 + 2\gamma_t \bar{r} - 2\gamma_t\underline{d}\|\bm{p}_{t}\|_1\\
   &\leq
  \gamma_t^2 m(\bar{a}+\bar{d})^2 + 2\gamma_t \bar{r} - 2\gamma_t\underline{d}\|\bm{p}_{t}\|_2\\
   &\leq0
\end{align*}
when $\gamma_t\le 1.$
On the other hand, when $\|\bm{p}_{t}\|_2\leq\frac{2\bar{r}+m(\bar{a}+\bar{d})^2}{\underline{d}}$, 
\begin{align*}
    \|\bm{p}_{t+1}\|_2 & \leq  \left\|\bm{p}_{t}+\gamma_t\left(\bm{a}_tx_t-\bm{d}  \right)\right\|_2 \\
     & \overset{(b)}{\leq}
    \|\bm{p}_{t}\|_2
    +
    \gamma_t\|\bm{a}_tx_t-\bm{d}\|_2 \\
&     \leq
    {\frac{2\bar{r}+m(\bar{a}+\bar{d})^2}{\underline{d}}}
    +
    m(\bar{a}+\bar{d})
\end{align*}
where (b) comes from the triangle inequality of the norm.

Combining these two scenarios with the fact that $\p_1 = \bm{0}$, we have 
\begin{align*}
    \|\bm{p}_{t}\|_2
    \leq
    {\frac{2\bar{r}+m(\bar{a}+\bar{d})^2}{\underline{d}}}
    +
    m(\bar{a}+\bar{d})
\end{align*}
for $t=1,...,n$ with probability 1.

\end{proof}

\subsection{Proof of Theorem \ref{theoiid}}
\label{PFtheoiid}

\begin{proof}
First, the primal optimal objective value is no greater than the dual objective with $\bm{p}=\bm{p}^*.$ Specifically,
\begin{align*}
    R_n^* = \text{P-LP} & = \text{D-LP}\\
    & \le n\Dd^\top \p^* + \sum_{j=1}^n \left(r_j-\bm{a}_j^\top \bm{p}^*\right)^+. 
\end{align*}
Taking expectation on both sides,
\begin{align*}
   \E\left[R_n^*\right] & \le\E\left[ n\Dd^\top \p^* + \sum_{t=1}^n \left(r_t-\bm{a}_t^\top \bm{p}^*\right)^+\right] \\
   & \le nf(\bm{p}^*).
\end{align*}
Thus, the optimal objective value of the stochastic program (by a factor of $n$) is an upper bound for the expected value of the primal optimal objective. Hence
\begin{align*}
    \E[R_n^* - R_n] & \le nf(\bm{p}^*) - \sum_{j=1}^n \E \left[r_tI(r_t > \Aa_t^\top \p_{t})\right] \\
    & \le \sum_{t=1}^n \E\left[f(\bm{p}_t)\right] - \sum_{t=1}^n \E \left[r_tI(r_t > \Aa_t^\top \p_{t})\right] \\
    & \le \sum_{t=1}^n \E \left[\bm{d}^\top\bm{p}_t + \left(r_t-\bm{a}_t^\top\bm{p}_t\right)^+- r_tI(r_t > \Aa_t^\top \p_{t})\right] \\
    & = \sum_{t=1}^n \E \left[\left(\bm{d}^\top- \Aa_tI(r_t > \Aa_t^\top \p_{t})\right)^\top \bm{p}_t\right].
\end{align*}
where the expectation is taken with respect to $(r_t, \Aa_t)$'s. In above, the second line comes from the optimality of $\bm{p}^*$, while the third line is valid because of the independence between $\bm{p}_t$ and $(r_t, \bm{a}_t).$

The importance of the above inequality lies in that it relates and represents the primal optimality gap in the dual prices $\bm{p}_t$ -- which is the core of Algorithm \ref{alg:SOA}. 
From the updating formula in Algorithm \ref{alg:SOA}, we know
\begin{align*}
    \|\bm{p}_{t+1}\|_2^2 & \le \|\p_{t}\|_2^2 - \frac{2}{\sqrt{n}}\left(\bm{d}- \Aa_tI(r_t > \Aa_t^\top \p_{t})\right)^\top \bm{p}_t + \frac{1}{n} \left\|\bm{d}- \Aa_tI(r_t > \Aa_t^\top \p_{t})\right\|^2_2\\
    & \le \|\p_{t}\|_2^2 - \frac{2}{\sqrt{n}}\left(\bm{d}- \Aa_tI(r_t > \Aa_t^\top \p_{n})\right)^\top \bm{p}_t + \frac{m(\bar{a}+\bar{d})^2}{n}.
\end{align*} 
Moving the cross-term to the right-hand-side, we have
\begin{align*}
    \sum_{t=1}^n \left(\bm{d}- \Aa_tI(r_t > \Aa_t^\top \p_{t})\right)^\top \bm{p}_t & \le \sum_{t=1}^n \left(\sqrt{n}\|\p_t\|_2^2 - \sqrt{n}\|\p_{t+1}\|_2^2 + \frac{m(\bar{a}+\bar{d})^2}{\sqrt{n}}\right)\\
    & \le 
    m(\bar{a}+\bar{d})^2\sqrt{n}.
\end{align*}
Consequently,
$$\E[R_n^* - R_n] \le   m(\bar{a}+\bar{d})^2\sqrt{n}$$
hold for all $n$ and distribution $\mathcal{P}\in \Xi.$

For the constraint violation, if we revisit the updating formula, we have
$$\bm{p}_{t+1} \ge \bm{p}_t + \frac{1}{\sqrt{n}}\left(\bm{a}_tx_t-\bm{d}\right)$$
where the inequality is elementwise. Therefore, 
\begin{align*}
\sum_{t=1}^n \bm{a}_tx_t & \le n\bm{d} + \sum_{t=1}^n \sqrt{n}(\bm{p}_{t+1}-\bm{p}_{t}) \\
& \le \bm{b} + \sqrt{n}\bm{p}_{n+1} 
\end{align*}
In the second line, we remove the term involve $\p_1$ with the algorithm specifying $\p_1=\bm{0}.$ Then with Lemma \ref{iidBound}, we have 
$$ \E \left[v(\bm{x})\right]= \E \left[ \|\left(\bm{A}\bm{x}-\bm{b}\right)^+\|_2\right] \le
\sqrt{n}\E \|\bm{p}_{n+1}\|_2  \leq \left({\frac{2\bar{r}+m(\bar{a}+\bar{d})^2}{\underline{d}}} + m(\bar{a}+\bar{d})\right) \sqrt{n}.$$
\end{proof}

\subsection{Proof of Proposition \ref{importantLemma}}
\label{PFimportant}
\begin{proof}
Define SLP$(s, \bm{b}_0)$ as the following LP
\begin{align*}
   \max \ \ & \sum_{j=1}^s r_jx_j   \\
    \text{s.t. }\ & \sum_{j=1}^s a_{ij}x_j \le \frac{sb_i}{n} + b_{0i} \\
    & 0 \le x_j \le 1\ \text{ for } j=1,...,s.\nonumber
\end{align*}
where $\bm{b}_0 = (b_{01},...,b_{0m})$ denotes the constraint relaxation quantity for the scaled LP. Denote the optimal objective value of SLP$(s, \bm{b}_0)$ as $R^*(s, \bm{b_0}).$ Also, denote $\bm{x}(\bm{p}) = (x_1(\bm{p}),...,x_n(\bm{p}))$ and $x_j(\bm{p})=I(r_j>\bm{a}_j^\top\bm{p}).$ It denotes the decision variables we obtain with a dual price $\bm{p}.$

We prove the following three results:
\begin{itemize}
    \item[(i)] The following bounds hold for $R_n^*,$ 
    $$\sum_{j=1}^n r_jx_j(\bm{p}_n^*) \le R_n^* \le \sum_{j=1}^n r_jx_j(\bm{p}_n^*) + m\bar{r}.$$
    \item[(ii)] When $s\ge \max\{16\bar{a}^2,\exp{\{16\bar{a}^2\}},e\},$ then the optimal dual solution $\bm{p}_n^*$ is a feasible solution to SLP$\left(s, \frac{\log s}{\sqrt{s}}\bm{1}\right)$ with probability no less than $1-\frac{m}{s}$.
    \item[(iii)] The following inequality holds for the optimal objective values to the scaled LP and its relaxation
    $$R_s^* \ge R^*\left(s, \frac{\log s}{\sqrt{s}} \bm{1} \right) - \frac{\bar{r}\sqrt{s}\log s }{\underline{d}}.$$
\end{itemize}

\textbf{For part (i)}, this inequality replace the optimal value with bounds containing the objective values obtained by adopting optimal dual solution. The left hand side of the inequality comes from the complementarity condition while the right hand side can be shown from Lemma \ref{gp}.

\textbf{For part (ii)}, the motivation to introduce a relaxed form of the scaled LP is to ensure the feasibility of $\bm{p}_n^*$. The key idea for the proof is to utilize the feasibility of $\bm{p}_n^*$ for (\ref{eqn:P-LP}). To see that, let $\alpha_{ij}=a_{ij}I(r_j>\bm{a}_j^T\bm{p}^*)$ and 
    \begin{equation}
        \begin{split}
            c_\alpha
            & =
         \max\limits_{i,j} \alpha_{ij}
            -
         \min\limits_{i,j}\alpha_{ij}\leq 2\bar{a},\\
            \bar{\alpha}_i
            &=
         \frac{1}{n}\sum\limits_{j=1}^{n} \alpha_{ij}
         =\frac{1}{n}\sum\limits_{j=1}^{n} a_{ij}x_t(\bm{p}_n^*)
            \leq 
            d_i, \label{ineqP}\\
            \sigma^2_i
            &=
          \frac{1}{n}\sum\limits_{j=1}^{n}(\alpha_{ij}-\bar{\alpha}_i)^2
            \leq 
            4\bar{a}^2. 
        \end{split}
    \end{equation}
 Here the first and third inequality comes from the bounds on $a_{ij}$'s while the second one comes from the feasibility of the optimal solution for (\ref{eqn:P-LP}). 

    Then, when $k>\max\{16\bar{a}^2,\exp{\{16\bar{a}^2\}},e\}$, by applying Hoeffding-Bernstein’s Inequality 
    \begin{equation*}
        \begin{split}
            \mathbb{P}
            \left(
             \sum\limits_{j=1}^k \alpha_{ij}
                -
                kd_i
                \geq
                \sqrt{k}\log k
            \right)
            & \overset{(e)}{\leq}
            \mathbb{P}
            \left(
             \sum\limits_{j=1}^k \alpha_{ij}
                -
                k\bar{\alpha}_i
                \geq
                \sqrt{k}\log k
            \right)\\
            &\overset{(f)}{\leq}
            \exp\left(
         -\frac{k\log^2 k}{8k\bar{a}^2+2\bar{a}\sqrt{k}\log k}
            \right)\\
            &\overset{(g)}{\leq}
            \frac{1}{k}
        \end{split}
    \end{equation*}
    for $i=1,...,m$.  Here inequality (e) comes from (\ref{ineqP}), (f) comes from applying Lemma \ref{Hoeffding}, and (g) holds when $s>\max\{16\bar{a}^2,\exp{\{16\bar{a}^2\}},e\}$.
    
    Let event $$E_i=\left\{\sum_{j=1}^s \alpha_{ij}-sd_i<\sqrt{s}\log s\right\}$$
    and $E=\bigcap\limits_{i=1}^{m}E_i$. The above derivation tells $\prob(E_i) \ge 1-\frac{1}{s}$ By applying union bound, we obtain $\mathbb{P}(E)\geq 1-\frac{m}{s}$ and it completes the proof of part (ii).

  \textbf{For part(iii)}, denote the optimal solution to SLP$\left(s, \frac{\log s}{\sqrt{s}}\bm{1}\right)$ as $\tilde{\bm{p}}_s.$    
    \begin{equation*}
        \begin{split}
            R^*\left(s, \frac{\log s}{\sqrt{s}} \bm{1} \right)
            &=
         s\left(\bm{d}+\frac{\log s}{\sqrt{s}}\bm{1}\right)^\top\tilde{\bm{p}}^*_{s}
            +
            \sum_{j=1}^s\left(r_j-\bm{a}_j^\top\tilde{\bm{p}}^*_{s}\right)^{+}\\
            &\leq
             s\left(\bm{d}+\frac{\log s}{\sqrt{s}}\bm{1}\right)^\top{\bm{p}}^*_{s}
            +
          \sum_{j=1}^s\left(r_j-\bm{a}_j^\top{\bm{p}}^*_{s}\right)^{+} \\
            &\leq
          \frac{\bar{r}\sqrt{s}\log s}{\underline{d}}
            +
          R_s^*.
        \end{split}
    \end{equation*}
    where the first inequality comes from dual optimality of $\tilde{\bm{p}}_s^*$ and the second inequality comes from the upper bound of $\|\bm{p}_s^*\|$ and the duality of the scaled LP $R_s^*$. Therefore,
          $$R_s^* \ge R^*\left(s, \frac{\log s}{\sqrt{s}} \bm{1} \right) - \frac{\bar{r}\sqrt{s}\log s }{\underline{d}}.$$
          
Finally, we complete the proof with the help of the above three results.
\begin{align*}
    \frac{1}{s}\E\left[\mathbb{I}_ER_s^*\right] &\ge \frac{1}{s}\E\left[\mathbb{I}_ER^*\left(s, \frac{\log s}{\sqrt{s}} \bm{1} \right)\right] - \frac{\bar{r}\sqrt{s}\log s }{\underline{d}} \\
    & \ge \frac{1}{s}\E\left[\mathbb{I}_E\sum_{j=1}^sr_jx_j(\bm{p}^*)\right] - \frac{\bar{r}\sqrt{s}\log s }{\underline{d}} \end{align*}
    where $\mathbb{I}_E$ denotes an indicator function for event $E$. The first line comes from applying part (iii) while the second line comes from the feasibility of $\bm{p}^*$ on event $E.$ Then,
    \begin{align*}
    \frac{1}{s}\E\left[R_s^*\right]  &\ge \frac{1}{s}\E\left[\sum_{j=1}^sr_jx_j(\bm{p}^*)\right] - \frac{\bar{r}\sqrt{s}\log s }{\underline{d}} - \frac{m\bar{r}}{s} \\
    & = \frac{1}{n}\E\left[\sum_{j=1}^n r_jx_j(\bm{p}^*)\right] - \frac{\bar{r}\sqrt{s}\log s }{\underline{d}} - \frac{m\bar{r}}{s} \\
    & \ge \frac{1}{n} R_n^* - \frac{\bar{r}\sqrt{s}\log s }{\underline{d}} - \frac{m\bar{r}}{s} - \frac{m\bar{r}}{n}
\end{align*}
where the first line comes from part (ii) -- the probability bound on event $E$, the second line comes from the symmetry of the random permutation probability space, and the third line comes from part (i). We complete the proof.
\end{proof}

\subsection{Proof of Theorem \ref{theoPermut}}

\label{PFtheoPermut}

\begin{proof}
For the regret bound,
\begin{align*}
    R_n^* - \E\left[R_n\right] & =  R_n^* - \sum_{t=1}^n \E\left[r_t x_t\right]
\end{align*}
where $x_t$'s are specified according to Algorithm \ref{alg:SOA}.
Then
\begin{align}
    R_n^* - \E\left[R_n\right] & =   R_n^* - \sum_{t=1}^n \frac{1}{t} \E\left[R_t^*\right]+
    \sum_{t=1}^n \frac{1}{t} \E\left[R_t^*\right]- \sum_{t=1}^n \E\left[r_t x_t\right] \nonumber \\
    & = \sum_{t=1}^n\left(\frac{1}{n}R_n^*-\frac{1}{t} \E\left[R_t^*\right]\right) + \sum_{t=1}^n\E\left[\frac{1}{n+1-t}\tilde{R}_{n-t+1}^* - r_tx_t\right]\label{twoParts}
\end{align}
where $\tilde{R}_{n-t+1}^*$ is defined as the optimal value of the following LP
\begin{align*}
   \max \ \ & \sum_{j=t}^n r_j x_j   \\
    \text{s.t. }\ & \sum_{j=t}^n a_{ij}x_j \le \frac{(n-t+1)b_i}{n}  \\
    & 0 \le x_j \le 1\ \text{ for } j=1,...,m.\nonumber
\end{align*}
For the first part of (\ref{twoParts}), we can apply Proposition \ref{importantLemma}. Meanwhile, the analyses of the second part takes a similar form as the previous stochastic input model. Specifically,  
$$\E\left[\frac{1}{n+1-t}\tilde{R}_{n-t+1}^* - r_tx_t\right] \le \left(\bm{d}- \Aa_tI(r_t > \Aa_t^\top \p_{t})\right)^\top \bm{p}_t.$$
Similar to the stochastic input model,
   \begin{align*}
        \|\bm{p}_{t+1}\|_2^2 & \le \|\p_{t}\|_2^2 - \frac{2}{\sqrt{n}}\left(\bm{d}- \Aa_tI(r_t > \Aa_t^\top \p_{t})\right)^\top \bm{p}_t + \frac{1}{n} \left\|\bm{d}- \Aa_tI(r_t > \Aa_t^\top \p_{t})\right\|^2_2\\
        & \le \|\p_{t}\|_2^2 - \frac{2}{\sqrt{n}}\left(\bm{d}- \Aa_tI(r_t > \Aa_t^\top \p_{t})\right)^\top \bm{p}_t + \frac{m(\bar{a}+\bar{d})^2}{n}.
    \end{align*} 
    Thus, we have 
    \begin{equation*}
        \begin{split}
            \sum\limits_{t=1}^{n}\E \left[\left(\bm{d}- \Aa_tI(r_t > \Aa_t^\top \p_{t})\right)^\top \bm{p}_t\right]
            &\leq
            \sum\limits_{t=1}^{n}\E\left[\sqrt{n}(\|\bm{p}_t\|^2_2 -\|\bm{p}_{t+1}\|^2_2)\right]+\sum\limits_{t=1}^{n}\frac{m(\bar{a}+\bar{d})^2}{\sqrt{n}}\\
            &\leq
            m(\bar{a}+\bar{d})^2\sqrt{n}.
        \end{split}
    \end{equation*}
    Combine two parts above, finally we have
    \begin{equation*}
        \begin{split}
            R_n^* - \mathbb{E}[R_n(\pi)]  
            &\leq
            m\bar{r}
            +
            \frac{\bar{r}\log n\sqrt{n}}{\underline{d}}
            +
            m\bar{r}\log n
            +
            \frac{\max\{16\bar{a}^2,\exp{\{16\bar{a}^2\}},e\}\bar{r}}{n}+m(\bar{a}+\bar{d})^2\sqrt{n}\\
            & = O((m+\log n)\sqrt{n})\end{split}
    \end{equation*}  
    Thus, we complete the proof for the regret. The proof for the constraint violation part follows exactly the same way as the stochastic input model. 
\end{proof}

\subsection{Proof for Theorem \ref{theoMD}}

\begin{proof}
The proof follows mostly the proof of Theorem \ref{theoiid} and Theorem \ref{theoPermut}. We only highlight the difference here. First, the sample average approximation form of the dual problem takes a slightly different form but it is still convex in $\bm{p}.$
\begin{align}
\tag{multi-D-SAA} \min_{\bm{p}}\ & f_n(\bm{p})  = \bm{d}^\top \bm{p} + \frac{1}{n} \sum_{j=1}^n \left(\max_{s=1,...,k}\left\{r_{js}-\bm{a}_{js}^\top \bm{p}\right\}\right)^+  \label{mSAA} \\
\text{s.t. }\ & \bm{p}\ge \bm{0}. \nonumber
\end{align}

The updating formula for $\bm{p}_t$ is different but we can achieve the same relation between $\bm{p}_t$ and $\bm{p}_{t+1}.$ At time $t$, if $\max\limits_{l=1,...,k}\left\{r_{jl}-\bm{a}_{jl}^\top \bm{p}\right\}>0$, we have
\begin{align*}
    \|\bm{p}_{t+1}\|^2_2 & \leq
            \left\|\bm{p}_{t}+\frac{1}{\sqrt{n}}\left(\bm{A}_tx_t-\bm{d}  \right)\right\|^2_2 \\
            & = \left\|\bm{p}_{t}+\frac{1}{\sqrt{n}}\left(\bm{a}_{tl_t}-\bm{d}  \right)\right\|^2_2 \\
            & = \|\bm{p}_{t}\|_2^2 + \frac{1}{n} \|\bm{a}_{tl_t}x_t - \bm{d}\|_2^2 + \frac{2}{\sqrt{n}} (\bm{a}_{tl_t} x_t - \Dd)^\top \p_{t}\\
            & \le \|\bm{p}_{t}\|_2^2 + \frac{m(\bar{a}+\bar{d})^2}{n} + \frac{2}{\sqrt{n}} \Aa_t^\top \p_{t} x_t - \frac{2}{\sqrt{n}}\bm{d}^\top \bm{p}_{t}\\
            &\leq \|\bm{p}_{t}\|_2^2 + \frac{m(\bar{a}+\bar{d})^2}{n} + \frac{2\bar{r}}{\sqrt{n}}- \frac{2}{\sqrt{n}}\bm{d}^\top \bm{p}_{t},
\end{align*}
while if $\max\limits_{l=1,...,k}\left\{r_{jl}-\bm{a}_{jl}^\top \bm{p}\right\}\leq0$, we have
\begin{align*}
    \|\bm{p}_{t+1}\|^2_2 & \leq
            \left\|\bm{p}_{t}+\frac{1}{\sqrt{n}}\left(\bm{A}_tx_t-\bm{d}  \right)\right\|^2_2 \\
            & = \left\|\bm{p}_{t}-\frac{1}{\sqrt{n}}\bm{d}\right\|^2_2 \\
            & \le \|\bm{p}_{t}\|_2^2 + \frac{m(\bar{a}+\bar{d})^2}{n} - \frac{2}{\sqrt{n}}\bm{d}^\top \bm{p}_{t}.
\end{align*}
Combining those two parts, we obtain
\begin{align*}
    \|\bm{p}_{t+1}\|^2_2 & 
            \leq \|\bm{p}_{t}\|_2^2 + \frac{m(\bar{a}+\bar{d})^2}{n} + \frac{2\bar{r}}{\sqrt{n}}- \frac{2}{\sqrt{n}}\bm{d}^\top \bm{p}_{n},
\end{align*}
which is the same formula as the one-dimensional setting. With the above results, the rest of the proof simply follows the same approach as the one-dimensional case.  

\end{proof}

\subsection{Proofs for Results in Section \ref{permutRC}}

\label{ProofPRC}

\subsubsection{Proof for Lemma \ref{lemma:bound_diff_between_test_and_train}}

\begin{proof}
We have
\begin{equation*}
\begin{split}
    \mathbb{E}\left[\sup\limits_{f\in\mathcal{F}}\left|
    \bar{f}(\Z_t) - \bar{f}(\tilde{\Z}_{t'})
    \right| \Big| \Z_n\right]
    &=
    \mathbb{E}\left[\sup\limits_{f\in\mathcal{F}}\left|
    \mathbb{E}\left[\bar{f}(\Z_{t-s})|\Z_t\right]- \mathbb{E}\left[\bar{f}(\tilde{\Z}_s)|\tilde{\Z}_{t'}\right]
    \right|  \Big| \Z_n \right]  \\
    &\leq
    \mathbb{E}\left[\sup\limits_{f\in\mathcal{F}}\left|
    \bar{f}(\Z_{t-s}) -  \bar{f}(\tilde{\Z}_s)
    \right|\Big| \Z_n \right]
    =
    \mathbb{E} \left[
    {Q}_{t,s}(\mathcal{F},\Z_t)\big|\Z_n\right].
\end{split}
\end{equation*}
For the first line, on the right hand side, the two inner expectations are taken with respect to a uniform random sampling on $\Z_t$ and $\tilde{\Z}_{t'}$ respectively. Specifically, $\Z_{t-s}$ (or $\tilde{\Z}_{s}$) can be viewed as a random sampled subset from $\Z_t$ (or $\tilde{\Z}_{t'}$).  For the second line, the first part comes from Jensen's inequality and the expectation in the second part is taken with respect to the random sampling of $\Z_t$ from $\Z_n.$
\end{proof}

\subsubsection{Proof for Lemma \ref{lemma:PRC}}

\begin{proof}
For the first inequality, we refer to Theorem 3 in \citep{tolstikhin2015permutational}. For the second inequality, it is a direct application of Massart's Lemma (See Lemma 26.8 of \citep{shalev2014understanding}).
\end{proof}

\subsubsection{Proof for Proposition \ref{prop_permut_rdm}}

\begin{proof}
Let
$\mathcal{F}_{\bm{p}} 
=
\left\{ 
f_{\bm{p}}: f_{\bm{p}}(r,\bm{a}) = rI\left( r>\bm{a}^T\bm{p}  \right)\right\}$, $\Z_t = \{(r_1,\Aa_1),...,(r_t,\Aa_t)\},$ and $\tilde{Z}_{n-t} = \{(r_{t+1},\Aa_{t+1}),...,(r_n,\Aa_n)\}.$ Also, we assume $n-t>t$ without loss of generality. 
Then,
\begin{align*}
    \E \left[ \left\vert \frac{1}{n-t}\sum_{j=t+1}^n r_{j}I(r_j>\Aa_j^\top \p)  - 
\frac{1}{t}\sum_{j=1}^t r_{j}I(r_j>\Aa_j^\top \p)  \right\vert\right] & \le \mathbb{E} \left[\sup\limits_{f\in\mathcal{F}_{\p}}\left|
\bar{f}(\Z_t) - \bar{f}(\tilde{\Z}_{n-t})
\right|\Big|\Z_n = \mathcal{D}\right]\\
& \le \mathbb{E}\left[
Q_{t,\floor{t/2}}(\mathcal{F},\Z_{t})\Big| \Z_n \right] \\
& \le \frac{4\bar{r}}{t} + \frac{2\sqrt{2\bar{r}^2m\log n}}{\sqrt{t}}.
\end{align*}
Here the first line comes from taking maximum over $\mathcal{F}_{\p}$, the second line comes from lemma \ref{lemma:bound_diff_between_test_and_train} and the third line comes from lemma \ref{lemma:PRC}. 

Similarly, we can show that the inequality on $a_{ij}$'s holds. Thus the proof is completed.
\end{proof}

\subsubsection{Proof for Theorem \ref{theorem_DLA}}

\begin{proof}
At time $t+1,$
\begin{equation*}
    \begin{split}
        \mathbb{E}\left[r_{t+1}x_{t+1}\right]
        &= \mathbb{E}\left[r_{t+1}I(r_{t+1}>\Aa_{t+1}^\top \p_{t+1})\right]
        \\ &=\frac{1}{n-t}
        \mathbb{E}\left[\sum_{j=t+1}^n r_jI(r_{j}>\Aa_j\p_{t+1})\right]\\
        &\geq
        \frac{1}{t}\mathbb{E}\left[
        \sum_{j=1}^t r_jI(r_{j}>\Aa_j\p_{t+1})
        \right] 
        - \frac{4\bar{r}}{\sqrt{\min\{t,n-t\}}}
        -\frac{2\sqrt{2\bar{r}^2m\log n}}{\sqrt{\min\{t,n-t\}}},
    \end{split}
\end{equation*}
where the expectation is taken with respect to the random permutation. The first line comes from the algorithm design, the second line comes from the symmetry over the last $n-t$ terms, and the last line comes from the application of Proposition \ref{prop_permut_rdm}. To relate the first term in the last line with the offline optimal $R_n^*,$ we utilize Proposition \ref{importantLemma}. Then the optimality gap of Algorithm \ref{alg:DLA} is as follows,
\begin{equation*}
    \begin{split}
        R_n^* - \E\left[\sum_{t=1}^n r_t x_t\right] 
        & =  
        R_n^* - \sum_{t=1}^n \E\left[r_t x_t\right]\\
        & \le 
        R_n^* - \sum\limits_{t=2}^{n} \left(\frac{1}{t}\E\left[
        \sum_{j=1}^t r_jI(r_{j}>\Aa_j\p_t)
        \right]
        -
        \frac{4\bar{r}+2\sqrt{2\bar{r}^2m\log n}}{\sqrt{\min\{t,n-t\}}}
        \right)
        \\
        & \leq
        m\bar{r} 
        + 
        \frac{\bar{r}}{\underline{d}}\sqrt{n}\log n
        +
        m\bar{r}\log n
        +
        \bar{r}\max\{16\bar{a}^2,e^{16\bar{a}^2},e\}
        +
        \left(
        8\bar{r}+
        4\sqrt{2\bar{r}^2m\log n}
        \right)\sqrt{n} = O(\sqrt{mn}\log n)
    \end{split}
\end{equation*}
where the last line comes from an application of Proposition \ref{importantLemma}.
Next, we analyze the constraint; again, from Proposition \ref{prop_permut_rdm}, we know 
\begin{align*}
    \frac{1}{n-t}\E \left[\sum_{j=t+1}^n a_{ij}I(r_j>\Aa_j^\top \p_t)\right] & \le 
\E\left[\frac{1}{t}\sum_{j=1}^t a_{ij}I(r_j>\Aa_j^\top \p_t) \right] + \frac{4\bar{a}}{\sqrt{\min\{t,n-t\}}} +\frac{2\sqrt{2\bar{a}^2m\log n}}{\sqrt{\min\{t,n-t\}}}\\
& \le d_i +\frac{6\sqrt{2\bar{a}^2m\log n}}{\sqrt{\min\{t,n-t\}}}
\end{align*}
where the second line comes from the feasibility of the scaled LP solved at time $t$. Due to the symmetry of the random permutation, 
\begin{align*}
    \E \left[a_{i,t+1}I(r_{t+1}>\Aa_{t+1}^\top \p_{t+1})\right] 
& \le d_i +\frac{6\sqrt{2\bar{a}^2m\log n}}{\sqrt{\min\{t,n-t\}}}.
\end{align*}
Summing up the inequality, we have 
$$\E[\bm{A}x-\bm{b}]\le O(\sqrt{mn}\log n).$$

\end{proof}

\subsection{Proof of Theorem \ref{theorem_PBD}}

\label{proofTheoPBD}
\begin{proof}
At time $t$, the optimal solution to the scaled LP is $\tilde{\bm{x}}^{(t)}=(\tilde{x}_1^{(t)},...,\tilde{x}_t^{(t)})$. We have
\begin{equation*}
    \begin{split}
        \mathbb{E}\left[r_tx_t\right]
        &= 
        \mathbb{E}\left[r_t\tilde{x}_t^{(t)}\right]\\
        &=
        \frac{1}{t}\mathbb{E}\left[\sum_{j=1}^tr_s \tilde{x}^{(t)}_j\right].
    \end{split}
\end{equation*}
Then, for the objective,
\begin{equation*}
    \begin{split}
        R_n^* - \sum_{t=1}^n \E\left[r_t x_t\right]
        & = 
        R_n^* - \sum\limits_{t=1}^{n} \frac{1}{t}\mathbb{E}\left[\sum_{j=1}^tr_s \tilde{x}^{(t)}_j\right] \\
        & \leq
        m\bar{r} 
        + 
        \frac{\bar{r}}{\underline{d}}\log n\sqrt{n}
        +
        m\bar{r}\log n
        +
        \bar{r}\max\{16\bar{a}^2,e^{16\bar{a}^2},e\}.
    \end{split}
\end{equation*}
where the second line comes from an application of Proposition \ref{importantLemma}. Then, we analyze the constraint violation. From the construction of the algorithm, we have that $\mathbb{E}[a_{it}x_t]\leq d_i$. 
Let 
$$A_{it} = a_{it}x_t - d_i$$
and then we know 
$$M_{it} = \sum_{j=n-t+1}^{n} A_{ij}$$ is a supermartingale with $|A_{ij}|\leq \bar{a}+\bar{d}$. Then if we apply Hoeffding's lemma for supermartingale, we have 
\begin{equation*}
    \begin{split}
        \mathbb{P}\left(
            M_{in}\geq
            2(\bar{a}+\bar{d})\sqrt{n}\log n
        \right)
        &\leq
        \exp\left\{-
        \frac{2(\bar{a}+\bar{d})^2n\log^2 n}{n(\bar{a}+\bar{d})^2}
        \right\}\\
        &\leq
        \exp\{-2\log^2 n\}
        \leq
        \frac{1}{n},
    \end{split}
\end{equation*}
when $n>3$. Thus, 
\begin{equation*}
    \begin{split}
        \mathbb{E}
        \left[
        \left(\sum\limits_{t=1}^{n}a_{it}x_t-d_i\right)^+
        \right]
        &=
        \mathbb{E}\left[(M_{in})^+
        \right]\\
        &\leq
        2(\bar{a}+\bar{d})\sqrt{n}\log n
        \mathbb{P}\left(
            M_{in}<
            2(\bar{a}+\bar{d})\sqrt{n}\log n
        \right)\\
        &\ \ \ \ +
        \bar{a}n
        \mathbb{P}\left(
            M_{in}\geq
            2(\bar{a}+\bar{d})\sqrt{n}\log n
        \right)\\
        &\leq
        2(\bar{a}+\bar{d})\sqrt{n}\log n+\bar{a}\\
        \mathbb{E}
        \left[v(\bm{x})
        \right]
        &\leq
        2(\bar{a}+\bar{d})\sqrt{mn}\log n
        +
        \bar{a}\sqrt{m}.
    \end{split}
\end{equation*}

\end{proof}

\subsection{Proof of Theorem \ref{theorem_Feasibel}}

\begin{proof}
For the upper bound of $v$, lemma \ref{iidBound} and
$$
\sum\limits_{t=1}^{n}a_{it}x_t-b_i
\leq
\sqrt{n}p_{n+1,i}
\leq
\sqrt{n}\|\bm{p}_{n+1}\|_{2},
$$
can give the result.

For the fesibility,
first, we prove that if the initial solution is infeasible for the $i$-th constraint, i.e., $\sum\limits_{t=1}^{n} a_{it}x_{t}>nd_i$, then 
\begin{equation}
    \sum\limits_{t\in S_0}a_{it} \geq v\sqrt{n}\log n
    \label{feasibleCond}
\end{equation}
with high probability. According to the definition of $v$, the corresponding $\hat{x}_t$ is feasible.
To start, (\ref{feasibleCond}) holds with probability 1 if $|S_0|=n_+$ which implies $S_0=S_+$. Otherwise, the infeasibility indicates $n_+\geq\frac{\underline{d}n}{\bar{a}}\geq 1$, and by definition $n_+ \le n.$

This implies bounds on the cardinality of $S_0$,
$$|S_0| \ge \frac{2v\sqrt{n}\log n}{\bar{a}}$$
and
$$|S_0| \le \frac{2v\sqrt{n}\log n}{\underline{d}}+1 \le \frac{3v\sqrt{n}\log n}{\underline{d}}.$$
Consequently,
$$\frac{|S_0|}{n_+}\sum\limits_{t\in S_+}a_{it}
        \geq
        \frac{2v\log n}{\underline{d}\sqrt{n}}n\underline{d}
        \geq
        2v\sqrt{n}\log n.$$ Then, since $n\geq\left(\frac{6\bar{a}}{\underline{d}}\right)^4$
\begin{equation*}
    \begin{split}
        \mathbb{P}
        \left(
        \sum\limits_{t\in S_0}a_{it}
        < v\sqrt{n}\log n
        \right)
        &\leq
        \mathbb{P}
        \left(
        \sum\limits_{t\in S_0} a_{it}
        -
        \frac{|S_0|}{n_+}
        \sum\limits_{t\in S_+}
        a_{it}
        \leq -v\sqrt{n}\log n
        \right) \\  
       &\leq
        \exp\left\{
            -\frac{2v^2n^2\log^2n}{4\bar{a}^2|S_0|^2}
        \right\}\\
        &\leq
        \exp\left\{
            -\frac{n\underline{d}^2}{18\bar{a}^2}
        \right\}   
        \leq \frac{1}{n^2}.
    \end{split}
\end{equation*}
Moreover, for any $i$ s.t. $\sum\limits_{t=1}^n a_{it} < \frac{n\underline{d}}{2}$,
since $n>16$ and $\sqrt{n}>\frac{12\bar{a}(\bar{r}+(\bar{a}+\underline{d})^2m)\log n}{\underline{d}^2}$, we have 
\begin{equation*}
    \sum\limits_{t\in S_+\backslash S_0}
    a_{it}
    \leq
    \frac{n\underline{d}}{2}
    +
    \bar{a}|S_0|
    \leq
    n\underline{d}.
\end{equation*}
Then, for any $i$ s.t. $\frac{n\underline{d}}{2}\leq\sum\limits_{t=1}^{n}a_{it}\leq{n\underline{d}}$. Similarly, we can find that
\begin{align*}
        \frac{|S_0|}{n_+} \sum\limits_{t\in S_+}a_{it}
        &\geq
        \frac{v\log n}{\underline{d}\sqrt{n}}n\underline{d}
        \geq
        v\sqrt{n}\log n,\\
        \mathbb{P}
        \left(
        \sum\limits_{t\in S_0}a_{it}
        <
        0
        \right)
        &\leq
        \mathbb{P}
        \left(
        \sum\limits_{t\in S_0} a_{it}
        -
        \frac{|S_0|}{n_+}
        \sum\limits_{t\in S_+}
        a_{it}
        \leq-v\sqrt{n}\log n
        \right) \\      
        &\leq
        \exp\left\{
            -\frac{2v^2n^2\log^2 n}{4\bar{a}^2|S_0|^2}
        \right\}\\
        &\leq
        \exp\left\{
            -\frac{n\underline{d}^2}{18\bar{a}^2}
        \right\}
        \leq
        \frac{1}{n^2}.
\end{align*}

Combining three parts above,  we have that with probability at least $1-\frac{2m}{n^2}\geq 1-\frac{2}{n}$, the modified solution is feasible. Given the fact that the modified solution change the original solution for at most $O(\sqrt{n}\log n)$ entries, the modified solution can achieve $O((m+\log n)\sqrt{n})$ regret.
\end{proof}

\end{document}